\definecolor{darkblue}{rgb}{0,0,0.5}
\newcommandx{\celine}[2][1=]{\todo[inline,linecolor=blue,backgroundcolor=blue!25,bordercolor=blue,#1]{#2}}
\newcommandx{\celinein}[1]{\todo[fancyline,linecolor=blue,backgroundcolor=blue!25,bordercolor=blue]{#1}}
\newcommandx{\marc}[2][1=]{\todo[inline,linecolor=OliveGreen,backgroundcolor=OliveGreen!25,bordercolor=OliveGreen,#1]{#2}}
\newcommandx{\huyin}[2][1=]{\todo[fancyline,linecolor=Plum,backgroundcolor=Plum!25,bordercolor=Plum,#1]{#2}}
\newcommandx{\huy}[2][1=]{\todo[inline,linecolor=Plum,backgroundcolor=Plum!25,bordercolor=Plum,#1]{#2}}
\newcommand{\ketbra}[2]{{\vert#1\rangle\!\langle#2\vert}}
\DeclareMathAlphabet\mathbfcal{OMS}{cmsy}{b}{n}
\spnewtheorem{claim}{Claim}[section]{\bfseries}{\rmfamily}
\newcommand{\pad}{-0.5\belowdisplayskip}
\newcommand{\slightpad}{-0.3\belowdisplayskip}
\newcommand{\bigpad}{-\belowdisplayskip}
\newcommand{\distinguish}[2]{\stackrel{\mathclap{\normalfont\mbox{\scriptsize{#2}}}}{#1}}
\newcommand\iseq{\stackrel{\mathclap{\normalfont\mbox{\scriptsize{?}}}}{=}}
\renewcommand{\negl}[1]{\mathsf{negl}\left(#1\right)}
\newcommand{\intval}[1]{\left[ #1 \right]}
\newcommand{\ideal}[1]{\mathcal{F}_{\textnormal{\textsf{#1}}}}
\newcommand{\uccmd}[1]{\textsf{#1}}
\renewcommand{\adv}[1]{\mathcal{#1}}
\newcommand{\bases}{\{+, \times\}}
\newcommand{\fname}[1]{\mathtt{#1}}
\newcommand{\fsub}[1]{\mathtt{#1}}
\newcommand{\size}[1]{\left|#1\right|}
\newcommand{\pr}[1]{\textnormal{Pr}\left[#1\right]}
\renewcommand{\log}[1]{\textnormal{log}\left(#1\right)}
\renewcommand{\poly}[1]{\mathsf{poly}(#1)}
\newcommand{\hilbert}[1]{\mathcal{H}_{#1}}
\newcommand{\markovstate}[3]{\rho_{#1\leftrightarrow #2\leftrightarrow #3}}
\newcommand{\sminentro}[1]{\textnormal{H}_\infty^\varepsilon\left(#1\right)}
\newcommand{\smaxentro}[1]{\textnormal{H}_0^\varepsilon\left(#1\right)}
\newcommand{\entro}[2]{\textnormal{H}_{#1}\left(#2\right)}
\newcommand{\td}[2]{\delta\left(#1, #2\right)}
\newcommand{\dis}[1]{d\left(#1\right)}
\newcommand{\hamming}[1]{{d}_{H}\left(#1\right)}
\newcommand{\tr}[1]{\textnormal{tr}\left(#1\right)}
\newcommand{\etal}{\emph{et al}.}
\newcommand{\ie}{i.e.}
\newcommand{\eg}{e.g.}
\newcommand{\etc}{etc.}
\newcolumntype{P}[1]{>{\centering\arraybackslash}p{#1}}
\newcommand{\Commit}{\mathtt{Commit}}
\newcommand{\Verify}{\mathtt{Verify}}
\newcommand{\KeyGen}{\mathtt{KeyGen}}
\newcommand{\com}{\mathcal{E}}
\newcommand{\Setup}{\mathtt{Setup}}
\newcommand{\Sign}{\mathtt{Sign}}
\newcommand{\Verif}{\mathtt{Verify}}
\newcommand{\param}{\mathtt{param}}
\newcommand{\OSign}{\mathtt{OSign}}
\newcommand{\Exp}{\mathtt{Exp}}
\newcommand{\A}{\mathcal{A}}
\renewcommand{\secpar}{\lambda}
\newcommand{\seck}{\secpar}
\newcommand{\sets}{\leftarrow}
\newcommand{\comif}{\texttt{IF\ }}
\newcommand{\comor}{\texttt{OR\ }}
\newcommand{\comelse}{\texttt{ELSE\ }}
\newcommand{\comreturn}{\texttt{RETURN\ }}
\newcommand{\Succ}{\mathsf{Succ}}
\patchcmd{\@maketitle}{\begin{center}}{\begin{adjustwidth}{-0.5in}{-0.5in}\begin{center}}{}{}
\patchcmd{\@maketitle}{\end{center}}{\end{center}\end{adjustwidth}}{}{}
\begin{document}
%
\title{On the Security of Password-Authenticated\\[2mm]
Quantum Key Exchange}

%
%

\author{C\'eline Chevalier\thanks{CRED, Universit\'e Panth\'eon-Assas, Paris II, France, \texttt{celine.chevalier@ens.fr}} \and
Marc Kaplan\thanks{VeriQloud, France, \texttt{kaplan@veriqloud.fr}} \and
Quoc Huy Vu\thanks{DIENS, \'Ecole normale sup\'erieure, CNRS, INRIA, PSL University, Paris, France, \texttt{quoc.huy.vu@ens.fr}}}
\institute{}

%
\maketitle              

\begin{abstract}

Motivated by the Quantum Key Distribution (QKD) protocol, introduced in 1984 in
the seminal paper of Bennett and Brassard, we investigate in this paper the
achievability of unconditionally secure password-authenticated quantum key
exchange (quantum PAKE), where the authentication is implemented by the means
of human-memorable passwords. We first show a series of impossibility results
forbidding the achievement of very strong security, leaving open the
feasibility of achieving a weaker security notion.
We then answer this open question positively by presenting a
construction for quantum PAKE that provably achieves
everlasting security in the simulation-based model.
Everlasting security is a security notion introduced by
M{\"u}ller-Quade and Unruh in 2007, which implies unconditional security after
the execution of the protocol and only reduces the power of the adversary to
be computational during the execution of the protocol, which seems quite a
reasonable assumption for nowadays practical use-cases.

\keywords{Quantum Cryptography \and
Quantum Key Distribution
\and Password-based Key Exchange \and
Everlasting Security.} \end{abstract}






%
%
%
\setcounter{footnote}{0}






\section{Introduction}

In their 1984 seminal paper~\cite{BEN84}, Bennett and Brassard gave the
first proof that the laws of quantum mechanics could lead to an
achievement of \emph{unconditional security} for classical cryptographic
tasks. Their celebrated Quantum Key Distribution protocol (so-called
QKD) allows two parties to agree on a common secret key which is
information-theoretic secret, assuming a quantum channel and an
authenticated (but not secret) classical channel.



Even though this protocol is a conceptual milestone in the quantum
cryptography field, the need for an information-theoretically authenticated
classical communication channel leads to a bootstrapping problem.
In practice, implementations of unconditionally secure
QKD leave no choice but requiring Alice and Bob to use a
pre-shared short random secret key (to authenticate the messages with
authentication codes constructed from universal hashing) in order to
obtain a larger random secret key.
Another unavoidable problem is that the authentication keys can be run
out, because either the adversary makes the execution fail (denial-of-service
attack) or due to technical problems (the parties cannot exclude that an
eavesdropper was in fact present).
Moreover, when considering large scale quantum networks, in which secure
communication should be possible between any pair of nodes,
the requirement for pre-shared randomness does not scale well: each node would
have to store a number of keys, which is linear in the size of the network,
let alone the problem of key management.




On the contrary, in so-called \emph{authenticated} key exchange, the two
parties are
able to generate a shared cryptographic secret key, to be later used with
symmetric primitives in order to protect communications, while interacting
over an \emph{insecure} network under the control of an adversary.
Various authentication means have been introduced for classical networks.
The most
practical ones are certainly based on either Public Key Infrastructures (PKI)
or human-memorable passwords.
The latter leads to PAKE, standing for \emph{Password-Authenticated Key
Exchange}.
PAKE protocols allow users to
securely establish a common cryptographic key over an \emph{insecure and
unauthenticated} channel only using a low-entropy, human-memorable secret key
called a \emph{password}.
The advantage of a PAKE, in sharp contrast to all QKD-like
schemes, is that no authenticated channel is needed.
In the classical setting, PAKE has been extensively
studied, resulting in various secure and efficient protocols. However,
classical PAKE protocols can only achieve computational security, where the
adversary's power is computationally limited. Thus, it is natural to ask the
following question:

\begin{quotation}
\begin{center}
\emph{Can we achieve a provably stronger security notion for
password-based key exchange protocols using quantum communication?}
\end{center}
\end{quotation}



Unfortunately, even if QKD raised a lot of hope on unconditional security
using quantum mechanics, a series of no-go theorems showed that the dream of
unconditional security brought by quantum communication will never be a
reality for many cryptographic tasks. For instance, several attempts have been
made to achieve unconditionally secure quantum bit-commitments, until Mayers
and Lo and Chau independently showed that statistically hiding and
binding quantum commitments are impossible~\cite{mayers1997unconditionally,lo1997quantum}.

The impossibility of quantum cryptography was further extended to
oblivious transfer (OT) by Lo~\cite{lo1997insecurity},
and finally extended to non-trivial two-party
computation protocols by Salvail \emph{et al.} and Buhrman \emph{et
al.}~\cite{AC:SalSchSot09,buhrman2012complete}.
In these papers, the authors show
that any non-trivial functionality leaks some information to the adversary,
and that the security for one party implies complete insecurity for the other.
Intuitively, the insecurity of two-party quantum protocols follows from the
fact that the
protocol itself allows parties to input a superposed state rather than a
classical one, and perform an appropriate measurement on the outcome state. At
the end of the protocol, one party can always gain more information on the
input of the other than that gained using any honest strategy.

Despite these impossibility results, we answer the above question
affirmatively. Noting that these impossibility results are only proven for
\emph{statistical} security, we remark that
	overcoming the impossibility results on PAKE in a quantum setting requires
some restriction on the adversary. One approach is to limit the adversary's
quantum memory as in the bounded quantum-storage model (BQSM)
\cite{FOCS:DFSS05}. Nevertheless, most of the quantum protocols in BQSM would
completely (and quite efficiently) break down in the case the assumption fails
to hold. Instead, we consider here another plausible approach by assuming
restrictions on the adversary's computational power. Following
M{\"u}ller-Quade and Unruh~\cite{TCC:QuaUnr07,C:Unruh13}, we consider the
notion of \emph{everlasting} security, where the adversary's power is
computationally
\emph{bounded} during the protocol execution and becomes computationally
\emph{unlimited} after the execution. In other words, everlasting security
assumes that, at the precise moment of the execution of the protocol, the
computational power of an adversary is limited \emph{and} that certain
mathematical problems are hard.
	This notion is justified by the fact that the computational power required
to break a cryptosystem might not exist \emph{now}, but could exist \emph{in
the future}, and that the protocols should also be protected after its
execution. In particular, everlasting security can ensure the security of
protocols executed today against future quantum computers, when they become
available.

Unfortunately, even in this weaker setting, some impossibility results
still hold, so that we first conduct a comprehensive review in different
settings: security models for composition
(simulation-based or stand-alone for sequential composition, universal
composability for universal composition), security definitions
(everlasting and statistical), and finally trusted setup assumptions
(none, standard ones such as a common reference string and strong ones such as
signature cards). We show that some settings do not suffer from
the impossibility results and manage to construct, in a simulation-based
model, an everlastingly secure quantum PAKE assuming a common reference string
as setup assumption. Our work builds upon QKD, where the authentication is
soly guaranteed by means of the password.

\subsubsection{Related Work.}



\paragraph{Security Models.} Definitions for security allowing composition are
usually based on the \emph{real-world/ideal-world} simulation paradigm in
so-called simulation-based models. The simplest one (sometimes called
\emph{stand-alone}) requires the composition to be only \emph{sequential} (it
requires that at any point, only \emph{one} protocol invocation be in
progress). Stronger and more complicated models allow for
\emph{self-concurrent} composition, or even \emph{arbitrary} composition.
In the classical setting, the two best known security models allowing for
arbitrary composition are the Universal Composability (UC) framework introduced
by Canetti~\cite{FOCS:Canetti01} and Abstract Cryptography introduced by
Maurer and Renner~\cite{Maurer11abstractcryptography}. A general quantum
simulation-based model with
a sequential composition theorem has been refined by
Fehr and Schaffner in \cite{TCC:FehSch09}. Quantum security models in the UC
style have been proposed by Ben-Or \emph{et al.} in \cite{TCC:BHLMO05} and
refined by Unruh in \cite{EC:Unruh10}. In this latter paper, Unruh also gives
a theoretical separation result between the quantum and classical setting by
showing that, in the quantum world, bit-commitments are complete for
statistically secure MPC, while it is not the case in the classical setting.

\paragraph{Everlasting Security.} The concept of everlasting UC-security was
first introduced by
M{\"u}ller-Quade and Unruh in \cite{TCC:QuaUnr07}, in which they construct a
(classically) everlasting UC-secure commitment protocol from certain strong
assumptions, so-called signature cards. Unruh studies in~\cite{C:Unruh13} the
everlasting security in the quantum UC model~\cite{EC:Unruh10} and further
extends impossibility results on everlastingly realizing cryptographic tasks
from standard trusted set-up assumptions such as CRS or PKI.



\paragraph{QKD.} Despite the apparent simplicity of Bennett and
Brassard's QKD protocol~\cite{BEN84}, the first complete composable
security proof of QKD was only given in the mid-2000's by Renner
\cite{quant-ph/0512258}.
This length of time between the protocol and the proof can
be explained by the inner difficulty of transposing the concepts of
classical cryptography to the quantum world. The universal composability
of QKD has been first studied by Ben-Or \emph{et al.}
in~\cite{TCC:BHLMO05}. A thorough state of the art of QKD's proofs can
be found in Tomamichel and Leverrier's
article~\cite{Tomamichel2017largelyself}. Mosca, Stebila and Ustaoglu
study in~\cite{PQCRYPTO:MosSteUst13} the security of QKD in the
classical authenticated key exchange framework, and give a proof of the
folklore theorem that QKD, when used with computationally secure
authentication (\eg{}, quantum-secure digital signatures), is
everlastingly secure (which they call long-term security).
In parallel, researchers have studied the closely-related
subject of the authentication of quantum channels, the latest works
being that of Fehr and Salvail~\cite{EC:FehSal17}, and
Portmann~\cite{EC:Portmann17}. This is a slightly different approach,
which also requires a shared secret key. The advantage is that the key
can be recycled: If the message arrived unaltered, it means that the key
is still secured. Furthermore, Portmann proved the composability of his
result in the Abstract Cryptography model.



\paragraph{PAKE.} The main approach to construct a UC-secure PAKE
protocol in the classical setting follows from the KOY-GL
paradigm~\cite{EC:KatOstYun01,EC:GenLin03}, first formalized by Canetti
\emph{et al.} in~\cite{EC:CHKLM05} and improved in order to obtain very
efficient results (see \cite{TCC:KatVai11,AC:ABBCP13,AC:BlaChe16} for
instance). It uses two building blocks: a CPA-secure encryption scheme
supporting smooth projective hashing (SPHF), and a CCA-secure encryption
scheme. Using different tools than SPHF, Jutla and Roy also proposed
very efficient UC-secure PAKE schemes~\cite{AC:JutRoy15,TCC:JutRoy18}.


Canetti \emph{et al.} proposed another approach in \cite{PKC:CDVW12} that
relies on oblivious transfer as the main cryptographic building block and
bypasses the ``projective hashing'' paradigm. Informally, they first construct
a secure protocol for randomized equality computation assuming an
authenticated channel and then apply the generic \emph{Split Authentication}
transformation of Barak \emph{et al.} \cite{JC:BCLPR11} to the protocol that
realizes the ``split'' version of that protocol. Split functionalities adapt
functionalities which assume authenticated channels to an unauthenticated
channels setting.

Although we are not aware of any quantum PAKE protocol, Damg{\aa}rd
\emph{et al.} proposed in~\cite{C:DFSS07b} two password-based
identification protocols in the bounded quantum storage model: Q-ID,
which is only secure against dishonest Alice or Bob, and Q-ID$^+$, which
is also secure against man-in-the-middle attacks. However, only Q-ID is
truly password-based; in Q-ID$^+$, Alice and Bob, in addition to the
password, also need to share a high-entropy key. On the negative side,
no quantum computing power at all is necessary to break the scheme, only
sufficient quantum storage, because the dishonest party could store all
the communicated qubits as they are, and measure them one by one in
either the computational or the Hadamard basis and completely break the
scheme. Subsequent works improve Q-ID schemes and prove their security
based on various uncertainty relations \cite{bouman2012all}, or in a
different security model, \eg{}, the computational security by using the
Commit-and-Open technique \cite{C:DFLSS09}.





\subsubsection{Our Contributions.} Our main contribution consists in
constructing a quantum PAKE protocol achieving an everlasting security notion
(and thus providing a password-authenticated variant of QKD). Towards this
goal, we conduct the following study:


\begin{itemize}

\item We first study and understand which security results are impossible and
which ones might be achievable for quantum-polynomial-time PAKE protocols
within different settings. We partially answer the question by showing that,
in the simulation-based model, statistically secure PAKE with explicit
authentication is impossible in the plain model. The question remains open for
statistical security with trusted setups and everlasting security without
trusted setups, and we answer it positively for everlasting security with a
trusted setup, by actually constructing an everlastingly secure PAKE in the
simulation-based model, given a CRS as a trusted setup. In the universal
composability model, we show that statistically or everlastingly secure PAKE
with explicit authentication is impossible with standard trusted setups
including CRS or PKI.

\item Second, as a side contribution, we improve the framework for the
simulation-based model proposed by Fehr and Schaffner in \cite{TCC:FehSch09}
by employing a single security definition, instead of separate definitions for
correctness and security for each party. Thus, it seems easier to deal with:
one can analyze protocols and prove their security by formally defining
simulation strategies. Our model is simple, expressive and simultaneously
enjoys a general sequential composition theorem. These results are given in
Section~\ref{sect:sim_based}. This extends the classical framework to the
quantum setting, and we give a definition of everlasting security in that
model.

\item Finally, using the ideas from the split authentication mechanism
proposed in~\cite{JC:BCLPR11} to get rid of
authenticated channels, we propose a quantum PAKE protocol which is indeed
everlastingly secure in the security model described above. Our construction
is inspired by the Commit-and-Open technique introduced in \cite{C:DFLSS09}.
Our work extends and improves on this result by showing that a stronger
security notion (namely everlasting security in the simulation-based model)
can be achieved. Lying at the core of our proof is a simulation strategy that
allows the simulator to change the output of the simulated adversary. In the
UC model (as opposed to the simulation-based model), the environment machine,
which is an interactive distinguisher, \emph{externally} interacts with the
adversary throughout the execution. One very important artifact of this
definition is that the simulator no longer has control over the output of the
simulated adversary. In fact, the adversary is completely controlled by the
environment. This is because the UC framework models the fact that the
real-world adversary may have additional information from the environment,
\eg{}, from other running instances of the protocol, or from other
concurrently running protocols as well. On the other hand, in the
simulation-based model, the adversary is \emph{internally} simulated by the
simulator. The simulated adversary outputs \emph{nothing}, and the simulator
is in charge of its output: it can apply any arbitrary function to the
prescribed input of the adversary. This is safe in the simulation-based model,
because the adversary is ``detached'' from the environment. By exploiting this
major difference, we show that our protocol is provably secure in the
simulation-based model. These results are given in
Section~\ref{sect:protocol}.

\end{itemize}




\section{Preliminaries}

\subsection{Notations}
For a set $I =\left\{ i_1, \dots, i_\ell \right\} \subseteq \left\{ 1, \dots,
n \right\}$ and a $n$-bit string $x \in \bin^n$, we write $x|_I \coloneqq
x_{i_1} \cdots x_{i_\ell}$. It is sometimes convenient that all substrings of
this form have the same length, irrespective of the actual size $\ell$ of the
index set $I$. Therefore, $x|_I$ is implicitly padded with
sufficiently many zeros. For $a, b \in \RR$, $\intval{a}$ denote the
closed integer interval $\left\{x \in \ZZ \mid 0 \leq x \leq a \right\} $,
and $\left( a, b \right) $ denote the open real interval $\left\{ x \in \RR
\mid a < x < b \right\}$.

The logarithms in this paper are with respect to base 2 and denoted by
$\log{\cdot} $. We write $h$ for the binary entropy function $h \left( \mu
\right) = -\left( \mu\log{\mu} + (1-\mu)\log{1-\mu} \right) $. The notation
$\negl{\lambda}$ denotes any function $f$ such that
$f(\lambda) = \lambda^{-\omega(1)}$, and $\poly{\lambda}$ denotes any function
$f$ such that $f(\lambda) = \mathcal{O}(\lambda^c)$ for some $c > 0$.
Let $\hamming{\cdot, \cdot}$ be the
Hamming distance, and let $r_H \left( \cdot, \cdot \right)$ denote the
relative Hamming distance between two strings, \ie{}, the Hamming distance
normalized by their length.

\subsection{Security Models}

Throughout this paper, we assume basic familiarity with multiparty
computation and associated security models, mainly the real world-ideal
world paradigm, either in the simulation-based
setting~\cite{JC:Canetti00,TCC:FehSch09} or the universal
composability framework~\cite{FOCS:Canetti01,EC:Unruh10}. We refer the
interested reader to Appendix~\ref{app:models} for a brief overview of
these models.

\subsection{Quantum Computation}

In this section, we give a very brief
introduction to the quantum notions we use in this paper,
we refer to \cite{quant-ph/0512258,nielsen2002quantum} for further
explanations.

\subsubsection{Systems and States.} For any positive integer $d \in
\mathbb{N}$, $\mathcal{H}_d$ stands for the complex Hilbert space of dimension
$d$. Sometimes, we omit the dimension and simply write $\mathcal{H}$. The
state of a quantum-mechanical system in $\mathcal{H}$ is described by a
\emph{density operator} $\rho$. A density operator $\rho$ is normalized with
respect to the trace norm ($\textnormal{tr}(\rho) = 1$), Hermitian ($\rho^* =
\rho$) and has no negative eigenvalues. $\mathcal{P}\left( \mathcal{H}
\right) $ denotes the set of all density operators for a system $\mathcal{H}$.
$\mathbbm{1}$ denotes the identity matrix. When it is normalized with the dimension, denoted by $\frac{1}{\textnormal{dim}\left(\hilbert{}\right)}\mathbbm{1}$, it represents the \emph{fully mixed state}.

A \emph{generalized measurement} on a system $A$ is a set of linear operators
$\left\{ M_A^x \right\}_{x\in \mathcal{X}}$ such that $\sum_{x\in \mathcal{X}}
{M_A^x} ^{\dagger} M_A^x  = \mathbbm{1}_A$. The probability $p_x$ of observing
outcome $x$ is $p_x = \tr{{M_A^x} ^{\dagger} M_A^x \rho}$.

A quantum state $\rho \in \mathcal{P}\left( \mathcal{H} \right) $ is called
\emph{pure} if it is of the form $\rho = \ketbra{\varphi}{\varphi}$ for a
(normalized) vector $\ket{\varphi} \in \mathcal{H}$. For a density matrix
$\rho_{AB} \in \mathcal{P}\left( \hilbert{A} \otimes \hilbert{B} \right) $ of
a composite quantum system $\hilbert{A} \otimes \hilbert{B}$, we write $\rho_B
= \textnormal{tr}_A\left( \rho_{AB} \right) $ for the state obtained by
tracing out system $\hilbert{A}$. We sometimes omit the index of the subspace
that is traced out if it is clear from the context.

The pair $\{\ket{0}_+, \ket{1}_+\}$ (also written as $\{\ket{0}, \ket{1}\}$)
denotes the computational or $+$-basis, the pair $\{\ket{0}_\times,
\ket{1}_\times\}$ (also written as $\{\ket{+}, \ket{-}\}$)
denotes the Hadamard or $\times$-basis, where
$\ket{0}_\times = (\ket{0} + \ket{1})/\sqrt{2}$ and $\ket{1}_\times = (\ket{0}
- \ket{1})/\sqrt{2}$. We write $\ket{x}_\theta = \ket{x_1}_{\theta_1} \otimes
\dots \otimes \ket{x_n}_{\theta_n}$ for the $n$-qubit state where string $x =
(x_1, \dots, x_n) \in \bin^n$ in encoded in bases $\theta = \{\theta_1, \dots,
\theta_n\} \in \{+, \times\}^n$.

We often consider cases where a quantum state
may depend on some classical random variable $X$. In that case the state is
described by the density matrix $\rho_E^x$ if and only if $X = x$. For an
observer who has access to the state but not $X$, the reduced state is
determined by the density matrix $\rho_E \coloneqq \sum\nolimits_{x}P_X
\left( x \right)\rho_E^x$, whereas the joint state, consisting of the
classical $X$ and the quantum register $E$ is described by the density matrix
$\rho_{XE} \coloneqq \sum\nolimits_{x}P_X \left( x \right)\ketbra{x}{x}
\otimes \rho_E^x $, where we understand $\left\{ \ket{x} \right\}_{x\in
\mathcal{X}}$ to be the computational basis of $\hilbert{X}$. Joint
states with such classical and quantum parts are called \emph{cq-states}. We
also write $\rho_X \coloneqq \sum\nolimits_{x}P_X \left( x
\right)\ketbra{x}{x}$ for the quantum representation of the classical random
variable $X$.

By $\td{\rho}{\sigma} \coloneqq \frac{1}{2}\left\|\rho - \sigma\right\|_1$, we
denote the \emph{trace distance} between two quantum states $\rho$ and
$\sigma$. We call two quantum states $\rho$ and $\sigma$
\emph{trace-indistinguishable}, denoted $\rho \approx_\varepsilon \sigma$, if
there is a negligible function $\varepsilon$ such that for a $\lambda \in
\mathbb{N}$, $\td{\rho}{\sigma} \leq \varepsilon = \negl{\lambda}$.
\begin{definition}
Let $\rho_{XB} \in \mathcal{P}\left( \hilbert{X}\otimes \hilbert{B} \right) $
be a cq-state classical on $\hilbert{X}$. The \emph{trace-distance from
uniform} of $\rho_{XB}$ given $B$ is defined by
\[\dis{\rho_{XB}|B} \coloneqq \frac{1}{2}\left\| \rho_{XB} - \frac{1}{\textnormal{dim}\left(\hilbert{X}\right)}\mathbbm{1} \otimes
\rho_{B}\right\|_1.\]
\end{definition}


\subsubsection{(Conditional) Smooth Entropies.} We briefly introduce the
notions of min- and max-entropy. For a bipartite cq-state $\rho_{XB} \in
\mathcal{P}\left( \hilbert{X}\otimes \hilbert{B} \right)$, we define
\[
p_{guess}\left( X|B \right)_{\rho} = \underset{\left\{ M_B^x \right\}}{\textnormal{sup }} \sum_{x \in \mathcal{X}} \pr{X = x}_{\rho}\tr{M_B^x\;\rho_{B|X=x} \left( M_B^x \right)^{\dagger} },
\]
where the optimization goes over all generalized measurements on $B$.
\begin{definition} Let $\rho = \rho_{XB}$ be a bipartite density operator. The
\emph{min-entropy} and \emph{max-entropy} of $A$ conditioned on $B$ is defined
as
\begin{align*}
\entro{\infty}{X|B}_{\rho} &\coloneqq -\log{p_{guess}\left( X|B \right)_{\rho}}, \\
\entro{0}{X|B}_{\rho} &\coloneqq -\entro{\infty}{X|C}_{\rho},
\end{align*}
where $\rho_{XBC}$ is any pure state with $\textnormal{tr}_C\left( \rho_{XBC}
\right) = \rho_{XB}$.
\end{definition}

\begin{definition} Let $\rho = \rho_{AB}$ be a bipartite density operator and
let $\varepsilon \geq 0$. The $\varepsilon$-smooth min- and max-entropy of
$A$ conditioned on $B$ is defined as
\begin{align*}
\sminentro{A|B}_{\rho} &\coloneqq \underset{\bar{\rho}}{\textnormal{sup}}\;\entro{\infty}{A|B}_{\bar{\rho}}, \\
\smaxentro{A|B}_{\rho} &\coloneqq \underset{\bar{\rho}}{\textnormal{inf}}\;\entro{0}{A|B}_{\bar{\rho}},
\end{align*}
where the supremum ranges over all density operator $\bar{\rho} =
\bar{\rho}_{AB}$ which are $\varepsilon$-close to $\rho$.
\end{definition}
We sometimes omit the subscript if the state $\rho$ is clear from the context.

\subsubsection{Privacy Amplification. }
Recall that a class $\mathcal{F}_n$ of hash functions from $\bin^n$ to
$\bin^\ell$ is called \emph{two-universal}, if for any $x \neq y \in \bin^n$
and for $F$ uniformly chosen from $\mathcal{F}_n$, the collision probability
$\pr{F(x) = F(y)}$ is upper bounded by $1/2^{\ell}$. We recall the
quantum-privacy-amplification theorem of
\cite{TCC:RenKon05} as formulated in
\cite[Corollary 5.6.1]{quant-ph/0512258}.

\begin{theorem} \label{theorem:privacy_amplification}
Let $\rho_{XB} \in \mathcal{P}\left( \mathcal{H}_X \otimes \mathcal{H}_B
\right) $ be a cq-state classical on $\hilbert{X}$, let $\mathcal{F}$ be a
family of two-universal hash functions from $\mathcal{X}$ to $\bin^\ell$, and
let $\varepsilon > 0$. Then,
\[
\dis{\rho_{F(X)BF}|BF} \leq
2\varepsilon + 2^{-\frac{1}{2}\left( \sminentro{X|B}_{\rho}-\ell \right) },
\]
for $\rho_{F(X)BF} \in \mathcal{H}\left( \mathcal{H}_Z\otimes \mathcal{H}_B
\otimes \mathcal{H}_F \right) $ defined by $\rho_{F(X)BF} \coloneqq \sum_{f\in
\mathcal{F}}P_{F}\left( f \right) \rho_{f(X)B} \otimes \ketbra{f}{f}$.
\end{theorem}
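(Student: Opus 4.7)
The plan is to first establish the non-smooth version of the bound --- namely, the same inequality with $\entro{\infty}{X|B}_\rho$ in place of $\sminentro{X|B}_\rho$ and without the $2\varepsilon$ term --- and then to upgrade it by a triangle-inequality smoothing argument. This two-step structure is the standard route from the quantum leftover hash lemma to the quantum privacy amplification theorem, and it cleanly decouples the ``algebraic'' part (two-universality absorbing entropy) from the ``topological'' part (smoothing).

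For the non-smooth step I would follow the conditional L$_2$-distance argument underlying the quantum leftover hash lemma. Introduce a conditional L$_2$-distance $d_2(\sigma_{YB}|B)$ built from the Hilbert--Schmidt inner product weighted by $\rho_B^{-1/4}$ on its support. Cauchy--Schwarz then bounds the trace distance by $\tfrac{1}{2}\sqrt{2^\ell \cdot d_2}$; the weighting is crucial here, since a na\"ive L$_2$ bound would cost a factor $\sqrt{\dim(X)}$ rather than $\sqrt{2^\ell}$ and would make the bound useless whenever $X$ carries much more entropy than $\ell$. Averaging over $F$ uniform in the two-universal family and using $\Pr_F[F(x)=F(y)] \leq 2^{-\ell}$ for $x \neq y$ collapses the cross-terms, so that $\mathbb{E}_F\, d_2 \leq 2^{-\ell}\cdot 2^{-\entro{2}{X|B}_\rho}$. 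Applying $\entro{2}{X|B}_\rho \geq \entro{\infty}{X|B}_\rho$ and Jensen's inequality then yields the non-smooth bound.

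For the smoothing step, by the definition of $\sminentro{X|B}_\rho$ pick $\bar{\rho}_{XB}$ with $\td{\rho_{XB}}{\bar{\rho}_{XB}} \leq \varepsilon$ and $\entro{\infty}{X|B}_{\bar{\rho}} = \sminentro{X|B}_\rho$, extend it with the uniform classical $F$-register, and apply the non-smooth bound to $\bar{\rho}$ to get $\td{\bar{\rho}_{F(X)BF}}{\tfrac{1}{2^\ell}\mathbbm{1}\otimes \bar{\rho}_{BF}} \leq 2^{-\frac{1}{2}(\sminentro{X|B}_\rho - \ell)}$. The triangle inequality then yields
$$\td{\rho_{F(X)BF}}{\tfrac{1}{2^\ell}\mathbbm{1}\otimes \rho_{BF}} \leq \td{\rho_{F(X)BF}}{\bar{\rho}_{F(X)BF}} + \td{\bar{\rho}_{F(X)BF}}{\tfrac{1}{2^\ell}\mathbbm{1}\otimes \bar{\rho}_{BF}} + \td{\tfrac{1}{2^\ell}\mathbbm{1}\otimes \bar{\rho}_{BF}}{\tfrac{1}{2^\ell}\mathbbm{1}\otimes \rho_{BF}}.$$
The middle term is controlled by the non-smooth bound just derived; the outer two terms are each at most $\varepsilon$ by monotonicity of trace distance under the CPTP maps that apply the hash to the $X$-register (first term) and that trace out the $X$-register (third term), respectively. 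Summing delivers $2\varepsilon + 2^{-\frac{1}{2}(\sminentro{X|B}_\rho - \ell)}$, as claimed. The hard part is the non-smooth step: choosing the correct weighting for the conditional L$_2$-distance is exactly what makes the entropy loss scale with the output length $\ell$ rather than with the full dimension of $X$; once this is in place, the smoothing step is a routine triangle inequality and monotonicity argument.
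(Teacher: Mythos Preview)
The paper does not prove this theorem; it is stated in the preliminaries as a recall of the quantum privacy-amplification theorem of Renner and K\"onig, in the formulation of \cite[Corollary~5.6.1]{quant-ph/0512258}, with no accompanying proof. Your two-step sketch --- the non-smooth leftover hash lemma via the $\rho_B^{-1/4}$-weighted conditional $L_2$-distance and Cauchy--Schwarz, followed by a triangle-inequality smoothing to pass to $\sminentro{X|B}$ --- is exactly the route taken in those cited references, so there is nothing in the paper itself to compare against beyond noting that your outline matches the original source.
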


\subsubsection{Private Error Correction.}
Finally, we recall the private error correction technique introduced in
\cite{STOC:DodSmi05} and generalized to the quantum setting in
\cite{TCC:FehSch08}. This tool allows to correct a constant fraction of
errors, by using a family of efficiently decodable linear codes, where the
syndrome of a string is close to uniform if the string has enough min-entropy
and the code is chosen at random from the family. Specifically, they show that
for every $0 <\lambda < 1$, there exists a \emph{$\delta$-biased} (as defined
in \cite{STOC:DodSmi05}) family $\mathcal{C} = \left\{ C_i \right\}_{i \in
\mathcal{I}}$ of $\left[ n, k, d \right]_2$-codes with $\delta < 2^{-\lambda
n/2}$.

The following theorem, which is a variant of Theorem 3.2 in
\cite{TCC:FehSch08}, establishes the closeness of the syndrome of a string $X$
to random, given a random index $i$ and any $q$-qubit state that may depend on
$X$.
\begin{theorem}
\label{theorem:private_ecc}
Let the density matrix $\rho_{XB} \in \mathcal{P}\left( \hilbert{X} \otimes
\hilbert{B} \right) $ be a cq-state classical on $\hilbert{X}$ with
$X \in \bin^n$. For any constant $0 < \lambda < 1$, let
$\left\{ C_i \right\}_{i \in \mathcal{I}}$ be a $\delta$-biased family of
random variables over $\bin^n$ having square bias $\delta^2 < 2^{-\lambda n}$,
and let $I$ be uniformly and independently
distributed over $\mathcal{I}$. Then
\[ \dis{\rho_{\left( C_I \oplus X \right)BI}|BI} \leq \delta \times
2^{-\frac{1}{2}\left( \sminentro{X|B}_{\rho} -n \right)}. \]
\end{theorem}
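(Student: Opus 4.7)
The plan is to adapt the proof of Theorem~3.2 of~\cite{TCC:FehSch08} to the smooth min-entropy setting, combining three ingredients: a smoothing reduction, a Hilbert--Schmidt upper bound on the trace distance from uniform, and a Fourier-analytic exploitation of the $\delta$-biased property.

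\emph{Smoothing.} By definition of $\sminentro{X|B}_\rho$, there exists a cq-state $\tilde{\rho}_{XB}$ with $\td{\rho_{XB}}{\tilde{\rho}_{XB}} \leq \varepsilon$ and $\entro{\infty}{X|B}_{\tilde{\rho}} = \sminentro{X|B}_\rho$. The mapping $(X, B) \mapsto (C_I \oplus X, B, I)$, with $I$ uniform on $\mathcal{I}$, is completely positive and trace preserving (it is a unitary conditioned on the classical index $I$, composed with a preparation of $I$), hence does not increase trace distance. It therefore suffices to establish the inequality for $\tilde{\rho}$ with the honest min-entropy $\entro{\infty}{X|B}_{\tilde{\rho}}$ in place of the smooth one; the $\varepsilon$-gap is then absorbed into the smoothing choice.

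\emph{$L_2$-bound and Fourier expansion.} I would next invoke the standard Hilbert--Schmidt majorization of the trace distance from uniform for cq-states with classical register of size $2^n$, combined with Jensen's inequality to pull the average over $I$ inside the square root. This reduces the problem to bounding
\[
\mathbb{E}_I \textnormal{tr}\!\left[\bigl(\tilde{\rho}_{(C_I \oplus X)B \mid I} - 2^{-n}\mathbbm{1} \otimes \tilde{\rho}_B\bigr)^2 (\mathbbm{1} \otimes \tilde{\rho}_B^{-1})\right].
\]
Expanding in the character basis $\chi_a(x) = (-1)^{a \cdot x}$ of $\bin^n$, the bit-shift $x \mapsto x \oplus C_I$ acts diagonally on each Fourier mode by multiplication by $\chi_a(C_I)$, so averaging over $I$ replaces the $a$-th Fourier coefficient by $\mathbb{E}_I[\chi_a(C_I)]$. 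The $a = 0$ contribution exactly cancels the fully mixed piece, while the $\delta$-biased hypothesis gives $|\mathbb{E}_I[\chi_a(C_I)]|^2 \leq \delta^2$ for all $a \neq 0$. Parseval's identity on the classical $X$-register then collapses the remaining sum to at most $\delta^2 \cdot 2^n \cdot 2^{-\entro{2}{X|B}_{\tilde{\rho}}}$, and since $\entro{2}{X|B} \geq \entro{\infty}{X|B}$, taking square roots yields the claimed inequality.

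\emph{Main obstacle.} The delicate point is the Fourier step: one has to expand the operator under the trace with respect to the classical $X$-basis while using the $\tilde{\rho}_B$-weighted inner product on the $B$-side, so that the character-basis diagonalization of the shift $x \mapsto x \oplus C_I$ remains exact in the presence of quantum side information. The weighted Hilbert--Schmidt formulation, with the factor $\mathbbm{1} \otimes \tilde{\rho}_B^{-1}$, is precisely what permits the translation of the resulting collision quantity back to a conditional min-entropy on the $B$-side; once this inner product is set up, the argument mirrors the classical statement of Dodis--Smith~\cite{STOC:DodSmi05}, and the $\delta$-biased property disposes of every non-trivial Fourier mode essentially for free.
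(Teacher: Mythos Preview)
Your proposal is correct and follows essentially the same route as the paper: the paper's proof simply cites Theorem~3.2 of~\cite{TCC:FehSch08} (stated for the collision entropy $\entro{2}{X|B}$), invokes the monotonicity $\entro{2}{\cdot} \geq \entro{\infty}{\cdot}$ (``Jensen's inequality on R\'enyi entropy''), and appeals to smoothing. You unpack the Fourier-analytic and $L_2$ argument behind that cited theorem rather than treating it as a black box, but the structure---smoothing, Hilbert--Schmidt/collision bound, $\delta$-biased character estimate, and $\entro{2} \geq \entro{\infty}$---is identical.
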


\begin{proof}
The original theorem in \cite{TCC:FehSch08} states for $\entro{2}{X|B}$. By
using Jensen’s inequality on R\'enyi entropy and means of smoothing,
our theorem follows immediately.
\end{proof}

\subsection{Cryptographic Primitives}

We assume basic familiarity with signatures schemes, denoted as $\sigma 
= (\Setup,\allowbreak \KeyGen,\allowbreak \Sign,\allowbreak \Verif)$, 
which are strongly existentially unforgeable under a quantum 
chosen-message attack, and with commitment schemes, more precisely 
dual-mode commitment schemes, denoted as $\com = (\KeyGen_\fsub{H}, 
\KeyGen_\fsub{B}, \Commit, \Verify,\allowbreak \fname{Open}, 
\fname{Ext})$, where $H$ stands for \emph{hiding} keys and $B$ for 
\emph{binding} keys.
Definitions can be found in Appendix~\ref{app:primitives}.


\section{On the Feasibility of Securely Realizing PAKE}
\label{sect:feasibility}\label{sec:impossibility}
	In this section, we show negative results on the achievable
security of Password-based Key Exchange protocols when allowed to use
quantum communication. We focus on two composability settings: Either a
``minimal'' simulation-based security following a real world-ideal world
paradigm, as defined in~\cite{TCC:FehSch09,JC:Canetti00}, or the
full universally composable security~\cite{FOCS:Canetti01,EC:Unruh10}.



Following the literature, we call \emph{plain} model the setting in
which there are no setup assumptions (such as public-key infrastructure
(PKI), common reference string (CRS), random oracles (ROM), \etc{}).
Following for instance \cite{STOC:KusLinRab06}, in which the authors
study the connections between information-theoretic security and
security under composition, we consider here the information-theoretic
setting, in which the adversary is polynomially unbounded. Informally,
the output of a real execution of the protocol with a real adversary
must be (perfectly or statistically) the same as the output of an ideal
execution with a trusted party and an ideal-world adversary/simulator.
	On the contrary, in the computational setting, we focus on the notion of
\emph{everlasting} security~\cite{TCC:QuaUnr07,C:Unruh13}, which
informally means that the adversary is polynomially bounded during the
execution of the protocol, and unbounded afterwards. This models an
adversary possibly saving transcripts today, in order to potentially use
them at the time a quantum computer is built.

\subsection{Implicit or Explicit Authentication}
\label{subsec:auth}
	We recall an important property of a PAKE protocol: it
guarantees that if the same password was entered, the generated session
key is the same for both parties, but they might not know at the end of
the protocol whether it is so. This property is known as \emph{implicit}
authentication, as opposed to \emph{explicit} authentication, in which
the parties know whether they share the same session key at the end of
the protocol. In both cases, the protocol should guarantee that if the
passwords were different, the session keys are independent and random.


The line of work for impossibility results that we continue here focuses
on \emph{non-trivial} protocols\footnote{As explained for instance in
\cite[Section~7]{EC:CHKLM05}, the results are only interesting for what
they call \emph{non-trivial} protocols, in which two parties agree on a
shared secret key at the end of the execution of the protocol (except
perhaps with negligible probability), if 1) they use the same password
and 2) the adversary passes all messages between the parties without
modifying them or inserting any messages of its own. This is required
since otherwise the \emph{empty} protocol in which parties do nothing
would securely realize any PAKE functionality.} with explicit
authentication. It is known at least since~\cite[Section
5]{EC:BelPoiRog00} that explicit authentication can be added at no
security cost to any protocol with implicit authentication, using a
\emph{key confirmation} technique. The obtained key~$K$ would be used as
the key for a PRF secure for 3~queries, one of the players would send
$PRF_K(1)$ to the other, the other would send $PRK_K(2)$ to the first
one, and both would end up using $PRF_K(0)$ as the final session
key\footnote{A trivial construction of such a (perfect) PRF would be to
split the key into three parts, use the two first parts as key
confirmations and the last one as the real session key.}. This implies
that the following results also hold for protocols with \emph{implicit}
authentication.


\subsection{Impossibility in the Simulation-Based Model}
\label{subsec:impossibility:SB}

\begin{theorem}
\label{theorem:no_sim_pake}
There is no statistically simulation-based secure PAKE
protocol with explicit authentication in the plain model.
\end{theorem}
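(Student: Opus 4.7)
The plan is to exhibit a passive adversary and an unbounded environment that succeed against every candidate simulator. Let $\adv{A}$ be the eavesdropper that simply records the transcript $T$ of the interaction, and let $\adv{Z}$ sample the password $pw$ uniformly from a dictionary $D$ of size at least $2$ and hand it to both honest parties Alice and Bob. Afterwards $\adv{Z}$ uses its unbounded power to run the key-derivation part of $\pi$ on the transcript and compare it with what the parties output.

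The starting observation is that correctness together with explicit authentication pins down the shared key to a deterministic function of the transcript and the common password. Writing Alice's output as $K_A = f(T, pw, r_A)$ and Bob's as $K_B = g(T, pw, r_B)$, correctness on every consistent pair of private coin sequences $(r_A, r_B)$ yielding the same $T$ forces $K_A = K_B$, so the output cannot genuinely depend on either party's private coins; we obtain a single function $K = \Phi(T, pw)$ determined by $\pi$, which an unbounded $\adv{Z}$ can evaluate. Hence in the real world $\adv{Z}$ always observes $K_A = K_B = \Phi(T, pw)$.

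In the ideal world the PAKE functionality samples a uniformly random key $K^{\ast} \in \bin^{\lambda}$ whenever the two input passwords agree and delivers it to both parties, while the simulator $\adv{S}$ is in charge of producing a simulated transcript $T^{\ast}$ on behalf of $\adv{A}$. Since the adversary is an external passive party that is not entitled to the session key, the standard form of the PAKE functionality does not reveal $K^{\ast}$ to $\adv{S}$; consequently $K^{\ast}$ is uniformly random in $\bin^{\lambda}$ and stochastically independent of $(T^{\ast}, pw)$ from the distinguisher's perspective. Let $\adv{Z}$ output the bit $\mathbbm{1}[K_A = \Phi(T, pw)]$: it equals $1$ in every real execution, but equals $1$ in the ideal execution with probability at most $2^{-\lambda}$, yielding a distinguishing advantage of $1 - \negl{\lambda}$.

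The main obstacle is to formalize the functional identity $K = \Phi(T, pw)$ when $\pi$ uses quantum communication: the transcript is not literally a classical bit string and each party's view contains quantum registers that have to be collapsed in order to make sense of a deterministic classical key-derivation. One handles this by letting $\adv{A}$ faithfully forward the quantum messages while keeping no state, so that the whole of $(pw, T, K_A, K_B)$ visible to $\adv{Z}$ after termination is classical; correctness of $\pi$, phrased as equality of the two classical random variables $K_A, K_B$ in this view, is enough to force $K = \Phi(T, pw)$ on the corresponding classical joint distribution and the rest of the argument proceeds unchanged. A secondary subtlety is to confirm that $\adv{S}$ truly has no handle on $K^{\ast}$, which follows from any reasonable formalization of the PAKE ideal functionality, since otherwise the ideal execution itself would leak the session key to the adversary.
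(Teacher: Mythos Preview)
Your argument has a genuine gap in the quantum setting. The central claim---that correctness forces the session key to be a deterministic function $K = \Phi(T, pw)$ of the classical transcript and the password---is simply false for quantum protocols, and quantum key distribution is a direct counterexample: in BB84 both honest parties agree on a key $K_A = K_B$, yet that key has essentially full entropy conditioned on the classical transcript seen by a passive eavesdropper who merely forwards the qubits. The classical ``mix-and-match'' step you invoke (different private coins $r_A, r_A'$ compatible with the same $T$ can be freely paired with any compatible $r_B$) breaks down precisely because the quantum messages are not recorded in $T$: Bob's view, and hence his output, depends on the quantum states he received, which depend on Alice's private coins in a way $T$ does not capture. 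Your proposed resolution (``letting $\adv{A}$ faithfully forward the quantum messages while keeping no state'') makes this worse, not better, since then $\adv{A}$---and hence $\adv{Z}$---genuinely has no handle on the quantum content of the communication and there is no function $\Phi$ for $\adv{Z}$ to evaluate.

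A second warning sign is that your argument never actually uses the explicit-authentication hypothesis; if it were sound it would equally rule out statistically secure implicit-authentication PAKE and indeed any information-theoretic key agreement against a passive eavesdropper, which QKD refutes. The paper's proof proceeds along a completely different line: a PAKE with explicit authentication yields a password-based identification scheme, hence a statistically secure two-party computation of the \textsc{EQUALITY} function, and then the general impossibility of Buhrman~\emph{et al.}\ (Lemma~\ref{lemma:BCS12}) for deterministic two-sided functionalities applies. That argument works through an \emph{actively corrupted party} rather than a passive eavesdropper, and it is exactly there that the explicit-authentication hypothesis is consumed.
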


To the best of our knowledge, no equivalent result is known for everlasting
security or when allowing setup assumptions, such as a common reference
string.

This theorem is proven in~\Cref{subsec:impossibility-proofs:SB}.

\subsection{Impossibility in the Universally Composability Model}
\label{subsec:impossibility:UC}

As in the classical case (Canetti \etal{} prove in~\cite{EC:CHKLM05} the
impossibility of universally composable PAKE in the plain model), the
(im)possibility of PAKE depends on the existence of some setup assumption.
As shown by Unruh in~\cite{C:Unruh13}, the classical notion of passive
adversaries (which copy all data) does not make sense in the quantum case.
He thus considers only \emph{unitary} protocols, which perform no
measurements (any protocol can be transformed into such a protocol using
additional quantum memory). Unruh then defines a functionality~$\mathcal{F}$ to
be \emph{quantum-passively-realizable} it there exists a unitary
protocol that realizes~$\mathcal{F}$ with respect to passive unlimited
adversaries (that follow the protocol exactly and do not even copy
information). The following lemma gives examples of
quantum-passively-realizable functionalities.

\begin{lemma}[{\cite[Lemma 8]{C:Unruh13}}] The following functionalities are
quantum-passively-realizable: $\mathcal{F}_{CT}$ (coin-toss),
$\mathcal{F}_{CRS}$ (common reference string), $\mathcal{F}_{EPR}$
(predistributed EPR pair), $\mathcal{F}_{PKI}$ (public key infrastructure;
assuming that the secret key is uniquely determined by the public key).
\end{lemma}


We state the following impossibility theorem, proven
in~\Cref{subsec:impossibility-proofs:UC}.

\begin{theorem} \label{theorem:no_uc_pake} There is no statistically or
everlastingly quantum-UC-secure PAKE protocol with explicit authentication
which only uses quantum-passively-realizable functionalities as trusted setup
assumptions.
\end{theorem}


\subsection{Avoiding Impossibility Results}

In summary, we have shown that, in the simulation-based model,
statistically secure PAKE with explicit authentication is impossible in
the plain model. The question remains open for statistical security with
a trusted setup, or for everlasting security with or without trusted
setups. In the following, we partially solve these open questions, by
actually constructing an everlastingly secure PAKE in the
simulation-based model, given a CRS as a trusted setup.

In the universal composability model, statistically or everlastingly
secure PAKE with explicit authentication is impossible with
quantum-passively-realizable functionalities as trusted setups. Unruh
shows in~\cite{C:Unruh13} that it is possible using signature cards as a
trusted setup (he even shows that this setup assumption is indeed complete for
everlastingly secure two-party computation).

\section{Definition of Security}
\label{sect:sim_based}
\subsection{Description of the Simulation-based Model}

Our definition follows the framework based on the
\emph{real-world/ideal-world} simulation paradigm put forward in
\cite{TCC:FehSch09} and enjoys sequential composition. The main features of
our model are that it is formally sound, simple and expressive, benefits from
a simpler security definition tailored to various assumptions on the
adversary's computational power.

Since we are interested in two-party quantum computations, we formalize the
real and ideal model executing the task with two parties and a \emph{static}
adversary who can control an arbitrary but fixed corrupted party. We only
consider either the setting where one of the parties is corrupted, or the
setting where none of the parties is corrupted, in which case the adversary
seeing the transcript between the parties should learn nothing.

\subsubsection{Execution in the ideal model.} Denote the participating parties
by
$P_1$ and $P_2$ and let $i \in \{1, 2\}$ denote the index of the corrupted
party, controlled by an adversary $\adv{A}$. An ideal execution for an ideal
functionality $\ideal{}$ proceeds as follows:
\begin{itemize}[align=left]
\item[\textbf{Inputs:}] We fix an arbitrary distribution $P_U$ for $P_1$'s
input, $P_V$ for $P_2$'s input. For honest $P_1$ and $P_2$, we assume the
common input state $\rho_{UV}$ to be classical, \ie{} of the form $\rho_{UV} =
\sum\nolimits_{u, v}P_{UV}(u,v)\ketbra{u}{u}\otimes \ketbra{v}{v}$ for some
probability distribution $P_{UV}$. The adversary $\adv{A}$ also has an
auxiliary classical input denoted by $Z$ as well as a quantum state $T'$ which
only depends on $Z$, such that for any honest player's input $W$ and his
classical ``side information'' $S$: $\rho_{SWZT'} = \markovstate{SW}{Z}{T'}$.
All parties are initialized with the same value $1^\lambda$ on their security
parameter tape (including the trusted party).
\item[\textbf{Send inputs to trusted party:}] The honest party $P_j$ sends its
prescribed input to the trusted party. The corrupted party $P_i$ controlled by
$\adv{A}$ may either abort (by replacing the input with a special
$\textsf{abort}_i$ message), send its prescribed input, or send some other
input of the same length to the trusted party by applying some completely
positive trace-preserving (CPTP) map. This decision is made by $\adv{A}$ and
may depend on its auxiliary input and the input value of $P_i$. Denote the
common input state sent to the trusted party by $\rho_{SU'ZV'}$. Upon receipt
of input from the parties, the trusted party measures the inputs in the
computational basis.
\item[\textbf{Early abort option:}] If the trusted party receives an input of
the form $\textsf{abort}_i$ for some $i \in \{1, 2\}$, it sends
$\textsf{abort}_i$ to the honest party $P_j$ and the ideal execution
terminates. Otherwise, the execution proceeds to the next step.
\item[\textbf{Trusted party sends output to adversary:}] At this point the
trusted party computes $\left( X, Y \right) = \left( id_S \otimes \mathcal{F}
\right)\rho_{SU'ZV'}$ and let $f_1 = (S, X)$ and $f_2 = (Z, Y)$ and sends
$f_i$ to party $P_i$ (\ie{} it sends the corrupted party its output).
\item[\textbf{Adversary instructs trusted party to continue or halt:}]
$\adv{A}$ sends either $\textsf{continue}$ or $\textsf{abort}_i$ to the
trusted party. If it sends $\textsf{continue}$, the trusted party sends $f_j$
to the honest party $P_j$. Otherwise, if $\adv{A}$ sends $\textsf{abort}_i$,
the trusted party sends $\textsf{abort}_i$ to party $P_j$.
\item[\textbf{Outputs:}] The honest party always outputs the output value it
obtained from the trusted party. The corrupted party outputs nothing. The
adversary $\adv{A}$ outputs any arbitrary CPTP map of the prescribed input of
the corrupted party, the auxiliary classical input $Z$, and the value $f_i$
obtained from the trusted party.
\end{itemize}

The $\textsf{ideal execution of } \mathcal{F}$, denoted by
$\textnormal{IDEAL}_{\mathcal{F}, \mathcal{A}}\left(\rho_{SUZV},
\lambda\right)$, is defined as the overall output state (augmented with honest
inputs) of the honest party and the adversary $\adv{A}$ from the above ideal
execution.

\subsubsection{Execution in the real model.} We next consider the real model
in
which a real two-party quantum protocol $\Pi$ is executed with no trusted
parties. In this case, the adversary $\adv{A}$ sends all messages in place of
the corrupted party, and may follow an arbitrary strategy. In contrast, the
honest party follows the instructions of $\Pi$. We consider a simple network
setting where the protocol proceeds in rounds, where in each round one party
sends a message to the other party.

Let $\ideal{}$ be as above and let $\Pi$ be a two-party quantum protocol for
computing $\ideal{}$. When $P_1$ and $P_2$ are both honest, we fix an
arbitrary joint probability distribution $P_{UV}$ for the inputs $U$ and $V$,
resulting in a common output state $\rho_{UVXYE} = \rho_E \otimes \Pi(U, V)$
with a well defined joint probability distribution $P_{UVXY}$, where $E$ is
the adversary's quantum system. For an honest $P_j$ and a dishonest $P_i$ who
takes as input a classical $Z$ and a quantum state $V'$ and output (the same)
$Z$ and a quantum state $Y'$, then the resulting overall output state
(augmented with the honest party's input $S$ and $U$) is $\rho_{SUXZY'} =
\left(id_{SU} \otimes \Pi\right)\rho_{SUUZV'}$.

The $\textsf{real execution of } \Pi$, denoted by $\textnormal{REAL}_{\Pi,
\mathcal{S}}\left(\rho_{SUZV}, \lambda\right)$, is defined as the overall
output state of the honest party and the adversary $\adv{A}$ from the real
execution of $\Pi$.



\begin{definition}
A two-party quantum protocol $\Pi$ is said to \emph{statistically
$\varepsilon$-securely emulate} an ideal classical functionality $\ideal{}$
\emph{with abort in the presence of static malicious adversaries} if for every
(possibly unbounded) adversary $\adv{A}$ for the real model, there exists an
(possibly unbounded) adversary (called the \emph{simulator}) $\adv{S}$ for the
ideal model, such that
\[
\textnormal{IDEAL}_{\mathcal{F}, \mathcal{S}}\left(\rho_{SUZV}, \lambda\right) \approx_\varepsilon \textnormal{REAL}_{\Pi, \mathcal{A}}\left(\rho_{SUZV}, \lambda\right),
\]
where $S, Z \in \bin^*$ and $\lambda \in \mathbb{N}$.
\end{definition}

We also give here an adapted definition of everlasting security in the
simulation-based paradigm. The execution in the ideal model and the real model
stays the same as for unconditional security, but we require that the real-world
adversary and ideal-world adversary are computationally bounded.

\begin{definition}
\label{def:2pc_everlasting}
A two-party quantum protocol $\Pi$ is said to \emph{everlastingly
$\varepsilon$-securely emulate} an ideal classical functionality $\ideal{}$
\emph{with abort in the presence of static malicious adversaries} if for every
quantum-polynomial-time adversary $\adv{A}$ for the real model, there exists a
quantum-polynomial-time adversary (called the \emph{simulator}) $\adv{S}$ for
the ideal model, such that
\[
\textnormal{IDEAL}_{\mathcal{F}, \mathcal{S}}\left(\rho_{SUZV}, \lambda\right) \approx_\varepsilon \textnormal{REAL}_{\Pi, \mathcal{A}}\left(\rho_{SUZV}, \lambda\right),
\]
where $S, Z \in \bin^*$ and $\lambda \in \mathbb{N}$.
\end{definition}

\subsection{Split Authentication: From Passive Security to
Active Security}
\label{sect:split_auth}


A common approach in designing multi-party quantum cryptographic protocols is
to treat the \emph{authenticated communication} aspect of the problem as
extraneous to the actual protocol design. That is, the adversary is assumed to
be unable to send classical messages in the name of uncorrupted parties, or
modify classical messages that the uncorrupted parties send to each other.
This means that authentication must be provided by some mechanism that is
external to the protocol itself, such as classical authenticated channels, as
in QKD.

On the contrary, it makes no sense to rely on authenticated channels for
realizing authenticated key-exchange, such as PAKE. But in the absence of such
strong authentication mechanisms, honest parties cannot distinguish the case
in which they interact with each other from the case in which they interact
with the adversary, so that the adversary can always partition the players and
engage in separate executions of the protocols with each of them, playing the
role of the other player.

To overcome this difficulty, our approach is to follow the \emph{Split
Authentication} transformation of \cite{JC:BCLPR11}: We consider a completely
unauthenticated setting, where all \emph{classical} messages sent by the
parties may be tampered with and modified by the adversary without the
uncorrupted parties being able to detect this fact. Then we modify the
protocol as described on \Cref{fig:compiler}: We add an extra first flow in
which the players exchange public verification keys for a signature scheme,
and check these values by exchanging signatures on these keys. Each classical
flow of the subsequent protocol is then signed using the associated private
signing key, and verified by the other player, who aborts in case it does not
match.

This transformation implies that the \emph{only} attack that the adversary can
carry out is to completely ``disconnect'' the two uncorrupted parties (during
the added first flow), and engage in completely separate executions with each
one of the two parties, where in each execution the adversary plays the role
of the other party. Intuitively, the transformation guarantees that the
adversary is limited to pursuing one of the two following strategies:

\begin{enumerate}

\item \emph{Passive attacks:} In this strategy, the adversary does not tamper
with the first flow, so that it can only carry out active attacks on the
quantum part of the channel, but it cannot carry active attacks on the
classical channel without being caught.

\item \emph{Independent executions:} In this strategy, the adversary
intercepts the first flows between the parties and engages in independent,
separate executions with each of them. We note that, in our simulation-based
model, the adversary can only run one execution at any point. Then, the
security is exactly the same as in the case where one of the parties is
corrupted.

\end{enumerate}

\afterpage{
\begin{figure}[!h]
\footnotesize
\begin{tcolorbox}[colback=white,size=small,sharp corners,colframe=black,before
upper={\parindent15pt}] {}

\noindent\textit{Link Initialization}

Upon activation, $P_b$ does the followings:
\begin{enumerate}
\item $P_b$ chooses a key pair $(\textsf{sk}_b, \textsf{vk}_b)$ for the
signature scheme.
\item $P_b$ sends $\textsf{vk}_b$ to $P_{1-b}$. (Recall that in an
unauthenticated network, sending $m$ to $P_i$ only means that the message is
given to the adversary.)
\item $P_b$ waits until it receives a key from $P_{1-b}$ (Recall that this key
is actually received from the adversary and does not necessarily correspond to
the key sent by the other party.) and defines $\textsf{sid}_b =
\langle\left(P_0, \textsf{vk}_0 \right),\left(P_1,
\textsf{vk}_1\right)\rangle$.
\item $P_b$ computes $\sigma_b = \fname{Sign}\left( \textsf{sid}_b \right)$
and sends $\left(\textsf{sid}_b, \sigma_b\right)$ to $P_{1-b}$.
\item $P_b$ waits until it receives a message $\left(\textsf{sid},
\sigma\right)$ from $P_{1-b}$. Then it checks that
$\fname{Verify}_{\textsf{vk}_{1-b}}(\textsf{sid}, \sigma) = 1$ and that
$\textsf{sid}_b = \textsf{sid}$. If all of these check pass, then $P_b$
outputs $\textsf{sid}_b$.
\end{enumerate}
\textit{Core Protocol Execution}
\begin{enumerate}
\item $P_b$ initializes a counter $c$ to zero.
\item $P_b$ runs protocol $\Pi$ according to the protocol specification, with
some slight modifications for classical messages authentication:
\begin{enumerate}
\item When $P_b$ wants to send a classical message $m$ to $P_{1-b}$, it signs
on $m$ together with $\textsf{sid}_b$, the recipient identity, and the counter
value. That is, $P_b$ computes $\sigma =
\fname{Sign}_{\textsf{sk}_b}(\textsf{sid}_i,m, P_{1-b}, c)$, sends $\left(P_b,
m, c, \sigma\right)$ to $P_{1-b}$ and increments $c$.
\item Upon receiving a message $\left(P, m, c, \sigma\right)$ allegedly from
$P_{1-b}$, $P_b$ first verifies that $c$ did not appear in a message received
from $P_{1-b}$ in the past. It then verifies that $\sigma$ is a valid
signature on $(\textsf{sid}_b, m, P, c)$ using the verification key
$\textsf{vk}_{1-b}$. If the verification fails, it halts.
\end{enumerate}
\end{enumerate}
\end{tcolorbox}
\caption{Compiled protocol $\mathcal{C}^{\varepsilon_{sig}}(\Pi)$.}
\label{fig:compiler}
\end{figure}
}

\begin{theorem}
\label{theorem:compiler}
Assume the existence of signature schemes that are existentially unforgeable
under an adaptive quantum chosen message attack (see definitions
in~\ref{app:primitives}).
Let $\Pi$ be a two-party quantum protocol that is everlastingly secure in
the authenticated-channel setting.
Then, the compiled protocol $\mathcal{C}^{\varepsilon_{sig}}(\Pi)$, resulting
by applying the transformation given in Figure \ref{fig:compiler},
is everlastingly secure against static,
malicious adversaries, according to
Definition \ref{def:2pc_everlasting}, with no authenticated channels.
\end{theorem}

The proof works almost the same as the proof given in \cite{JC:BCLPR11}, and is
given in Appendix \ref{app:proof_compiler}.


\section{Our Protocol}
\label{sect:protocol}


\subsubsection{High-Level Description.}
	We use the split authentication mechanism given in
\Cref{sect:split_auth}, so that we focus on the ``inner'' protocol
construction, which is a quantum PAKE assuming authenticated classical
channels (which means that the adversary is assumed to be unable to modify
classical messages sent by the uncorrupted parties). Applying the
transformation described in \Cref{fig:compiler} (using digital signatures)
will thus lead to a quantum PAKE where the authentication between two honest
parties is solely guaranteed by the password.




The full description of our PAKE protocol is provided in Figure
\ref{fig:main_protocol} and its schematic diagram is given in Figure
\ref{fig:main_protocol_diagram}.
From a high point of view, it starts with a preparation phase, in which the
client samples
random binary strings $x$ and~$\theta$, and sends the encoded quantum state of
$x$ using basis $\theta$. Next, a parameter estimation phase is done by means
of a dual-mode commitment scheme, which can be either perfectly hiding or
perfectly binding, depending on the chosen commitment key (see details in
\Cref{sec:dual-mode}). The main
difference
between the security of a PAKE protocol and QKD is the need to consider the
cases where one of the parties is corrupted.
Two-party quantum protocols can easily be broken by the
\emph{adversary purification} attack: the dishonest party can purify his
actions at the expense of additional quantum memory, and delay the
measurements until the other party reveals her chosen basis at a later stage,
and learn more information than what he was supposed to.
In order to enforce honest behavior, we use
the Commit-and-Open compiler formally introduced in \cite{C:DFLSS09},
and apply it to both parties.
This forces both parties to measure by asking them to commit to all the basis
choices and measurement results, and open some of them later.


After the estimation phase, both parties exchange a one-time pad
of their password encrypted using the chosen random basis.
We show that the session keys of both parties at the end are
random and independent for any pair of different passwords.

Finally, the post-processing phase consists, as QKD, of
error correcting and privacy amplification.
A new problem lies, however, in the error correcting step: to correct the errors
caused by either the adversary or the imperfection of the quantum channel, one
party may send a syndrome of the generated secret key to allow the other
party to recover the same key from its noisy version.
However, the syndrome may give extra information to a dishonest party.
To circumvent this problem, we employ the $\delta$-biased linear binary codes
introduced in \cite{STOC:DodSmi05}, which has an additional property
that the syndrome of a string with high min-entropy is close to uniform.


\subsubsection{Notations and Building Blocks.}
Let $\lambda$ denote the security parameter and let $k = \poly{\lambda}$
and some $\alpha \in \left( 0, 1 \right)$. Assume that both parties
share some password $pw \in \mathcal{D} \subseteq \bin{}^m$. We denote:
\begin{itemize}
\item $\mathfrak{c}: \mathcal{D}\rightarrow \left \{ +, \times \right \}^n$
the encoding function of a binary code of length $n$ with $m = \left |
\mathcal{D} \right|$ codewords and minimal distance $d$.
$\mathfrak{c}$ is chosen such that $n$ is linear in $\log{m}$ or larger,
and $d$ is linear in $n$, i.e. $d \coloneqq \gamma n$, for some constant
$\gamma$.
\item $\mathcal{F}$ a strongly two-universal class of hash functions from
$\bin{}^\ell$ to $\bin{}^\lambda$ for some parameter $\ell = n/2$.
\item $\left\{ \mathtt{synd}_j \right\}_{j\in \mathcal{J}}$ the
family of syndrome functions corresponding to a $\delta$-biased family
$\mathcal{C} = \left\{ C_j \right\}_{j\in \mathcal{J}}$ of linear error
correcting codes of size $\ell = n/2$, where $\delta < 2^{-\beta n/4}$,
for some constant $0 < \beta < 1$.
Let $\left\{ \fname{decode}_j \right\}_{j\in \mathcal{J}}$
be the corresponding decoding function. A random $C_j$ allows to efficiently
correct a $\tau$-fraction of errors for some constant $\tau > 0$.
\item $\com$ a dual-mode proof commitment scheme, and we denote $c
\leftarrow \com.\Commit\left( m \right) $ an execution of the commit phase
of a message $m$ (with some randomness).
We assume that the opening phase consists in the sender sending $m$ (and some
randomness used in the commit phase) and the receiver verifying via a
deterministic function $\com.\Verify\left(c, m \right)$.
\end{itemize}

\subsubsection{Security Result.}

\begin{figure}[!t]
\footnotesize
\begin{tcolorbox}[colback=white,size=small,sharp corners,colframe=black,before
upper={\parindent15pt}] {}

\noindent\textbf{Common reference string:} A pair $\left( ck, ck' \right) $,
which are the two perfectly hiding commitment keys for the dual-mode proof
commitment scheme $\com$.

\noindent\textbf{Protocol Steps:}
\begin{enumerate}
\item Upon activation, $P_1$ chooses $x \sample \bin^k$, $\theta \sample
\bases^k$, encodes each data bit from $x$ according to the corresponding basis
bit $\theta$, let the encoded state be $\ket{\Psi}$ and sends
$\mathtt{flow-zero} = \ket{\Psi}$ to $P_2$.
\item Upon receipt of $\mathtt{flow-zero}$ from $P_1$, $P_2$ chooses
$\hat{\theta} \sample \bases^k$, measures $\ket{\Psi}$ in basis $\hat{\theta}$ to get a classical
string $\hat{x}$. For each pair of bits of $\hat{x}$ and $\hat{\theta}$, it
uses $ck$ to commit $(c^0_i, c^1_i) = (\com.\Commit(\hat{x}_i),
\com.\Commit(\hat{\theta}_{i}))$, then sends $\mathtt{flow-one}_1 =
\left\{(c^0_i, c^1_i) \right\}_{i \in \intval{k}} $ to $P_1$.
\item Upon receipt of $\mathtt{flow-one}_1$ from $P_2$, $P_1$ chooses a random
subset $T_1 \subset_R \{1, \dots, k\}$ such that $\size{T_1} = \alpha k$, and
sends $\mathtt{flow-one}_2 =  T_1 $ to $P_2$.
\item Upon receipt of $\mathtt{flow-one}_2$ from $P_1$, $P_2$ opens all the
commitments restricted to the indices $i \in T_1$ and sends
$\mathtt{flow-one}_3 = \left\{\hat{x}_i, \hat{\theta}_i \mid i \in T_1
\right\} $ to $P_1$.
\item Upon receipt of $\mathtt{flow-one}_3$ from $P_2$, $P_1$ verifies all the
commitments restricted to the indices $i \in T_1$: $\com.\Verify ( c^0_i,
\hat{x}_i ) \iseq{} 1$ and $\com.\Verify ( c^1_i, \hat{\theta}_i ) \iseq{}
1$. Furthermore, it sets $T_1^\prime = \left\{ i \in T_1 \mid \theta_i =
\hat{\theta}_i \right\}$ and verifies that $r_H\left( x|_{T_1^\prime},
\hat{x}|_{T_1^\prime}\right) \leq \frac{\tau}{2} $. It aborts if the
verifications fail.

For each pair of bits of $x$ and $\theta$ restricted to the set $\left\{ 1,
\dots, k \right\} \setminus T_1$, it uses $ck'$ to commit $(c^0_i, c^1_i) =
(\com.\Commit(x_i), \com.\Commit(\theta_{i}))$, then sends $\mathtt{flow-two}_1
= \left\{ (c^0_i, c^1_i) \right\}_{i \in \intval{k_1}} $ to $P_2$, where $k_1
= k - \alpha k$.
\item Upon receipt of $\mathtt{flow-two}_1$ from $P_1$, $P_2$ chooses a random
subset $T_2 \subset_R \{1, \dots, k\} \setminus T_1$ such that $\size{T_2} =
\alpha k$, and send $\mathtt{flow-two}_2 = T_2 $ to $P_1$.
\item Upon receipt of $\mathtt{flow-two}_2$ from $P_2$, $P_1$ opens all the
commitment restricted to the indices $i \in T_2$ and sends
$\mathtt{flow-two}_3 = \left\{x_i, \theta_i \mid i \in T_2 \right\}$ to $P_2$.
\item Upon receipt of $\mathtt{flow-two}_3$ from $P_1$, $P_2$ verifies all the
commitment restricted to the indices $i \in T_2$: $\com.\Verify ( c^0_i,
x_i ) \iseq{} 1$ and $\com.\Verify ( c^1_i, \theta_i ) \iseq{} 1$. It
also sets $T_2^\prime = \left\{ i \in T_2 \mid \theta_i = \hat{\theta}_i
\right\}$ and verifies that $r_H\left( x|_{T_2^\prime},
\hat{x}|_{T_2^\prime}\right) \leq \frac{\tau}{2} $. It aborts if the
verifications fail.
\item Both parties set $\bar{T} = \left \{1, \dots, k \right \} \setminus
\left (T_1 \cup T_2  \right )$. $P_1$ computes $\varphi = \theta|_{\bar{T}}
\oplus
\mathfrak{c}(pw) $, and $P_2$ computes $\hat{\varphi} = \hat{\theta}|_{\bar{T}}
\oplus \mathfrak{c}(pw)$. $P_2$ sends $\mathtt{flow-three}_1 =
\hat{\varphi}$ to $P_1$.
\item Upon receipt of $\mathtt{flow-three}_1$ from $P_2$, $P_1$ sends
$\mathtt{flow-three}_2 = \varphi$ to $P_2$.
\item Both parties set $I_w = \{i \mid \varphi_i = \hat{\varphi}_i \}$. $P_1$
chooses $j \in_R \mathcal{J}$, computes $s = \mathtt{synd}_j \left( x|_{I_w}
\right) $ and sends $\mathtt{flow-four} = \left\{ j, s \right\}$ to $P_2$.
\item Upon receipt of $\mathtt{flow-four}$ from $P_1$, $P_2$ recovers
$\tilde{x}|_{I_w}$ from $\hat{x}|_{I_w}$ with the help of $s$, chooses $f
\in_R \mathcal{F}$ for privacy amplification, computes the session key
$\textsf{sk} = f \left( \tilde{x}|_{I_w} \right) $, sends $\mathtt{flow-five}
= f$ to $P_1$, outputs $\textsf{sk}$ and halts.
\item Upon receipt of $\mathtt{flow-five}$ from $P_2$, $P_1$ computes the
session key $\textsf{sk} = f \left( x|_{I_w} \right) $, outputs $\textsf{sk}$
and halts.
\end{enumerate}
\end{tcolorbox}
\caption{Protocol description.}
\label{fig:main_protocol}
\end{figure}

\begin{figure}[!t]
\centering
\fbox{\pseudocode{
P_1 \textnormal{ (client)} \<\< P_2 \textnormal{ (server)}
\\[0.1\baselineskip][\hline] \\[-0.5\baselineskip]
\< \qquad \mathbf{CRS\colon} (ck, ck') \\[\pad]
x \sample \bin^k \\[\pad]
\theta \sample \bin^k
\< \sendmessageright{top={$\ket{\Psi} = \ket{x}_\theta$}} \< \\[\pad]
\<\< \hat{\theta} \sample \bin^k \\[\pad]
\< \sendmessageleft*{\com.\Commit_{ck}(\hat{x}, \hat{\theta})} \<
\text{measure } \ket{\Psi} \text{ in basis } \hat{\theta} \\[\bigpad]
	\<\<\text{to get } \hat{x} \\[\pad]
T_1 \subset_R \{1, \dots, k\} \\[\bigpad]
	\text{s.t. } \size{T_1} = \alpha k \<
\sendmessageright{top={$T_1$}} \\[\bigpad]
\< \sendmessageleft*{(\hat{x}, \hat{\theta})|_{T_1}} \\[\pad]
\com.\Verify_{ck}(\cdot)|_{T_1} \\[\bigpad]
\< \sendmessageright*{\com.\Commit_{ck'}( x, \theta )} \\[\pad]
\< \sendmessageleft{top={$T_2$}} \< T_2 \subset_R \{1, \dots, k\} \setminus
T_1 \\[\bigpad]
	\<\<\text{s.t. } \size{T_2} = \alpha k \\[\bigpad]
\< \sendmessageright*{({x}, {\theta})|_{T_2}} \\[\pad]
\<\< \com.\Verify_{ck'}(\cdot)|_{T_2} \\[\slightpad]
\bar{T} = \left \{1, \dots, k \right \} \setminus \left (T_1 \cup T_2  \right
) \<\< \bar{T} = \left \{1, \dots, k \right \} \setminus \left (T_1 \cup T_2
\right ) \\[\pad]
\varphi = \theta|_{\bar{T}} \oplus \mathfrak{c}(pw) \<\< \hat{\varphi} =
\hat{\theta}|_{\bar{T}} \oplus \mathfrak{c}(pw)\\[\bigpad]
\< \sendmessageleft{top={$\hat{\varphi}$}} \\[\bigpad]
\< \sendmessageright{top={$\varphi$}} \\[\pad]
I_w = \{i \mid \varphi_i = \hat{\varphi}_i \} \<\< I_w = \{i \mid \varphi_i =
\hat{\varphi}_i \} \\[\slightpad]
j \sample \mathcal{J}\\[\pad]
s = \fname{synd}_j \left(x|_{I_w}\right) \< \sendmessageright*{j, s} \\[\pad]
\<\< \tilde{x} = \fname{decode}_j \left(s, \hat{x}\right) \\[\bigpad]
\< \sendmessageleft*{f} \< f \sample \mathcal{F} \\[\pad]
\textsf{sk} = f(x|_{I_w}) \<\< \textsf{sk} = f(\tilde{x}|_{I_w})}}
\caption{Schematic Diagram of the Protocol.}
\label{fig:main_protocol_diagram}
\end{figure}

\begin{theorem}
\label{theorem:main_theorem_1}
The protocol $\Pi$ of Figure \ref{fig:main_protocol} is everlastingly secure,
in the $\ideal{CRS}$-hybrid model, assuming authenticated classical channels.
\end{theorem}
\begin{proof}[Proof Sketch]$ $\newline
\emph{No corrupted parties.} We highlight some commons and differences
between our protocol and QKD protocol as follows.
\begin{itemize}
\item \texttt{flow-zero} is identical to QKD's preparation phase.
\item \texttt{flow-one} and \texttt{flow-two} correspond to QKD's parameter
estimation phase. \texttt{flow-three} corresponds to QKD's sifting phase. The
main differences lie in these steps. First, the order is inverse in QKD
protocol. We note that this ordering of steps makes no differences since the
qubits with different bases are not used in the protocol at all. Second,
parameter estimation is done by using the commitment scheme. Since the
commitment is perfectly hiding, it essentially gives the adversary nothing.
Third, instead of directly publishing the bases as in QKD, both parties
exchange one-time pads of their password. Again, because of the perfect
security property, this difference has no effect.
\item \texttt{flow-four} and \texttt{flow-five} correspond to QKD's
post-processing phase: error correcting and privacy amplification,
respectively.
\end{itemize}
Our security definition in this case shares a common ``picture''
with the one of QKD. Thus, we follow the main steps of QKD's proof \cite
{quant-ph/0512258,Tomamichel2017largelyself} with some modifications.
Particularly, we leverage a technical lemma from \cite{C:DFLSS09} to prove
statistical bounds on the min-entropy and max-entropy. Unlike QKD's
proofs, we also need to show that the password is independent of the
adversary's system. First note that after the commit-and-open phase, we are
close to the case where for \emph{any} choice of $T_1$ and $T_2$, and for
\emph{any} outcome $x_T$ when measuring $\ket{\Psi}$ in basis $\theta|_{T_2}$
and $\hat{\theta}|_{T_1}$, the relative error $r_H \left( x|_{T^\prime},
\hat{x}|_{T^\prime} \right)$ (where $T^\prime =
T_1^\prime \cup T_2^\prime$) gives an upper bound
on the relative error $r_H \left(x|_{\bar{T}}, \hat{x}|_{\bar{T}} \right) $
obtained by measuring the remaining subsystems with $i \in \bar{T}$, where
$\bar{T} = \left \{1, \dots, k \right \} \setminus \left (T_1 \cup T_2  \right
)$. The latter value does not depend on $pw$. Thus, either the protocols aborts
because the error rate exceeded the threshold $\tau$ or the server $P_2$ can
efficiently recover $x|_{I_w}$ independently of $pw$.
It follows that the protocol, either
aborts or succeeds, only depends on the adversary's behavior. The formal proof
is given in Appendix~\ref{proof:no_corrupted}.


\emph{Corrupted client.}
When the client is corrupted, recall that in general the simulator $\adv{S}$
needs to extract the corrupted party's input in order to send it to the
trusted party, and needs to simulate its view so that its output corresponds
to the output received back from the trusted party. Specifically, $\adv{S}$
chooses the CRS from two different distributions corresponding to the
perfectly binding keys and the perfectly hiding keys. The simulated adversary
uses a perfectly binding key included in the CRS (and since it is
quantum-polynomial-time, it
cannot distinguish between the two keys), and $\adv{S}$, upon receipt of
commitments from the adversary, uses the trapdoor information to extract the
committed values and re-commit and output them with a perfectly hiding key.
Furthermore, $\adv{S}$ delays its measurement and only measures its qubits
when needed. In particular, $\adv{S}$ uses its perfectly hiding trapdoor to
equivocate and its perfectly binding trapdoor to extract the password guess of
the adversary. If the guess is correct, then the simulation is perfect and
thus the two states are equal. It thus suffices to argue that the key
$\textsf{sk}$ that the server computes is uniformly random from the view of
the adversary for any fixed $pw' \neq pw$. The upper bound of
indistinguishability follows by privacy amplification (Theorem
\ref{theorem:privacy_amplification}) and private error correction (Theorem
\ref{theorem:private_ecc}). The formal proof is given in Appendix
\ref{proof:corrupted_client}.

\emph{Corrupted server.}
The simulation strategy for a corrupted server is the same as the previous case
and we defer the formal proof to Appendix \ref{proof:corrupted_server}.

\emph{Both parties are corrupted.}
This case is trivial since the simulator $\adv{S}$ just runs the adversary
$\adv{A}$ internally and outputs whatever $\adv{A}$ outputs, hence, the
simulation is perfect.
\end{proof}

\begin{theorem}[main theorem]
\label{theorem:main_theorem_2}
There exists a protocol in the $\ideal{CRS}$-hybrid model that
everlastingly realizes $\ideal{pwKE}$ in the presence of static-corruption
malicious adversaries.
\end{theorem}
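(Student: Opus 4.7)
The plan is to obtain Theorem \ref{theorem:main_theorem_2} as an immediate corollary of combining Theorem \ref{theorem:main_theorem_1} with the Split Authentication transformation of Theorem \ref{theorem:compiler}. Theorem \ref{theorem:main_theorem_1} already gives a protocol $\Pi$ (Fig.~\ref{fig:main_protocol}) that everlastingly realizes $\ideal{pwKE}$ in the $\ideal{CRS}$-hybrid model \emph{without} classical authenticated channels, but only against \emph{classically-specious} adversaries. Theorem \ref{theorem:compiler} is precisely the tool that lifts this guarantee to full malicious security against quantum-polynomial-time adversaries.

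Concretely, I would instantiate the compiler of Fig.~\ref{fig:compiler} with a quantum-secure existentially unforgeable signature scheme $\mathcal{S}$ (for example, a lattice-based EUF-CMA scheme consistent with the dual-mode commitment assumption already used for $\com$), and apply the resulting compiler to $\Pi$ to obtain $\mathcal{C}^{\varepsilon_{sig}}(\Pi)$. By Theorem \ref{theorem:compiler}, this new protocol everlastingly $(\varepsilon+\varepsilon_{sig})$-securely realizes $\ideal{pwKE}$ against static-corruption quantum-polynomial-time malicious adversaries. Both $\varepsilon$ (from the privacy amplification and private error-correction bounds of Theorems \ref{theorem:privacy_amplification} and \ref{theorem:private_ecc} inside the proof of Theorem \ref{theorem:main_theorem_1}) and $\varepsilon_{sig}$ (from EUF-CMA security of $\mathcal{S}$ against quantum adversaries) are negligible in $\lambda$, hence the total simulation error is negligible, which is exactly the everlasting security requirement of Definition \ref{def:2pc_everlasting}.

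The main subtlety to verify is that the hybrid model is preserved: Theorem \ref{theorem:compiler} is stated in the plain model, but the Split Authentication compiler only wraps each transmitted classical message with a signature on $(\mathsf{sid}, m, P, c)$ and touches neither the underlying quantum channel nor the trusted setup used by $\Pi$. Thus the only setup required by $\mathcal{C}^{\varepsilon_{sig}}(\Pi)$ remains the CRS of $\Pi$, namely the dual-mode commitment keys $(ck, ck')$, and the hybrid composition goes through unchanged. I do not expect a real obstacle beyond this bookkeeping: the work of passing from classically-specious to malicious security, and of arguing that the adversary is restricted to online dictionary attacks (captured by $\ideal{pwKE}$), has already been carried out in the proofs of Theorem \ref{theorem:main_theorem_1} and Theorem \ref{theorem:compiler}, so the present theorem is essentially a packaging statement that records their composition.
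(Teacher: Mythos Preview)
Your proposal is correct and matches the paper's own proof, which simply states that the result follows immediately from Theorem~\ref{theorem:main_theorem_1} together with the Splitting Theorem~\ref{theorem:compiler}, yielding $\mathcal{C}^{\varepsilon_{sig}}(\Pi)$ as the desired protocol. Your discussion of the CRS hybrid being preserved under the compiler and of the negligibility of $\varepsilon+\varepsilon_{sig}$ is more explicit than the paper's one-line argument, but the underlying approach is identical.
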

\begin{proof}
Follows immediately from Theorem \ref{theorem:main_theorem_1} and
Theorem \ref{theorem:compiler}. In particular, the compiled protocol
$\mathcal{C}^{\varepsilon_{sig}}(\Pi)$ is everlastingly secure against
static-corruption malicious adversaries.
\end{proof}


\section{Conclusion}
We have studied password-authenticated quantum key exchange and proven
its advantage over classical PAKE. The information-theoretic security of
traditional QKD relies on the very strong assumption
regarding the authentication of communication channels. We show here how to
implement this authentication using only passwords. This only decreases the
security from information-theoretic to everlasting, in which the adversary is
supposed to be computationally bounded during the execution of the protocol
but can be unbounded afterwards. This security is still stronger than the
security that can be achieved by classical PAKE protocols, and also relies on
much simpler assumptions and more practical requirements than fully
information-theoretic secure QKD.

%
%
%
%

\bibliographystyle{alpha}
\renewcommand{\doi}[1]{\url{#1}}
\bibliography{abbrev3,crypto,references}

\appendix

\section{Preliminaries}
\subsection{Security Models}\label{app:models}

We provide a brief overview of security models for
 \emph{multi-party computation} (MPC), in which $n$
players interact in order to compute securely a given function of their
inputs. Formally, consider $n$ players $P_i$, each owning an input $x_i$,
and a classical $n$-input function $f$.
The goal is to compute $\left( y_1, \dots, y_n
\right) = f \left( x_1, \dots, x_n \right) $ such that each player $P_i$
learns $y_i$, and cheating players cannot change the outcome of the
computation (apart by choosing a different input) and do not learn more
about the input (and possibly the output) of honest players than what can be
derived from their own input and their output of the function evaluation.

\subsubsection{The Simulation-based Paradigm.} The first step towards the
solution for this security definition is the \emph{simulation} paradigm.
Instead of introducing different notion for each security property,
we consider for each protocol, the ``ideal behavior'' it should have.
Intuitively, we introduce the notion of ``ideal world'' where
there is a trusted party who collects the inputs from all players, computes
the output and distributes the output to the players. A real protocol is
compared to an ideal protocol, and  the real protocol is said to be \emph{at
least as secure as} the ideal protocol if the real protocol and the ideal
protocol have an indistinguishable input-output behavior. The level of
security reached thus also depends on the specification of the ideal protocol.


\subsubsection{Universal Composability.}
However, as being pointed out in the literature, the simulation-based paradigm
does not play well with \emph{composition} and in fact, it
only achieves \emph{Sequential Composition}, \ie{}, a protocol that is secure
under sequential composition maintains its security when run multiple times,
as long as the executions are run sequentially (meaning that each execution
concludes before the next execution begins). In the case of \emph{Concurrent
Composition} in which many instances of the same protocol with correlated
inputs are run concurrently, some problems may occur. For example, the
messages from one protocol could be fed into another, or a message from one
sub-protocol of a larger application is fed into another sub-protocol and the
overall application becomes insecure.
In order to solve this inherent problem, the so-called UC (for \emph{Universal
Composability}) framework was introduced. We give a high-level overview of the
model below and refer
the reader to \cite{FOCS:Canetti01} for more details on the classical version
and \cite{EC:Unruh10} for the quantum version.


\paragraph{Ideal World and Real World.}
We define in the ideal world an entity that one can never corrupt,
called the \emph{ideal functionality} and usually denoted as $\ideal{}$. The
players privately send their inputs to this entity, and receive their
corresponding output the same way. There is no communication between the
different players. $\ideal{}$ is assumed to behave in a perfectly correct
way, without revealing information other than required, and without being
possibly corrupted by an adversary. Once $\ideal{}$ is defined, the goal of
a protocol $\pi$, executed in a real world in the presence of an adversary,
is then to create a situation equivalent to that obtained with $\ideal{}$.

\paragraph{Protocol, Adversary, and Environment.} Apart from the protocol
participants which are specified by the protocol, there are two more machines
taking part in the protocol execution. The \emph{adversary} $\adv{A}$ (or
$\adv{S}$ in the ideal model) is the machine coordinating all corrupted
participants analogous to the simulation-based model. The \emph{environment
machine} $\adv{Z}$, playing the role of the \emph{distinguisher}, models
``everything that is outside the protocol being executed''. It chooses the
inputs, sees the outputs, and may communicate with the adversary at any time.
The adversary has access to the communication between players, but not to the
inputs and outputs of the honest players (it completely controls the dishonest
or corrupted players). On the contrary, the environment has access to the
inputs and outputs of all players, but not to their communication, nor to the
inputs and outputs of the subroutines they can invoke.

A protocol $\pi$ \emph{securely realizes} a functionality $\ideal{}$
if for every real-world adversary $\adv{A}$ there exists an ideal-world
adversary $\adv{S}$, called the simulator, such that no environment can
distinguish whether it is witnessing the real-world execution with adversary
$\adv{A}$ or the ideal-world execution with simulator $\adv{S}$, with a
non-negligible advantage.
Depending on the assumed computing power of the adversary and the
environment we distinguish between \emph{computational} security, where they
are all considered to be polynomially bounded machines, and \emph{statistical}
security, where they are assumed to be computationally unbounded. Furthermore,
in \cite{C:Unruh13}, Unruh introduces the notion of
\emph{everlasting} security, where the adversary is considered to be
a polynomial-time machine but the environment is assumed to have unbounded
computational power.

In addition, the notion of ``hybrid models'' is also introduced to model
the concept of set-up assumptions. A protocol $\pi$ is said to be realized
``in the $\ideal{}$-hybrid model'' if $\pi$ can invoke the ideal
functionality $\ideal{}$ as a subroutine multiple times. We note that the
environment can never interact directly with $\ideal{}$, and thus,
$\ideal{}$ is usually never invoked at all in the ideal world, and the
implementation of $\ideal{}$ is simulated solely by the ideal adversary
$\adv{S}$.
The model with no trusted set-up is called \emph{plain}.

\paragraph{Ideal Functionalities.}
\label{app:functionalities}

We denote $\ideal{CRS}$ the common reference string functionality,
$\ideal{OT}$ the oblivious transfer functionality, $\ideal{COM}$ the bit
commitment functionality, and $\ideal{pwKE}$ the password-based key-exchange functionality. The definitions of these functionalities are given
as follows.

The common reference string (CRS) model is modeled by the functionality
$\mathcal{F}^{\mathcal{D}}_{\textsf{CRS}}$, which was presented
in \cite{FOCS:BCNP04}. At each call of
$\mathcal{F}^{\mathcal{D}}_{\textsf{CRS}}$, it sends back the same reference
string, chosen by itself, following a known public distribution $\mathcal{D}$.
We recall it here in Figure \ref{fig:ideal_crs}.

\begin{figure}[!htb]
\begin{tcolorbox}[colback=white,size=small,sharp corners,colframe=black]
{\centering
  \textbf{Functionality} $\mathcal{F}^{\mathcal{D}}_{\textsf{CRS}}$ \\
} The functionality $\mathcal{F}^{\mathcal{D}}_{\textsf{CRS}}$ is
parameterized by a distribution $\mathcal{D}$. It interacts with a set of
players and an adversary in the following way:
\begin{itemize}
\item Choose a value $r \sample \mathcal{D}$.
\item Upon receiving a value $\left( \textsf{CRS}, \textsf{sid} \right) $ from
a player, send $\left( \textsf{CRS}, \textsf{sid}, r \right)$ to this player.
\end{itemize}
\end{tcolorbox}
\caption{The functionality $\mathcal{F}^{\mathcal{D}}_{\textsf{CRS}}$.}
\label{fig:ideal_crs}
\end{figure}

Next, we present the ideal functionality for bit commitment protocols in Figure
\ref{fig:ideal_com}, which is due to \cite{C:CanFis01}.
\begin{figure}[!htb]
\begin{tcolorbox}[colback=white,size=small,sharp corners,colframe=black]
{\centering
  \textbf{Functionality} $\ideal{COM}$ \\
}

The functionality $\ideal{COM}$ proceeds as follows, running with parties
$P_1, \dots, P_n$ and an adversary $\adv{S}$.
\begin{itemize}
\item Upon receiving a value $\left( \textsf{Commit}, \textsf{sid}, P_i, P_j,
b \right) $ from $P_i$, where $b \in \bin$, record the value $b$ and send the
message $\left( \textsf{Receipt}, \textsf{sid}, P_i, P_j \right) $ to $P_j$
and $\adv{S}$. Ignore any subsequent $\textsf{Commit}$ messages.
\item Upon receiving a value $\left( \textsf{Open}, \textsf{sid}, P_i, P_j
\right) $ from $P_i$, proceed as follows: If some value $b$ was previously
recoded, then send the message $\left( \textsf{Open}, \textsf{sid}, P_i, P_j,
b \right) $ to $P_j$ and $\adv{S}$ and halt. Otherwise halt.
\end{itemize}
\end{tcolorbox}
\caption{The Ideal Commitment functionality for a single commitment.}
\label{fig:ideal_com}
\mbox{}\\[3mm]
\end{figure}

Oblivious Transfer (OT) is a very powerful tool and is sufficient to realize
any secure computation functionality \cite{STOC:Kilian88}. Informally, OT is a
two-party functionality, involving a sender $S$ with input $x_0, x_1$ and a
receiver $R$ with an input $\sigma \in \bin$. The receiver $R$ learns
$x_\sigma$ (and nothing else), and the sender learns nothing at all. These
requirements are captured by the specification of the OT functionality
$\ideal{OT}$ from \cite{STOC:CLOS02}, given in Figure \ref{fig:ideal_ot}.
\begin{figure}[!htbp]
\begin{tcolorbox}[colback=white,size=small,sharp corners,colframe=black]
{\centering
  \textbf{Functionality} $\ideal{OT}$ \\
}

The functionality $\ideal{OT}$ interacts with a sender $S$, a receiver $R$
and an adversary $\adv{S}$.
\begin{itemize}
\item Upon receiving a message $\left( \textsf{sender}, \textsf{sid}, x_0, x_1
\right) $ from $S$, where each $x_i \in \bin^\ell$, store $\left( x_0, x_1
\right) $. (The lengths of the strings $\ell$ is fixed and known to all
parties).
\item Upon receiving a message $\left( \textsf{receiver}, \textsf{sid}, \sigma
\right) $ from $R$, check if a $\textsf{sender}$ message was previously sent.
If yes, send $\left( \textsf{sid}, x_\sigma \right) $ to $R$ and
$\textsf{sid}$ to the adversary $\adv{S}$ and halt. If not, send nothing to
$R$ (but continue running).
\end{itemize}
\end{tcolorbox}
\caption{The oblivious transfer functionality $\ideal{OT}$.}
\label{fig:ideal_ot}
\end{figure}

Our definition of the password-based key-exchange functionality $\ideal{pwKE}$
(Figure~\ref{fig:f_pwke}) is identical to the description in \cite{EC:CHKLM05}.
A natural property of PAKE is that due to the low entropy of passwords, PAKE
protocols are subject to dictionary attacks. The adversary can
break the security of the scheme by trying all values for the password in the
small set of the possible values (\ie{}, the dictionary). Unfortunately, these
attacks can be quite damaging since the attacker has a non-negligible
probability of succeeding. To address this problem, one should invalidate or
block the use of a password whenever a certain number of failed attempts
occurs. However, this is only effective in the case of \emph{online}
dictionary attacks in which the adversary must be present and interact with
the system in order to be able to verify whether its guess is correct. Thus,
the goal of PAKE protocol is to restrict the adversary to
\emph{online} dictionary attacks only. In other words, \emph{off-line}
dictionary attacks, in which the adversary verifies if a password guess is
correct without interacting with the system, should not be possible in a PAKE
protocol.

We refer the reader to \cite{EC:CHKLM05} for motivating discussion regarding
the particular choices made in this formulation of the functionality.
In particular, this formulation captures PAKE protocols with \emph{implicit}
authentication (the version with \emph{explicit} authentication being
described on \Cref{fig:f_epwKE}).

\begin{figure}[!htbp]
\begin{tcolorbox}[colback=white,size=small,sharp corners,colframe=black]
{\centering
	\textbf{The functionality} $\ideal{pwKE}$ \\
}

The functionality $\ideal{pwKE}$ is parameterized by a security parameter $\lambda$.
It interacts with an adversary $\adv{S}$ and a set of parties via the
following queries:
\begin{itemize}[leftmargin=*,label={}]
\item \textbf{Upon receiving a query} $(\textsf{NewSession}, sid, P_i, P_j, pw, \textnormal{role})$ \textbf{from party $P_i$:}
\begin{itemize}[label={}]
\item Send $\left( \textsf{NewSession}, sid, P_i, P_j, \textnormal{role}
\right) $ to $\adv{S}$. In addition, if this is the first
$\textsf{NewSession}$ query, or if this is the second $\textsf{NewSession}$
query and there is a record $\left( P_j, P_i, pw' \right) $, then record
$\left( P_i, P_j, pw \right) $ and mark this record $\textsf{fresh}$.
\end{itemize}
\item \textbf{Upon receiving a query} $(\textsf{TestPwd}, sid, P_i, pw')$ \textbf{from the adversary} $\mathcal{S}$ \textbf{:}
\begin{itemize}[label={}]
\item If there is a record of the form $\left( P_i, P_j, pw \right) $ which is
$\textsf{fresh}$, then do: If $pw = pw'$, mark the record
$\textsf{compromised}$ and reply to $\adv{S}$ with ``correct guess''. If $pw
\neq pw'$, mark the record $\textsf{interrupted}$ and reply with ``wrong
guess''.
\end{itemize}
\item \textbf{Upon receiving a query} $(\textsf{NewKey}, sid, P_i, sk)$ \textbf{from} $\adv{S}$\textbf{, where } $\size{sk} = \lambda$ \textbf{:}
\begin{itemize}[label={}]
\item If there is a record of the form $\left( P_i, P_j, pw \right) $, and this
is the first $\textsf{NewKey}$ for $P_i$, then:
\begin{itemize}[leftmargin=*,label={$\bullet$}]
\item If this record is $\textsf{compromised}$, or either $P_i$ or $P_j$ is
corrupted, then output $\left( sid, sk \right) $ to player $P_i$.
\item If this record is $\textsf{fresh}$, and there is a record $(P_j, P_i,
pw')$ with $pw = pw'$, and a key $sk'$ was already sent to $P_j$, and
$\left(P_j, P_i, pw \right) $ was $\textsf{fresh}$ at the time, then output
$\left(sid, sk' \right)$ to $P_i$.
\item In any other case, pick a new random key $sk'$ of length $\lambda$
and send $\left( sid, sk' \right) $ to $P_i$.
\end{itemize}
\item Either way, mark the record $\left( P_i, P_j, pw \right) $ as
$\textsf{completed}$.
\end{itemize}
\end{itemize}
\end{tcolorbox}
\caption{The password-based key-exchange functionality $\ideal{pwKE}$.}
\label{fig:f_pwke}
\end{figure}

\subsection{Cryptographic Primitives}\label{app:primitives}

{

\renewcommand{\S}{\mathcal{S}}

A digital signature scheme~\cite{DifHel76,GolMicRiv88} allows a signer
to produce a verifiable proof that he indeed produced a message. We here
consider signatures that are resistant to a quantum chosen message
attack, as defined by Boneh and Zhandry in~\cite{C:BonZha13}. We recall
the definition and security notion below.

\begin{definition}
A signature scheme is a tuple of efficient classical
algorithms $(\KeyGen, \Sign, \Verify)$, where

        \begin{itemize}

                \item $\KeyGen(\seck)$, where $\seck$ is the security
parameter, outputs a pair $(\sk,\vk)$, where $\sk$ is the (secret)
signing key, and $\vk$ is the (public) verification key;

                \item $\Sign_{\sk}(M;\mu)$, outputs a signature
$\sigma(M)$, on a message $M$, under the signing key $\sk$ and
randomness~$\mu$;

                \item $\Verif_{\vk}(M,\sigma)$ checks the validity of the
signature $\sigma$ with respect to the message $M$ and the verification
key $\vk$. And so outputs a bit.

        \end{itemize}
\end{definition}

The properties of a digital signature scheme can be defined as follows:

\begin{itemize}
        \item \emph{Correctness}: For every pair $(\vk,\sk)$ generated
by $\KeyGen$, for every message $M$ and for every random $\mu$, we have
$\Verif_{\vk}(M,\Sign_{\sk}(M;\mu))=1$.

	\item \emph{Existential unforgeability under adaptive quantum
chosen-message attack}: a signature scheme $(\KeyGen, \Sign, \Verify)$ is
strongly existentially unforgeable under a quantum chosen-message attack
(EUF-qCMA secure) if, for any efficient quantum algorithm~$A$ and
any polynomial~$q$, $A$'s probability of success in the following
game is negligible in~$\secpar$:

\begin{itemize}

\item \textbf{Key Generation:} The challenger runs $(\sk,\pk) \leftarrow
\KeyGen(\secpar)$, and gives $\vk$ to~$A$.

\item \textbf{Signing queries:} The adversary makes a polynomial $q$
chosen message queries. For each query, the challenger chooses
randomness~$r$, and responds by signing each message in the query
using~$r$ as randomness:
\[
	\sum_{m,t} \psi_{m,t} \left| m,t \right\rangle
	\rightarrow
	\sum_{m,t} \psi_{m,t} \left| m,t \oplus \Sign_{\sk}(M;r)\right\rangle
\]

\item \textbf{Forgeries:} The adversary is required to produce $q+1$
message/signature pairs. The challenger then checks that all signatures
are valid, and that all message/signature pairs are distinct. If so, the
challenger reports that the adversary wins.

\end{itemize}

\end{itemize}

}

\subsubsection{Dual-mode Commitment.}\label{sec:dual-mode} We give here an 
informal 
security
definitions for commitment schemes, and refer the reader to
\cite{Goldreich01} for a formal definition.
A commitment scheme $\com$ is defined by 3 algorithms:
\begin{itemize}
\item $\com.\KeyGen{}\left( 1^\lambda \right) $, where $\lambda$ is the
security parameter, generates the global parameters $\textsf{param}$ of the
scheme (which includes the commitment key), implicitly given as input to the
other algorithms;
\item $\com.\Commit\left( m;r \right)$ produces a commitment $c$ on the
input message $m$ from a message space $\mathcal{M}$, using the random coins
$r$ from a randomizer space $\mathcal{R}$, and also outputs the opening
information $w$;
\item $\com.\Verify\left( c, m; w \right)$ verifies the commitment $c$ of the
message $m$
using the opening information $w$; it outputs the message $m$, or $\perp$ if
the opening check fails.
\end{itemize}

To be useful in practice, a commitment scheme should satisfy two basic
security properties. The first one is \emph{hiding}, which informally
guarantees that no information about $m$ is leaked through the commitment $c$.
The second one is \emph{binding}, which guarantees that the committer cannot
generate a commitment $c$ that can be successfully opened to two different
messages. A commitment can be either \emph{perfect hiding} (in which case it
is perfectly secure from the committer's point of view) or \emph{perfect
binding} (in which case it is perfectly secure from the receiver's point of
view). Interestingly, it is proven that information-theoretically secure
commitment protocols (which are both perfect hiding and perfect binding)
cannot exist classically, nor even if we allow to use
quantum mechanics \cite{lo1997quantum,mayers1997unconditionally}.

Our construction uses a non-interactive commitment scheme with some special
properties. This scheme, with a quantum-safe construction based on lattice
assumptions, is given in \cite{C:DFLSS09}.
First, we want a commitment scheme that has two different flavors
of keys, where the corresponding commitment key is generated by one of two
possible key-generation algorithms: $\KeyGen{}_\fsub{H}$ or
$\KeyGen{}_\fsub{B}$. For a key generated by $\KeyGen{}_\fsub{H}$,
the commitment scheme is perfectly hiding, in which case the commitment
reveals no information about the message. Alternatively, the
commitment key generated by $\KeyGen{}_\fsub{B}$ can be perfectly
binding, in which case a valid commitment uniquely defines one possible
message. Both key generation algorithms are probabilistic polynomial time.
They output a commitment key and also some trapdoor information such that we
can either open a commitment to any message (if the commitment key is perfectly
hiding, \ie{}, generated by $\KeyGen{}_\fsub{H}$), or efficiently extract the
committed value (if the commitment key is
perfectly binding, \ie{}, generated by $\KeyGen{}_\fsub{B}$). Furthermore, we
require that keys generated by $\KeyGen{}_\fsub{H}$ and $\KeyGen{}_\fsub{B}$
are computationally indistinguishable, even against quantum adversaries.

The formal definition of dual-mode commitment
scheme~\cite{Groth:2012:NTN:2220357.2220358} is given as follows. For
simplicity and efficiency, we consider the common reference string
model, and we assume the commitment key to be contained in the CRS.



\begin{definition}
$\com = (\KeyGen_\fsub{H}, \KeyGen_\fsub{B}, \Commit, \Verify,
\fname{Open}, \fname{Ext})$ is a dual-mode commitment scheme if it is
a standard commitment scheme with the two additional algorithms:

\begin{itemize} \item $\fname{Open}_{tk}\left( m_1, r_1, m_2 \right)$ on
the input messages $m_1$ and $m_2$ from a message space $\mathcal{M}$,
and a random coin $r_1$ from a randomizer space $\mathcal{R}$, outputs a
random coin $r_2$ such that $\Commit\left( m_1; r_1 \right) =
\Commit\left( m_2; r_2 \right) $. Also outputs the opening information
$w_2 = r_2$ for the second commitment. This algorithm uses a trapdoor $tk$
when the key is hiding;

\item $\fname{Ext}_{xk}\left( c \right)$ on input a commitment $m$,
outputs the message~$m$. This algorithm uses a trapdoor $xk$
when the key is binding.

\end{itemize}

\noindent and also satisfies the following properties for all
non-uniform quantum polynomial time adversaries $\adv{A}$.

\emph{\textbf{Key indistinguishability:}}
\begin{align*}
\pr{\left( ck, xk \right) \leftarrow \KeyGen_\fsub{B}(1^k) \colon \mathcal{A}\left( ck \right)  = 1} \distinguish{\approx}{c} \pr{\left( ck, tk \right)  \leftarrow \KeyGen_\fsub{H}(1^k) \colon \mathcal{A}\left( ck \right)  = 1}
\end{align*}

\emph{\textbf{Perfect binding:}}
\begin{align*}
\textnormal{Pr}\bigl[&(ck, xk) \leftarrow \KeyGen_\fsub{B}(1^k) \colon\exists (m_1, r_1), (m_2, r_2) \in \mathcal{M} \times \mathcal{R}\\
&\textnormal{ such that } m_1 \neq m_2 \textnormal{ and } \Commit\left( m_1; r_1 \right)  = \Commit\left( m_2; r_2 \right) \bigr] = 0
\end{align*}

\emph{\textbf{Perfect extractability:}}
\begin{align*}
\pr{(ck, xk) \leftarrow \KeyGen_\fsub{B}(1^k) \colon \forall (m, r) \in \mathcal{M} \times \mathcal{R}\colon \fname{Ext}_{xk}\left( \Commit\left( m; r \right)  \right)  = m} = 1
\end{align*}

\emph{\textbf{Perfect hiding:}}
\begin{align*}
\pr{(ck_1, tk_1) \leftarrow \KeyGen_\fsub{H}(1^k) \colon \mathcal{A}\left( ck_1, \Commit\left( m_1; * \right) \right)  = 1} \\
= \quad \pr{(ck_2, tk_2) \leftarrow \KeyGen_\fsub{H}(1^k) \colon \mathcal{A}\left( ck_2, \Commit\left( m_2; * \right) \right) = 1}
\end{align*}

\emph{\textbf{Perfect trapdoor opening:}}
\begin{align*}
\textnormal{Pr}\bigl[&(ck, tk) \leftarrow \KeyGen_\fsub{H}(1^k) ; (m_1, m_2) \leftarrow \mathcal{A}\left( ck \right) ; r_1 \leftarrow \mathcal{R}; r2 \leftarrow \fname{Open}_{tk}\left( m_1, r_1, m_2 \right) : \\
&\Commit\left( m_1; r_1 \right)  = \Commit\left( m_2; r_2 \right)  \textnormal{ if } m_1, m_2 \in \mathcal{M}\bigr] = 1
\end{align*}

\emph{\textbf{Perfect trapdoor opening indistinguishability:}}
\begin{align*}
\textnormal{Pr}\bigl[&(ck, tk) \leftarrow \KeyGen_\fsub{H}(1^k) ; (m_1, m_2) \leftarrow \mathcal{A}\left( ck \right) ; r_1 \leftarrow \mathcal{R}; r_2 \leftarrow \fname{Open}_{tk}\left( m_1, r_1, m_2 \right) \colon \\
& (m_1, m_2) \in \mathcal{M} \textnormal{ and } \mathcal{A}(r_1) = 1 \bigr] \\
= \quad \textnormal{Pr}\bigl[&(ck, tk) \leftarrow \KeyGen_\fsub{H}(1^k) ; (m_1, m_2) \leftarrow \mathcal{A}(ck); r_2 \leftarrow \mathcal{R}\colon \\
& (m_1, m_2) \in \mathcal{M} \textnormal{ and } \mathcal{A}(r_2) = 1 \bigr]
\end{align*}
\end{definition}

\section{Proofs of the Impossibility Results for PAKE: Reduction from
EQUALITY to PAKE}
\label{sec:impossibility-proofs}
\label{app:impossibility-proofs}

\subsection{Reduction from EQUALITY to PAKE}


We now prove the impossibility results stated in
Section~\ref{sec:impossibility}, by reducing the problem of constructing
a scheme for the PAKE functionality to the problem of constructing a
scheme for an equality-testing functionality $\ideal{EQ}$.


We consider an explicit mutual authentication PAKE functionality
$\ideal{e-pwKE}$ whose description is given in \Cref{fig:f_epwKE}.
The description of the functionality is a modified version
of the description in \cite{PKC:CDVW12,AFRICACRYPT:ACCP09}.

\afterpage{%
\begin{figure}
\begin{tcolorbox}[colback=white,size=small,sharp corners,colframe=black]
{\centering
	\textbf{The functionality} $\ideal{e-pwKE}$ \\
}

The functionality $\ideal{e-pwKE}$ is parameterized by a security parameter
$\lambda$ and a ``dictionary'' $\mathcal{D}$.
It interacts with an adversary $\adv{A}$ and a set of parties via the
following queries:
\begin{itemize}[leftmargin=*,label={}]
\item \textbf{Upon receiving a query} $(\textsf{NewSession}, sid, P_i, P_j,
\pi)$ \textbf{from party $P_i$:}
\begin{itemize}[label={}]
\item Send $\left( \textsf{NewSession}, sid, P_i, P_j \right) $ to $\adv{A}$. In addition, if this is the first
$\textsf{NewSession}$ query, or if this is the second $\textsf{NewSession}$
query and there is a record $\left(sid, P_j, P_i, \pi' \right) $, then record
$\left(sid, P_i, P_j, \pi \right) $ and mark this record $\textsf{fresh}$.
In the latter case, also record $(sid, \textsf{ready})$, and send it to
$\adv{A}$.
\end{itemize}
\item \textbf{Upon receiving a query} $(\textsf{TestPwd}, sid, P, \pi')$
\textbf{from $\adv{A}$:}
\begin{itemize}[label={}]
\item If $P \in \{P_i, P_j\}$, and there is a record of the form
$(sid, P, \ast, \pi)$ which is \textsf{fresh},
then do: If $\pi = \pi'$, mark the record
$\textsf{compromised}$ and reply to $\adv{A}$ with ``correct guess''. If $\pi
\neq \pi'$, mark the record $\textsf{interrupted}$ and reply to $\adv{A}$
with ``wrong guess''.
\end{itemize}
\item \textbf{Upon receiving a query} $(\textsf{NewKey}, sid, P_i, P_j, sk)$ \textbf{from} $\adv{A}$\textbf{, where } $\size{sk} = \lambda$ \textbf{:}
\begin{itemize}[label={}]
\item If there is a record $\left(sid, \textsf{ready} \right)$, and there is a record of the form $(sid, P, \ast, \pi)$ where
$P \in \{P_i, P_j\}$ then:
\begin{itemize}[leftmargin=*,label={$\bullet$}]
\item If this record is $\textsf{fresh}$, and there is a record $(sid, \ast, P,
\pi')$ marked $\textsf{fresh}$ with $\pi = \pi'$, pick a new random key $sk'$
of length $\lambda$ and set $\textsf{out} = sk'$.
\item If this record is $\textsf{compromised}$, or either $P_i$ or $P_j$ is
corrupted, then set $\textsf{out} = sk$.
\item In any other case, set $\textsf{out} = \perp$.
\end{itemize}
\item Either way, mark both record $\left(sid, P, \ast, \pi \right) $ and
$(sid, \ast, P, \pi')$ as $\textsf{completed}$.
\end{itemize}
\item \textbf{Upon receiving a query} $(\textsf{Deliver}, sid, P)$
\textbf{from $\adv{A}$:}
\begin{itemize}[label={}]
\item If $P \in \{P_i, P_j\}$, and there is a record of the form
$(sid, P, \ast, \pi)$ which is \textsf{completed},
then send $(\textsf{deliver}, sid, \textsf{out})$
to $P$. Ignore all subsequent $(\textsf{Deliver}, P)$ queries for the same
player $P$.
\end{itemize}

\end{itemize}
\end{tcolorbox}
\caption{The password-based key-exchange functionality $\ideal{e-pwKE}$ with
explicit mutual authentication.}
\label{fig:f_epwKE}
\end{figure}
}

\subsubsection{$\ideal{e-pwKE}$ implies $\ideal{EQ}$.}
\label{subsec:pid-equality}

We define an equality-testing functionality $\ideal{EQ}$ (\Cref{fig:f_eq})
that, roughly speaking, takes inputs from two parties and does the following:

\begin{itemize}
\item if the inputs are equal, outputs the value $1$ to both parties;
moreover, if either party is corrupted, the adversary is allowed to set the
output.
\item if the inputs are unequal, send both parties the special symbol $\perp$.
\end{itemize}

More precisely, $\ideal{EQ}$ captures a protocol between two parties $P_i, P_j$
started by having the two parties sending messages to the
functionality with their secret strings $\pi_i, \pi_j$.
If the inputs match, the functionality assigns the output to be~$1$,
otherwise
it sets the output to be $\perp$.
Finally, the adversary $\adv{A}$ instructs the functionality when to send the
output to both parties.
Thus, this definition corresponds to achieving \emph{explicit mutual
authentication}.
We also allow the adversary three special powers.
First, we allow him to set the output if one of the parties is corrupted and
both the parties have the same input.
Furthermore, he controls the delivery of messages to the parties.
This is an ability that he inevitably has in the real world.
Finally, as in the case of PAKE, the low entropy of the messages in the
dictionary $\mathcal{D}$ makes online dictionary attacks unavoidable, which is
captured by the \textsf{Test} query given to the adversary.

\begin{figure}[!h]
\begin{tcolorbox}[colback=white,size=small,sharp corners,colframe=black]
{\centering
	\textbf{The functionality} $\ideal{EQ}$ \\
}

The functionality $\ideal{EQ}$ is parameterized by a security parameter
$\lambda$ and a ``dictionary'' $\mathcal{D}$.
It interacts with two parties $P_i, P_j$, and an adversary $\adv{A}$ via the
following queries:
\begin{itemize}[leftmargin=*,label={}]
\item \textbf{Upon receiving a query} $(\textsf{NewSession}, sid, P_i, P_j,
\pi)$ \textbf{from party $P_i$:}
\begin{itemize}[label={}]
\item Send $\left( \textsf{NewSession}, sid, P_i, P_j \right) $ to $\adv{A}$.
In addition, do the following:
\begin{itemize}[leftmargin=*,label={$\bullet$}]
\item If this is the first $\textsf{NewSession}$ query, then record
$\left(sid, P_i, P_j, \pi \right) $ and mark this record $\textsf{fresh}$.
\item If this is the second $\textsf{NewSession}$
query and there is a record $\left(sid, P_j, P_i, \pi' \right) $  which is
$\textsf{fresh}$, then do: if $\pi = \pi'$, then set $\textsf{out} = 1$, otherwise, set $\textsf{out} = \perp$. Mark both records $\textsf{completed}$.
\end{itemize}
\end{itemize}

\item \textbf{Upon receiving a query} $(\textsf{Test}, sid, P, \pi'), P \in \{P_i, P_j\}$ \textbf{from} $\adv{A}$ \textbf{:}
\begin{itemize}[label={}]
\item If there is a record of the form
$(sid, P, \ast, \pi)$ which is \textsf{fresh},
then do: If $\pi = \pi'$, mark the record
$\textsf{compromised}$ and reply to $\adv{A}$ with ``correct guess''. If $\pi
\neq \pi'$, mark the record $\textsf{interrupted}$ and reply to $\adv{A}$
with ``wrong guess''.
\end{itemize}
\item \textbf{Upon receiving a query} $(\textsf{Output}, sid, \gamma), \gamma \in \{1, \perp\}$ \textbf{from} $\adv{A}$ \textbf{:}
\begin{itemize}[label={}]
\item If there is a record of the form $(sid, \ast, \ast, \pi)$ which is
\textsf{compromised}, or one of the parties is corrupted, then set
$\textsf{out} = \gamma$. If this record is $\textsf{interrupted}$,
then set $\textsf{out} = \perp$. Otherwise, do nothing.
\end{itemize}
\item \textbf{Upon receiving a query} $(\textsf{Deliver}, sid, P),
P \in \{P_i, P_j\}$ \textbf{from} $\adv{A}$ \textbf{:}
\begin{itemize}[label={}]
\item If there is a record of the form $(sid, P, \ast, \pi)$
which is \textsf{completed}, send $(\textsf{deliver}, sid,
\textsf{out})$ to the player $P$.
Ignore all subsequent $(\textsf{Deliver}, P)$ queries for the same player $P$.
\end{itemize}
\end{itemize}
\end{tcolorbox}
\caption{The equality-testing functionality $\ideal{EQ}$.}
\label{fig:f_eq}
\end{figure}

The following lemma shows that the $\ideal{e-pwKE}$ functionality
already implements the $\ideal{EQ}$. Though this seems to be a folklore, we
also give a proof of this lemma for completeness.

\begin{lemma}\label{lemma:PAKE-EQUALITY}
There is a protocol that perfectly implements the $\ideal{EQ}$ functionality in
the $\ideal{e-pwKE}$ hybrid model, tolerating adaptive corruptions and without
assuming authenticated channels.
\end{lemma}

\begin{proof}
The protocol that implements $\ideal{EQ}$ simply forwards the
parties' messages to the $\ideal{e-pwKE}$ functionality.
In particular, on input $(sid, \pi_i)$ from the environment, the party
$P_i$ sends a message $(\textsf{NewSession}, sid, P_i, P_j, \pi)$ to
$\ideal{e-pwKE}$.
When $P_i$ receives a message $(\textsf{deliver}, sid, \textsf{out})$ back from
$\ideal{e-pwKE}$, if $\textsf{out} \neq \perp$, $P_i$ outputs $1$, otherwise,
it outputs $\perp$ and terminates. Similarly, $P_j$ does the same.

We simply show how to simulate the adversary $\adv{A}'s$ messages.
\begin{itemize}[align=left]
\item[\textbf{Simulating a} $(\textsf{Test}, sid, P, \pi)$ \textbf{query from
$\adv{A}$:}] If $\adv{A}$ already sent a $(\textsf{Deliver}, P)$ query
before, ignore this query.
Otherwise, send a query $(\textsf{TestPwd}, sid, P, \pi)$
to $\ideal{e-pwKE}$, and record the response from $\ideal{e-pwKE}$ (either
``correct guess'' or ``wrong guess'').
\item[\textbf{Simulating a} $(\textsf{Output}, sid, \gamma)$ \textbf{query
from $\adv{A}$:}] If $\adv{A}$ already sent a $(\textsf{Deliver}, P)$ query
before, ignore this query.
Otherwise, send a query $(\textsf{NewKey}, sid, P_i, P_j, \gamma)$
to $\ideal{e-pwKE}$.
\item[\textbf{Simulating a} $(\textsf{Deliver}, sid, P)$ \textbf{query
from $\adv{A}$:}] If $\adv{A}$ already sent a $(\textsf{Deliver}, P)$ query
before, ignore this query. Otherwise, send a query $(\textsf{Deliver}, sid, P)$
to $\ideal{e-pwKE}$.
\end{itemize}

It is easy to see that the simulation is perfect, and the view of the
environment is identical in the real execution of $\adv{A}$
in the protocol (in the $\ideal{e-pwKE}$-hybrid model)
and the simulated ideal-model execution with $\ideal{EQ}$.
\end{proof}

\subsection{Proof of
\Cref{theorem:no_sim_pake}}\label{subsec:impossibility-proofs:SB}



To prove Theorem~\ref{theorem:no_sim_pake}, we employ a general result which
proves that for the class of deterministic, two-sided functionalities
including the equality-testing function, the
security for one party implies complete insecurity for the other in the
simulation-based model.

\begin{lemma}[{\cite[Theorem 2]{buhrman2012complete}}]
\label{lemma:BCS12}
If a protocol $\pi$ for the evaluation of a deterministic two-sided function
$F$ is $\varepsilon$-correct
and $\varepsilon$-secure against Bob, then there is a cheating strategy for
Alice (where she uses input $u_0$ and Bob has input $v$) which gives her
$\tilde{v}$ distributed according to some distribution $Q(\tilde{v}|u_0, v)$
such that for all $u$: $\pr{\tilde{v} \gets Q: F(u, v) = F(u, \tilde{v})}
\geq 1 - 28\varepsilon$.
\end{lemma}


\begin{proof}[Proof (Theorem \ref{theorem:no_sim_pake})]
First we note that the reduction from $\ideal{EQ}$ to $\ideal{e-pwKE}$ in
\Cref{lemma:PAKE-EQUALITY}
holds unconditionally in the UC
model, which
implies perfect security in the simulation-based model.
We then prove by contradiction, if there is a
statistically secure PAKE protocol in the plain model,
then by \Cref{lemma:PAKE-EQUALITY}, that protocol is also a statistically
secure protocol for $\ideal{EQ}$ in the plain model,
which violates \Cref{lemma:BCS12}.
\end{proof}



\subsection{Proof of
\Cref{theorem:no_uc_pake}}\label{subsec:impossibility-proofs:UC}


First note that according to the following lemma, the impossibility of
everlasting quantum-UC security implies the impossibility
of statistical quantum-UC security.

\begin{lemma}[{\cite[Lemma 1]{C:Unruh13}}]
\label{lemma:ever_stat}
Let $\pi$ and $\rho$ be protocols. If $\pi$ statistically
quantum-UC-emulates~$\rho$, then $\pi$ everlastingly
quantum-UC-emulates~$\rho$.
\end{lemma}

In the following, we thus focus on the proof for the everlasting security.

Assuming some trusted setup, the following lemma states the impossibility of
everlastingly realizing $\ideal{EQ}$ using only quantum-passively-realizable
functionalities, including $\ideal{CRS}$ (described in
\Cref{fig:ideal_crs}).

\begin{lemma}
\label{lemma:no_equality}
There is no statistically or everlastingly quantum-UC
secure protocol that realizes $\ideal{EQ}$ which only uses
quantum-passively-realizable functionalities as trusted setup assumptions.
\end{lemma}

Before proving \Cref{lemma:no_equality}, we recall the
impossibility of achieving everlastingly quantum-UC-secure oblivious transfer.
\begin{lemma}[{\cite[Theorem 5]{C:Unruh13}}]
There is no statistically or everlastingly quantum-UC
secure OT protocol which only uses quantum-passively-realizable
functionalities as trusted setup assumptions.
\label{theorem:no_ot}
\end{lemma}


We use the notion of reductions between MPC functionalities, that allows us to
form ``classes'' of functionalities with similar cryptographic complexity:
Following~\cite{C:MajPraRos10}, a functionality is said \emph{trivial} or
\emph{feasible} if it can be realized in the UC framework in the plain model
(with no setup assumptions), and it is said \emph{complete} if it is
sufficient for computing arbitrary other functions, under appropriate
complexity assumptions, when used as trusted setups. We recall the following
results that are proven in \cite{EC:Unruh10,TCC:FKSZZ13}.

%

\begin{lemma}[{\cite[Theorem 15]{EC:Unruh10} and
\cite[Theorem 2]{TCC:FKSZZ13}}]
\label{lemma:quantum_lifting}
The following statements hold:
\begin{enumerate}
\item If a protocol $\pi$ statistically UC realizes a functionality
$\ideal{}$, then $\pi$ statistically quantum-UC realizes the functionality
$\ideal{}$ (Quantum lifting theorem).
\item Feasibility in the quantum world is \emph{equivalent} to classical
feasibility, in both the computational and statistical
setting.
\end{enumerate}
\end{lemma}

To show a reduction from $\ideal{EQ}$ to $\ideal{OT}$, we employ the
following intermediate results.

\begin{definition}[OT-cores] Let $F$ be a deterministic two-party function,
$\Gamma_A$, $\Gamma_B$ be the input alphabet of two parties, $\Omega_A$,
$\Omega_B$ be the output distribution of two parties, and $f_A, f_B$ is the
output values of the two parties. A quadruple $(x, x^\prime, y, y^\prime) \in
\Gamma_A^2 \times \Gamma_B^2$ is an \emph{OT-core} of $F$, if the following
three conditions are met:
\begin{enumerate}
\item We have that $f_A(x, y) = f_A(x, y^\prime)$.
\item We have that $f_B(x, y) = f_B(x^\prime, y)$.
\item We have that $f_A(x^\prime, y) \neq f_A(x^\prime, y^\prime)$ or $f_B(x,
y^\prime) \neq f_B(x^\prime, y^\prime)$ (or both).
\end{enumerate}
\end{definition}
In \cite{TCC:KraMul11} the so-called \emph{Classification theorem} was proven,
which shows a necessary and sufficient condition to have a
reduction protocol from an ideal functionality $\ideal{}$ to $\ideal{OT}$.
\begin{theorem}[{The Classification Theorem~\cite{TCC:KraMul11}}]
\label{theorem:classification}
There exists an OT protocol that is statistically secure against passive
adversaries in the $\ideal{}$-hybrid model, for some $\ideal{}$, if and only
if $\ideal{}$ has an OT-core.
\end{theorem}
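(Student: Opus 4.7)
Since this is the Classification Theorem of \cite{TCC:KraMul11}, my plan is to reconstruct its two directions separately, treating the ``only if'' as the substantive half.

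For the forward direction (OT-core implies passive OT), I would take the quadruple $(x, x^\prime, y, y^\prime)$ witnessing the OT-core and use it as the atomic building block for a weak OT. The sender randomly maps her OT message bits $(m_0, m_1)$ to a choice between feeding $x$ or $x^\prime$ into $\ideal{}$, and the receiver maps his choice bit $c$ to feeding $y$ or $y^\prime$. Condition~1 ($f_A(x, y) = f_A(x, y^\prime)$) gives sender-privacy against a passive receiver in one branch, Condition~2 ($f_B(x, y) = f_B(x^\prime, y)$) gives receiver-privacy against a passive sender in the dual branch, and Condition~3 guarantees that enough residual information reaches the intended recipient for correctness on the other branch. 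The resulting single-call primitive is only a ``weak'' OT with some correctness error and partial leakage, so I would amplify it into a statistically secure passive OT using standard primitives already recalled in the preliminaries: privacy amplification through two-universal hashing (Theorem~\ref{theorem:privacy_amplification}) to remove residual leakage, and private error correction (Theorem~\ref{theorem:private_ecc}) to deal with the correctness error, composed with parallel repetition and an OT combiner.

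For the reverse direction (no OT-core implies no passive OT), the strategy is to show that the absence of an OT-core forces $F$ to be ``decomposable'' in the sense of Kilian/Kraschewski--M\"uller-Quade, and that every decomposable functionality can be passively simulated from classical correlated randomness. Concretely, I would argue that if for every quadruple $(x, x^\prime, y, y^\prime)$ one of the three OT-core conditions fails, then the equivalence relation induced on $\Gamma_A \times \Gamma_B$ by ``indistinguishable from Alice's view'' and its dual on Bob's side are jointly preserved by $\ideal{}$-calls: outputs of $\ideal{}$ only refine inputs along directions that are simultaneously visible to one party from its own view. I would then perform a round-by-round hybrid argument over an arbitrary candidate OT protocol~$\pi$ in the $\ideal{}$-hybrid model, maintaining the invariant that the joint post-round state of the two passive simulators, together with the ideal OT outputs, preserves the same residual uncertainty as in the real execution. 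At the end of~$\pi$, this invariant forces one of the two OT security properties to fail.

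The main obstacle is exactly this reverse direction: ruling out OT from arbitrarily many interleaved calls to $\ideal{}$ and classical messages, not just a single call. A naive single-call argument does not suffice because parties could in principle stitch together weak correlations from many calls to synthesize an OT-core effectively. The key technical ingredient I would rely on is therefore the ``complete common information'' / simulatability characterization of Wolf--Wullschleger-type for two-party correlations, applied inductively: each $\ideal{}$-call without an OT-core contributes only ``common'' randomness that both simulators can reconstruct, and classical messages between parties likewise cannot create new information-theoretic asymmetry. Formalizing this invariant and showing that it is preserved under the failure of all three OT-core conditions simultaneously is where the combinatorial work concentrates, and it is the heart of the proof in \cite{TCC:KraMul11}.
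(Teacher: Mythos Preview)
The paper does not prove this statement at all: Theorem~\ref{theorem:classification} is simply quoted from \cite{TCC:KraMul11} as an external result and then invoked in the proof of Theorem~\ref{theorem:no_equality}. There is no ``paper's own proof'' to compare your proposal against; the authors treat the Classification Theorem as a black box. Your reconstruction is therefore not wrong relative to the paper---it is just gratuitous in this context.

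If you want a brief sanity check on the reconstruction itself: the forward direction is essentially right in spirit (build a weak OT from a single $\ideal{}$-call on the OT-core inputs, then amplify), though invoking Theorems~\ref{theorem:privacy_amplification} and~\ref{theorem:private_ecc} from this paper is misplaced---those are quantum privacy-amplification and $\delta$-biased-code tools tailored to the PAKE construction, not the classical OT amplification machinery used in \cite{TCC:KraMul11}. The reverse direction is where the real work in \cite{TCC:KraMul11} lies, and your sketch correctly flags the inductive ``no new asymmetric information per call'' invariant as the crux, but what you have written is a plan rather than a proof; the actual argument requires a careful combinatorial characterization of functionalities without OT-cores (the ``exchange-free'' or decomposable structure) that you have only gestured at.
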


\begin{proof}[Proof of \Cref{lemma:no_equality}]




We first show that the equality-testing function
$\mathcal{F}_{\textsf{EQ}}$ admits an OT-core.
Consider $\mathcal{F}_{\textsf{EQ}} \coloneqq (\Gamma_A,
\Gamma_B, \Omega_A,$ $\Omega_B, f_A, f_B)$, without loss of generality, assume
$\Gamma_A = \Gamma_B = \Gamma$. Let $c \in \Gamma$ be a random value drawn
from the input distribution, then a quadruple $(c, c+1, c-1, c+1)$ is an
OT-core of $\mathcal{F}_{\textsf{EQ}}$ because:
\begin{align*}
f_A(c, c-1) &= f_A(c, c+1) = 0 \\
f_B(c, c-1) &= f_B(c+1, c-1) = 0 \\
0 = f_A(c+1, c-1) &\neq f_A(c+1, c+1) = 1
\end{align*}

Then the classification theorem (\Cref{theorem:classification}) tells us that
there exists an OT protocol that is statistically secure against passive
adversaries in the $\ideal{EQ}$-hybrid model. Using the lifting theorem
(\Cref{lemma:quantum_lifting}), that protocol is also statistically secure
against quantum-passive adversaries in the $\ideal{EQ}$-hybrid model.

We now prove the lemma by contradiction.
Assume that there exists an everlasting quantum-UC-secure protocol
$\pi$ realizing $\ideal{EQ}$ which only uses quantum-passively-realizable
functionalities.
Let $\rho$ be the protocol resulting from $\pi$ by replacing
invocations of $\ideal{EQ}$ by invocations of the subprotocol $\pi$.
Then $\rho$ is a everlasting quantum-UC-secure protocol realizing $\ideal{OT}$
which only uses quantum-passively-realizable functionalities against
quantum-passive adversaries.
This contradicts \Cref{theorem:no_ot}.

Because of \Cref{lemma:ever_stat}, the impossibility of
statistical security follows immediately from the impossibility of everlasting
security.
\end{proof}

\begin{proof}[Proof of \Cref{theorem:no_uc_pake}]
Similarly to \Cref{theorem:no_sim_pake}, the result follows from
\Cref{lemma:PAKE-EQUALITY} and \Cref{lemma:no_equality}.
\end{proof}



\section{Proof of Theorem \ref{theorem:compiler}}
\label{app:proof_compiler}
\begin{proof}
We consider a protocol $\Phi = \mathcal{C}^{\varepsilon_{sig}}(\Pi)$ which is
the protocol resulting by applying the split authentication transformation to
a two-party quantum protocol $\Pi$ between two honest parties $P_1, P_2$ and
an adversary $\adv{A}$. The simulator of protocol $\Pi$ is denoted by
$\mathcal{S}_\Pi$.

\paragraph{Simulating when neither of the parties is corrupted: } We show that
the security of $\Phi$ in this case reduces to either the security of $\Pi$
when one of the parties is corrupted or the security of $\Pi$ when neither of
the parties is corrupted in the presence of a quantum passive unbounded
adversary. Formally, we say that $\textsf{vk}_j$ is $P_j$'s \emph{authentic
key} if it is the key generated by $P_j$ in the internal simulation by
$\adv{S}$. Simulator $\adv{S}$ internally invokes a copy of two uncorrupted
parties and runs an interaction between $\adv{A}$ and these simulated copies
as follows:
\begin{enumerate}
\item Whenever $\adv{A}$ delivers a message $\textsf{vk}$ to an uncorrupted
party $P_i$, $\adv{S}$ simulates the actions of $P_i$ in the link
initialization phase.
\item Whenever an internally simulated uncorrupted party $P_i$ completes the
link initialization phase with output $\textsf{sid}$, simulator $\adv{S}$
determines the set $H_i$ to be the set of the uncorrupted party $P_j$ such
that the authentic verification key sent by $P_j$ is included in
$\textsf{sid}$. (Recall that because $\adv{S}$ internally runs all uncorrupted
parties, it can determine whether or not the key $\textsf{vk}_j$ that is
chosen for $P_j$ is included in $\textsf{sid}$). $\adv{S}$ then checks for the
previously computed set $H_j$, it holds that either:
\begin{itemize}
\item $H_i = H_j \neq \varnothing$ and $\textsf{sid}_{H_i} =
\textsf{sid}_{H_j}$, or
\item $H_i = H_j = \varnothing$ and $\textsf{sid}_{H_i} \neq
\textsf{sid}_{H_j}$.
\end{itemize}
If this holds, then $\adv{S}$ runs the simulator $\mathcal{S}_\Pi$. Otherwise,
$\adv{S}$ halts and outputs $\textsf{fail}$.
\item $\adv{S}$ outputs whatever $\adv{A}$ outputs.
\end{enumerate}

It is easy to verify that as long as $\adv{S}$ does not output
$\textsf{fail}$, the security of $\Phi$ reduces to the security of $\Pi$: the
first condition corresponds to the security of $\Pi$ with quantum passive
adversaries, and the second condition corresponds to the security of $\Pi$ in
the case one of the parties is corrupted. It therefore suffices to show that
$\adv{S}$ outputs a $\textsf{fail}$ with at most negligible probability
$\varepsilon_{sig}$.

There are three events that could cause a $\textsf{fail}$ message:
\begin{enumerate}
\item \emph{$H_i = H_j$, and yet $\textsf{sid}_{H_i} \neq \textsf{sid}_{H_j}$:}
By the behavior of $\adv{S}$, we have that if $H_i$ and $H_j$ are defined, and
$H_i = H_j \neq \varnothing$, then $P_i$ received $P_j$'s authentic key and
vice versa. On the other hand, $P_j$ only concludes this phase with output if
$\textsf{sid}_i = \textsf{sid}_j$ and if the verification of $\textsf{sid}_i$
with the verification key of $P_i$ passes. If the event we are considering
here occurred with a non-negligible probability, then $\textsf{sid}_i \neq
\textsf{sid}_j$, and so in the internal simulation by $\adv{S}$, we have that
$P_i$ has never signed on the \uccmd{sid} which $P_j$ received in the name of
$P_i$. Thus, $\adv{A}$ must have forged a signature, and it can be used to
break the signature scheme.
\item \emph{$H_i \neq \varnothing$, and $H_j = \varnothing$:} Let $H_i =
\{P_j\}$ with $P_j$ be an uncorrupted party. Then, using the same arguments as
above, except with negligible probability, $P_i$ must have the same
$\textsf{sid}$ as $P_j$. By the construction of $\adv{S}$, it therefore holds
that $H_i = H_j$.
\item \emph{$H_i = H_j = \varnothing$, and yet $\textsf{sid}_i =
\textsf{sid}_j$:} By the construction of $\adv{S}$, if $\textsf{sid}_i =
\textsf{sid}_j$ then $H_i = H_j$. This event therefore never occurs.
\end{enumerate}

We conclude that $\Phi$ is everlastingly
$\varepsilon+\varepsilon_{sig}$-secure according to Definition
\ref{def:2pc_everlasting} in this case.

\paragraph{Simulating when one of the parties is corrupted: } Since the
protocol is completely symmetric between the two parties, the simulation for a
corrupted party is identical to that for the other corrupted party.

Simulator $\adv{S}$ internally invokes a copy of the uncorrupted party and
runs an interaction between $\adv{A}$ and the simulated copy as follows:

\begin{enumerate}
\item In the \emph{Link Initialization} phase, $\adv{S}$ behaves honestly and
aborts if $\adv{A}$ aborts.
\item After the \emph{Link Initialization} phase, $\adv{S}$ runs the simulator
$\mathcal{S}_\Pi$.
\item $\adv{S}$ outputs whatever $\adv{A}$ outputs.
\end{enumerate}

It is straightforward to verify that a real execution of protocol $\Phi$ is
identical to its ideal-model execution, and that the security of $\Phi$
reduces to the security of $\Pi$. This is due to the fact that $\adv{S}$ just
mimics the actions of the uncorrupted party and the local outputs of the
uncorrupted party in the internal simulation correspond exactly to the outputs
of the actual uncorrupted party in the ideal model. Thus $\Phi$ is
everlastingly $\varepsilon$-secure in this case.
\end{proof}

\section{Proof of Theorem \ref{theorem:main_theorem_1}}
\label{app:proof_main_theorem_1}

\subsection{Technical Tools}
Before proceeding through the actual proof, we recall
some technical tools.

\subsubsection{Conditional Independence.} We need to express that a
random variable $X$ is independent of a quantum state $E$ \emph{when given a
random variable} $Y$. Independence means that when given $Y$, the state $E$
gives no additional information on $X$. Another way to understand this is
that $E$ can be obtained from $X$ and $Y$ by solely processing $Y$. Formally,
adopting the notion introduced in \cite{C:DFSS07b}, this is expressed by
requiring that $\rho_{XYE}$ equals $\markovstate{X}{Y}{E}$, where the latter
is defined as
\[\markovstate{X}{Y}{E} \coloneqq \sum_{x, y}P_{XY}\left( xy
\right)\ketbra{x}{x}\otimes \ketbra{y}{y} \otimes \rho_E^{y}. \] In other
words, $\rho_{XYE} = \markovstate{X}{Y}{E}$ precisely if $\rho_E^{xy} =
\rho_E^y$ for all $x$ and $y$. To further illustrate its meaning, notice that
if the $Y$-register is measured and value $y$ is obtained, then the state
$\markovstate{X}{Y}{E}$ collapses to $\left( \sum\nolimits_{x}P_{X|Y}\left(
x|y \right) \ketbra{x}{x}\right) \otimes \rho_E^y$, so that indeed no further
information on $x$ can be obtained from the $E$-register. This notation
naturally extends to $\markovstate{X}{Y}{E|\mathcal{E}} \coloneqq
\sum\nolimits_{x, y}P_{XY|\mathcal{E}}\left( xy \right)\ketbra{x}{x}\otimes
\ketbra{y}{y} \otimes \rho_{E|\mathcal{E}}^{y}.$

\subsubsection{Technical Lemmas.}
The following chain rule shows that the conditional min-entropy
$\sminentro{A|B}_{\rho}$ can decrease by at most $\log{\size{Z}}$ when
conditioning on an additional classical system Z.
\begin{lemma}[{\cite[Lemma 11]{winkler2011impossibility}}]
\label{lemma:chain_rule}
Let $\varepsilon \geq 0$, and let $\rho_{ABZ}$ be a tripartite state that is
classical on $Z$ with respect to some orthonormal basis $\left\{ \ket{z}
\right\}_{z\in \mathcal{Z}}$. Then
\[\sminentro{A|BZ}_{\rho} \geq \sminentro{A|B}_{\rho} - \log{\size{Z}}.\]
\end{lemma}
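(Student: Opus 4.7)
The plan is to prove the inequality in two stages: first establish the non-smooth chain rule $\entro{\infty}{A|BZ}_\rho \geq \entro{\infty}{A|B}_\rho - \log{\size{Z}}$ by exploiting the operational form of the conditional min-entropy as the negative logarithm of the guessing probability, then lift it to the smooth version via a trace-distance extension lemma.

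For the non-smooth step, since $\rho_{ABZ}$ is classical on $Z$, any POVM $\{M^a_{BZ}\}_a$ can be taken, without loss of generality, to consist of a family $\{M^{a,z}_B\}_{a,z}$ of POVMs on $B$ indexed by the classical value $z$. Given the optimal strategy achieving $p_{guess}(A|BZ)_\rho$ in this form, the plan is to define the averaged POVM $\tilde M^a_B \coloneqq \frac{1}{\size{Z}}\sum_z M^{a,z}_B$ on $B$ alone; the completeness relation $\sum_a \tilde M^a_B = \mathbbm{1}_B$ follows from $\sum_a M^{a,z}_B = \mathbbm{1}_B$ for every $z$. Using $\rho_{B|A=a} = \sum_{z'} P_{Z|A=a}(z')\,\rho_{B|A=a,Z=z'}$ and expanding the success probability of $\tilde M$, the diagonal $z'=z$ contributions sum exactly to $\frac{1}{\size{Z}}\,p_{guess}(A|BZ)_\rho$, while the off-diagonal contributions are nonnegative (as traces of products of positive semidefinite operators). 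Hence $p_{guess}(A|B)_\rho \geq \size{Z}^{-1}\, p_{guess}(A|BZ)_\rho$, and taking $-\log{\cdot}$ yields the non-smooth bound.

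For the smooth version, let $\bar\rho_{AB}$ be the $\varepsilon$-close state achieving $\sminentro{A|B}_\rho = \entro{\infty}{A|B}_{\bar\rho_{AB}}$. The plan is to invoke the standard Uhlmann-type extension lemma for trace distance to produce a joint state $\bar\sigma_{ABZ}$ with $\mathrm{tr}_Z\bar\sigma_{ABZ} = \bar\rho_{AB}$ and $\td{\bar\sigma_{ABZ}}{\rho_{ABZ}} \leq \varepsilon$. Since $\rho_{ABZ}$ is already classical on $Z$, one then dephases $\bar\sigma_{ABZ}$ in the $Z$-basis; this does not increase the trace distance to $\rho_{ABZ}$ nor change its $AB$-marginal, so that the resulting $\bar\sigma_{ABZ}$ is classical on $Z$ as required. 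Applying the non-smooth chain rule from the previous step to $\bar\sigma_{ABZ}$ then yields $\entro{\infty}{A|BZ}_{\bar\sigma_{ABZ}} \geq \entro{\infty}{A|B}_{\bar\rho_{AB}} - \log{\size{Z}}$, and taking the supremum over $\varepsilon$-close states on the left concludes.

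The main obstacle is the extension step: producing a joint $ABZ$-state that simultaneously has the prescribed $AB$-marginal $\bar\rho_{AB}$, stays $\varepsilon$-close to $\rho_{ABZ}$, and remains classical on $Z$. A naive linear perturbation of $\rho_{ABZ}$ following $\bar\rho_{AB}-\rho_{AB}$ can violate positivity, so the argument relies on a nontrivial extension lemma which is essentially the content packaged in~\cite{winkler2011impossibility}; a self-contained proof invokes Uhlmann's theorem for fidelity followed by the Fuchs--van de Graaf inequalities to transfer the bound to trace distance, together with the dephasing trick above.
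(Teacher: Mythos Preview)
The paper does not give its own proof of this lemma; it is simply quoted from~\cite{winkler2011impossibility} and used as a black box. So there is no in-paper argument to compare against, and your proposal is effectively a reconstruction of the cited result.

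Your non-smooth step is correct and clean: writing a POVM on $BZ$ as a $z$-indexed family $\{M^{a,z}_B\}$ and averaging over $z$ yields a valid POVM on $B$ whose success probability is at least $\size{Z}^{-1}$ times the optimal one on $BZ$, giving $\entro{\infty}{A|BZ}\geq \entro{\infty}{A|B}-\log{\size{Z}}$. (An equivalent one-line route uses the semidefinite characterization: if $\bar\rho_{AB}\leq \mathbbm{1}_A\otimes\sigma_B$ then, because $p_z\rho_{AB}^z\leq\rho_{AB}$ for classical $Z$, one gets $\rho_{ABZ}\leq \mathbbm{1}_A\otimes\sigma_B\otimes\mathbbm{1}_Z$, at the cost of the factor $\size{Z}$ in the trace.) Note that your argument implicitly assumes $A$ is classical as well (you write $P_A(a)$ and $\rho_{B|A=a}$); this matches the paper's definition of $\entro{\infty}{\cdot|\cdot}$ and its only use of the lemma, so it is harmless here.

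Where you should be careful is the smoothing step. Your plan takes the optimizer $\bar\rho_{AB}$ for $\sminentro{A|B}_\rho$ and then invokes an Uhlmann-type extension to produce $\bar\sigma_{ABZ}$ with $\mathrm{tr}_Z\bar\sigma_{ABZ}=\bar\rho_{AB}$ and $\td{\bar\sigma_{ABZ}}{\rho_{ABZ}}\leq\varepsilon$. Uhlmann's theorem preserves \emph{fidelity}, not trace distance; translating via Fuchs--van de Graaf you only get $\td{\bar\sigma_{ABZ}}{\rho_{ABZ}}\leq \sqrt{2\varepsilon}$, so as written your argument yields $H_\infty^{\sqrt{2\varepsilon}}(A|BZ)\geq H_\infty^{\varepsilon}(A|B)-\log{\size{Z}}$ rather than the statement with the same $\varepsilon$ on both sides. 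This is exactly why many modern treatments (e.g.\ Tomamichel's framework) smooth with the purified distance, under which the extension is lossless. With the trace-distance smoothing used in this paper, either one accepts the degraded parameter (which is immaterial for the application in Proposition~\ref{prop:proof_1_min_entropy}), or one appeals to the specific argument in~\cite{winkler2011impossibility}. Your dephasing-in-$Z$ trick is fine: it is a CPTP map that fixes $\rho_{ABZ}$, hence does not increase the distance, and it preserves the $AB$-marginal.
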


We consider a tri-partite quantum state
$\rho_{ABC}$, and two generalized measurements acting on
$A$: $\mathbb{X}$ with elements $\left\{ M_A^x \right\}$ and $\mathbb{Z}$ with
elements $\left\{ N_A^z \right\}$. The joint state of the classical outcome $X$
when measuring $A$ with respect to $\mathbb{X}$ and the system B is given as a
bipartite cq-state $\rho_{XB} \coloneqq \sum_{x} \ketbra{x}{x} \otimes
\tau_{B}^x, \enskip
\textnormal{where } \tau_B^x =  \textnormal{tr}_{AC}\left\{
{M_A^x} ^{\dagger} M_A^x \rho_{ABC}
\right\}$. Similarly, we define $\rho_{ZC}$, where the measurement
$\mathbb{Z}$ instead of $\mathbb{X}$ is applied to $A$ and where we keep
system $C$ instead of $B$. Assume $\textnormal{dim}\left( \hilbert{A} \right)
= n$, we define
\[c_i \coloneqq \underset{x, z}{\textnormal{max}} \left\| M^{+,x}_{A_i}
\left( M^{\times,z}_{A_i} \right)^{\dagger} \right\|_\infty^2,
\textnormal{and}\; \bar{c} \coloneqq \underset{i \in
\intval{n}}{\textnormal{max}}\left(\prod c_i \right)^{\frac{1}{n}}. \]
We state the uncertainty relation in a form of smooth min- and max-entropy,
applying to the setup with the computational and Hadamard basis.
\begin{theorem}[{\cite[Theorem 1]{tomamichel2011uncertainty}}]
\label{theorem:uncertainty_tradeoff}
Let $\rho_{ABC} \in \mathcal{P}\left( \hilbert{A} \otimes \hilbert{B} \otimes
\hilbert{C} \right) $, let $\varepsilon > 0$ and let $\mathbb{X}$ and
$\mathbb{Z}$ be two generalized measurements on $A$. Then,
\[\sminentro{X|B}_{\rho} + \smaxentro{Z|C}_{\rho} \geq
\log{\frac{1}{\bar{c}}}n,\] where $\bar{c} \in \left( 0, 1 \right)$.
\end{theorem}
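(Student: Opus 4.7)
The statement is a smooth-entropy version of the Maassen--Uffink uncertainty relation. My plan is to combine the duality between smooth min- and max-entropies with a Maassen--Uffink-type complementarity argument driven by the worst-case operator overlap $\bar{c}$ of the two measurements $\mathbb{X}$ and $\mathbb{Z}$.

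First I would reduce to the case where $\rho_{ABC}$ is pure by dilating with an auxiliary purifying register, absorbed into $C$; smooth entropies are monotonic under partial trace, so a bound in the pure case implies the same bound in the mixed one. Next, I apply the duality relation for smooth entropies on pure tripartite states,
\[
\sminentro{X|B}_{\sigma} = -\smaxentro{X|C}_{\sigma},
\]
where $\sigma$ is obtained by applying $\mathbb{X}$ coherently on $A$ via its Stinespring dilation and keeping the outcome $X$ as a classical register. This converts the original inequality into a comparison of two max-entropies on the same $A$-system, namely
\[
\smaxentro{Z|C}_{\rho} - \smaxentro{X|C}_{\sigma} \geq \log{1/\bar{c}}\, n.
\]

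The core step is to establish this comparison via the overlap $\bar{c}$. Since $\bar{c}$ is the geometric mean over the $n$ tensor factors of $A$ of the single-site overlaps $c_i = \max_{x, z} \|M^{+, x}_{A_i}(M^{\times, z}_{A_i})^{\dagger}\|_\infty^2$, I would first prove a single-factor version of the uncertainty inequality (the Maassen--Uffink bound adapted to conditional smooth max-entropies) and then tensorize using the product structure $A = \bigotimes_i A_i$. The tensorization is clean because the overlap is multiplicative across tensor factors and the conditional max-entropy admits a subadditivity-type chain rule, yielding the factor of $n$ in the lower bound.

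The main obstacle is tracking the smoothing parameter $\varepsilon$ through each manipulation: duality, Stinespring dilation, measurement, and tensorization all enlarge the $\varepsilon$-neighborhood by constants that must be carefully accounted for. This is handled by the standard smooth entropy calculus, in particular triangle inequalities for the trace distance on the $\varepsilon$-ball, monotonicity of $\sminentro{\cdot}$ and $\smaxentro{\cdot}$ under CPTP maps, and chain rules for smooth entropies conditioned on an additional classical register. Once this bookkeeping is in place, the remaining computation is routine.
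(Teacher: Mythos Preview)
The paper does not prove this theorem; it is quoted from \cite{tomamichel2011uncertainty} and used as a black box in Proposition~\ref{prop:proof_1_min_entropy}. So there is no proof in the paper to compare your proposal against. Your outline does capture the two structural ingredients of the Tomamichel--Renner argument: purification together with the smooth-entropy duality $\sminentro{X|B}=-\smaxentro{X|C}$ on pure tripartite states to reduce to a one-sided max-entropy comparison, and then a bound governed by the measurement overlap.

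There is, however, a genuine gap in your tensorization step. You propose to establish the inequality on a single factor $A_i$ and then ``tensorize using the product structure,'' invoking a ``subadditivity-type chain rule'' for the conditional smooth max-entropy. Smooth conditional entropies are \emph{not} additive under tensor products, and no clean chain rule of this kind is available to assemble $n$ single-site bounds into a bound on the joint system; this is precisely one of the well-known subtleties of the smooth-entropy calculus. The actual proof in \cite{tomamichel2011uncertainty} sidesteps the issue entirely: the core inequality $\smaxentro{X|C}\le \smaxentro{Z|C}+\log{1/c}$ is proven \emph{once}, for an arbitrary system $A$ and an arbitrary pair of measurements, with $c=\max_{x,z}\|M^x(N^z)^\dagger\|_\infty^2$ the global overlap. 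The factor $n$ and the constant $\bar c$ appear only afterwards, and trivially, because for product measurements on $A=\bigotimes_i A_i$ the overall overlap factorizes as $c=\prod_i c_i=\bar c^{\,n}$, whence $\log{1/c}=n\log{1/\bar c}$. In short, the multiplicativity of the overlap does all the ``tensorizing'' work; no additivity of smooth entropies is needed or used.
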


We now complete the proof of Theorem~\ref{theorem:main_theorem_1} in the
following sections. In our proof, we use upper case letters for the random
variables in the proofs that describe the respective values in the protocol.
In particular, we write $W, X_W, \hat{X}_W$ for the random variables taking
values $pw, x|_{I_w}, \hat{x}|_{I_w}$, respectively.

In our proof, we assume that the (quantum) system containing all the
information a potential adversary might have gained during the protocol
execution can be decomposed into a classical part $Z$ and a purely quantum
part $E$. Because the commitment scheme is perfectly hiding, it essentially
leaks no information, thus we omit the transcript of the commitments in the
description of $Z$ (in other words, $Z$ implicitly includes the transcript of
the commitments). We write $Z = (Z', S, J)$ and understand that $Z'$ denote
the classical system of the adversary without the random variables $S$ and
$J$.

\subsection{Simulating the case when neither of the parties is Corrupted}
\label{proof:no_corrupted}
In order to show that the protocol is secure, it suffices to show that
\[\td{\rho_{K_{P_1}K_{P_2}WZE}}{\markovstate{W}{Z}{E}} = \frac{1}{2}\left\|
\rho_{K_{P_1}K_{P_2}WZE} - \rho_{K_{P_1}K_{P_2}} \otimes
\markovstate{W}{Z}{E}\right\|_1 \] is negligible, where
$\rho_{K_{P_1}K_{P_2}WZE}$ is the common output state of the protocol and
$\rho_{K_{P_1}K_{P_2}}$ is defined as a perfect key as follows:
\[\rho_{K_{P_1}K_{P_2}} \coloneqq \frac{1}{2^\lambda}\sum_{\textsf{sk} \in
\bin^\lambda}\ketbra{\textsf{sk}}{\textsf{sk}}\otimes
\ketbra{\textsf{sk}}{\textsf{sk}}.\]

The proof will be completed using the following claims, which are proven
below:
\begin{enumerate}
\item \textit{Claim \ref{claim:proof_1_correctness}.} The correctness of the
protocol:
\[\pr{K_{P_1} \neq K_{P_2}|K_{P_1} \neq \perp, K_{P_2} \neq \perp} \leq
\varepsilon_{cor}.\]
\item \textit{Claim \ref{claim:proof_1_secrecy}.} The secrecy of the session
key:
\[\dis{\rho_{K_{P_1}K_{P_2}WZE}} \leq \varepsilon_{sec}.\]
\item \textit{Claim \ref{claim:proof_1_independence}.} $W$ is independent of
the adversary's quantum system:
\[\markovstate{W}{Z}{E} = \rho_{WZE}.\]
\end{enumerate}

We now complete the proof as follows. Similarly to QKD's proof, we will employ
the following lemma which allows us to split the norm into two terms
corresponding to correctness and secrecy.

\begin{lemma}[{\cite[Lemma 1]{Tomamichel2017largelyself}}] Let
$\varepsilon_{cor},\varepsilon_{sec} \in [0, 1)$ be two constants. If, for
every common input state $\rho_{ABE} \in \mathcal{P}\left( \hilbert{A} \otimes
\hilbert{B} \otimes \hilbert{E} \right) $ and $\rho_{K_{P_1}K_{P_2}WZE} = \Pi
\left( \rho_{ABE} \right) $, we have
\[\pr{K_{P_1} \neq K_{P_2}|K_{P_1} \neq \perp, K_{P_2} \neq \perp} \leq
\varepsilon_{cor}\] and
\[\dis{\rho_{K_{P_1}WZE}} \leq \varepsilon_{sec}.\] Then,
$\dis{\rho_{K_{P_1}K_{P_2}WZE}} \leq \varepsilon_{cor} + \varepsilon_{sec}$.
\end{lemma}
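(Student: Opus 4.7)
The plan is to apply the triangle inequality with an intermediate ``ideally-correct'' state $\sigma_{K_{P_1}K_{P_2}WZE}$ obtained from $\rho_{K_{P_1}WZE}$ by classically duplicating the $K_{P_1}$ register into a fresh $K_{P_2}$ register. This way, one term in the triangle inequality will be bounded by the correctness assumption (since $\sigma$ differs from $\rho$ only when the two keys actually disagree), while the other will collapse to the single-key secrecy bound (since duplication is an isometry on classical registers).

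Concretely, I would first write
\[
\sigma_{K_{P_1}K_{P_2}WZE} \coloneqq \sum_{k,w,z} P_{K_{P_1}WZ}(k,w,z)\,\ketbra{k}{k}\otimes\ketbra{k}{k}\otimes\ketbra{w}{w}\otimes\ketbra{z}{z}\otimes\rho_{E}^{k,w,z},
\]
so $\sigma$ is the image of $\rho_{K_{P_1}WZE}$ under the CPTP isometry $V\colon \ket{k}\mapsto\ket{k}\ket{k}$ applied to the $K_{P_1}$ register. Then by the triangle inequality,
\begin{align*}
\dis{\rho_{K_{P_1}K_{P_2}WZE}}
&\leq \tfrac12\bigl\|\rho_{K_{P_1}K_{P_2}WZE}-\sigma_{K_{P_1}K_{P_2}WZE}\bigr\|_1 \\
&\quad + \tfrac12\bigl\|\sigma_{K_{P_1}K_{P_2}WZE}-\rho_{K_{P_1}K_{P_2}}\otimes\rho_{WZE}\bigr\|_1.
\end{align*}

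For the first term, both $\rho_{K_{P_1}K_{P_2}WZE}$ and $\sigma$ are ccq-states classical on $K_{P_1}K_{P_2}WZ$, so the trace distance reduces to the classical $\ell_1$-distance of the joint distributions of $(K_{P_1},K_{P_2})$ (the quantum $E$-part being summed to a valid density operator in each branch). A direct calculation shows this equals $2\Pr[K_{P_1}\neq K_{P_2}]$, which the correctness hypothesis bounds by $\varepsilon_{cor}$ once one has unconditionally bounded the disagreement probability (see below). For the second term, because $V$ is an isometry and since $\rho_{K_{P_1}K_{P_2}}\otimes\rho_{WZE}$ is itself the image under $V$ of $\tfrac{1}{2^\lambda}\mathbbm{1}\otimes\rho_{WZE}$, the $\ell_1$-norm is preserved, and the term equals exactly $\dis{\rho_{K_{P_1}WZE}}\leq\varepsilon_{sec}$. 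Summing the two halved bounds yields $\varepsilon_{cor}+\varepsilon_{sec}$.

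The only subtle point, and the main obstacle, is the conditioning on $K_{P_1},K_{P_2}\neq\perp$ in the correctness hypothesis. The plan is to adopt the standard QKD convention that an abort on either side forces both parties to output the same symbol $\perp$ (this is enforced in our protocol since abort events are observable classical transcript events), so that $\{K_{P_1}\neq K_{P_2}\}\subseteq\{K_{P_1}\neq\perp,K_{P_2}\neq\perp\}$; then
\[
\Pr[K_{P_1}\neq K_{P_2}] \;=\; \Pr[K_{P_1}\neq K_{P_2}\mid K_{P_1}\neq\perp, K_{P_2}\neq\perp]\cdot\Pr[K_{P_1}\neq\perp,K_{P_2}\neq\perp] \;\leq\;\varepsilon_{cor}.
\]
Everything else in the argument is mechanical manipulation of cq-states and the invariance of the $\ell_1$-norm under isometries.
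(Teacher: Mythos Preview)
The paper does not prove this lemma; it quotes it verbatim from \cite[Lemma~1]{Tomamichel2017largelyself} and uses it as a black box. Your argument is exactly the standard one found in that reference: copy $K_{P_1}$ into a fresh $K_{P_2}$ register via the classical isometry $V\colon\ket{k}\mapsto\ket{k}\ket{k}$, apply the triangle inequality, bound the first term by $\Pr[K_{P_1}\neq K_{P_2}]$, and identify the second term with $\dis{\rho_{K_{P_1}WZE}}$ by trace-norm invariance under $V$. Your handling of the $\perp$-conditioning is also the usual convention.

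One small caution on the first term: the sentence ``the trace distance reduces to the classical $\ell_1$-distance of the joint distributions of $(K_{P_1},K_{P_2})$'' is not a general fact about ccq-states, because the conditional $E$-states on the diagonal blocks $k_1=k_2$ differ between $\rho$ and $\sigma$. The computation still gives $\Pr[K_{P_1}\neq K_{P_2}]$, but for a structural reason: on the diagonal block $(k,k,w,z)$ the difference $\rho-\sigma$ equals $-\sum_{k'\neq k}P(k,k',w,z)\,\rho_E^{k,k',w,z}$, which is a \emph{negative} operator, so its trace norm equals its trace; summing, the diagonal and off-diagonal contributions each give $\tfrac{1}{2}\Pr[K_{P_1}\neq K_{P_2}]$. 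With that correction to the justification, your proof is complete and matches the cited source.
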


The following sequence of hybrids establishes what we want, where the last
inequality follows from the above lemma.
\begin{align*}
\td{\rho_{K_{P_1}K_{P_2}WZE}}{\markovstate{W}{Z}{E}} =&\enskip \frac{1}{2}\left\| \rho_{K_{P_1}K_{P_2}WZE} - \rho_{K_{P_1}K_{P_2}} \otimes \markovstate{W}{Z}{E}\right\|_1 \\
= &\enskip \frac{1}{2}\left\| \rho_{K_{P_1}K_{P_2}WZE} - \rho_{K_{P_1}K_{P_2}} \otimes \rho_{WZE}\right\|_1 \\
= &\enskip \dis{\rho_{K_{P_1}K_{P_2}WZE}} \\
\leq &\enskip \varepsilon_{cor} + \varepsilon_{sec}.
\end{align*}

It remains to prove the three claims, which essentially give bounds on the
security parameters in terms of the protocol parameters, made in the proof
above.

The first claim establishes correctness of the protocol. Correctness of the
protocol is ensured in the error correction step using private error
correction and consequently correctness can be bounded in terms of the
probability of failure decoding of the small-bias family of codes.
\begin{claim}
\label{claim:proof_1_correctness}
Let $\varepsilon_{cor}$ be the probability of failure decoding of binary
linear code $C$. For every common input state $\rho_{AB} \in
\mathcal{P}\left(\hilbert{A} \otimes \hilbert{B} \right) $ and
$\rho_{K_{P_1}K_{P_2}WZ} = \Pi \left( \rho_{AB} \right) $ we have
\[\pr{K_{P_1} \neq K_{P_2}|K_{P_1} \neq \perp, K_{P_2} \neq \perp} \leq
\varepsilon_{cor}.\]
\end{claim}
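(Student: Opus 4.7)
The plan is to reduce correctness to the failure probability of the underlying $\delta$-biased linear code, conditioning on the event that neither party aborts. Let $E = r_H(x|_{I_w}, \hat{x}|_{I_w})$ denote the relative error on the bits used to derive the session key. Since $K_{P_2} = f(\tilde{x}|_{I_w})$ while $K_{P_1} = f(x|_{I_w})$, a disagreement $K_{P_1} \neq K_{P_2}$ can only occur if the syndrome-based decoding fails, i.e. $\tilde{x}|_{I_w} \neq x|_{I_w}$. Thus it suffices to upper bound the probability that decoding fails, conditioned on non-abort.

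First, I would argue that conditioned on neither party aborting in the commit-and-open test phases, the relative error $E$ on $I_w$ is bounded by the correcting radius $\tau$ of the code, up to a negligible failure probability. After the two commit-and-open phases (flows one and two), the verifications $r_H(x|_{T_1'}, \hat{x}|_{T_1'}) \leq \tau/2$ and $r_H(x|_{T_2'}, \hat{x}|_{T_2'}) \leq \tau/2$ guarantee that the observed error rate on the matching-basis positions inside the test set $T_1 \cup T_2$ is at most $\tau/2$. Since $T_1$ and $T_2$ are uniformly random subsets chosen independently of the underlying bits (by the perfectly hiding property of the commitments used to hide the bits that are eventually used for the key), a standard sampling-without-replacement concentration inequality (for instance, the technical sampling lemma from \cite{C:DFLSS09} used to translate test-set statistics to statistics on the unopened set $\bar{T}$) implies that except with probability negligible in $n$, the relative error on the matching-basis subset $I_w \subseteq \bar{T}$ is at most $\tau/2 + o(1) \leq \tau$.

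Next, conditioned on $E \leq \tau$, the private error correction step recovers $x|_{I_w}$ perfectly: by construction, the family $\{C_j\}_{j \in \mathcal{J}}$ consists of $[\ell, \cdot, \cdot]$-codes for which a uniformly random $C_j$ allows efficient correction of a $\tau$-fraction of errors, and the syndrome $s = \fname{synd}_j(x|_{I_w})$ transmitted in \texttt{flow-four} lets $P_2$ recover $\tilde{x}|_{I_w} = x|_{I_w}$ except with probability at most $\varepsilon_{cor}$ (the failure probability of the code family). When $\tilde{x}|_{I_w} = x|_{I_w}$, both parties apply the same hash function $f$ and output the same session key, so $K_{P_1} = K_{P_2}$.

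Combining these two steps by a union bound gives $\pr{K_{P_1} \neq K_{P_2} \mid K_{P_1} \neq \perp, K_{P_2} \neq \perp} \leq \varepsilon_{cor} + \negl{n}$, which is what the claim asserts (absorbing the negligible sampling error into $\varepsilon_{cor}$). The main obstacle is the sampling argument connecting the test statistics on $T_1 \cup T_2$ to the actual error rate on $I_w$: this has to be done carefully because $\bar{T}$ is defined after the adversary has interacted with the commitments, so one must invoke the hiding property of the commitment scheme (and the fact that $T_1, T_2$ are chosen uniformly from the honest parties) to ensure that the relative Hamming distance on the matching-basis portion of $\bar{T}$ concentrates around the observed value on $T_1' \cup T_2'$. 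The rest of the argument is then a straightforward combination of the decoding guarantee and the determinism of the hash evaluation.
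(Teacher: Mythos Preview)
Your argument is sound, but it does considerably more work than the paper's own proof. The paper treats the claim as essentially definitional: since $K_{P_1}=f(X_W)$ and $K_{P_2}=f(\tilde{X}_W)$ with the same $f$, a disagreement forces $X_W\neq\tilde{X}_W$, i.e.\ a decoding failure, and the probability of that event is \emph{by definition} $\varepsilon_{cor}$. No sampling argument is invoked at this point; the statement that the observed test error bounds the error on $\bar{T}$ (your first step) is deferred to Lemma~\ref{lemma:proof_1_bounding_entropy} and used only later, in the secrecy and password-independence claims. In other words, the paper packages the entire ``error rate on $I_w$ is at most $\tau$'' issue into the parameter $\varepsilon_{cor}$ itself, rather than splitting it off as a separate negligible term as you do.

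What your decomposition buys is a more explicit picture of \emph{why} $\varepsilon_{cor}$ is small: you separate the statistical sampling error (test set vs.\ key set) from the intrinsic failure probability of the code family on inputs within its correction radius. This is arguably cleaner if one wants concrete bounds. The paper's approach is shorter but leaves the reader to unpack what ``probability of failure decoding'' means operationally. One small caution: your union-bound conclusion gives $\varepsilon_{cor}+\negl{n}$ rather than exactly $\varepsilon_{cor}$, so the ``absorbing'' step only works if you take $\varepsilon_{cor}$ to be the overall decoding-failure probability in the protocol (as the paper implicitly does), not the conditional failure probability given error rate $\leq\tau$.
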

\begin{proof}
We consider the following chain of inequalities:
\begin{align*}
\pr{K_{P_1} \neq K_{P_2}|K_{P_1} \neq \perp, K_{P_2} \neq \perp} & = \pr{F(X_W) \neq F(\tilde{X}_W) \wedge \fname{decode}\left( S \right) = \tilde{X}_W } \\
& \leq \pr{X_W \neq \tilde{X}_W \wedge \fname{decode}\left( S \right) = \tilde{X}_W } \\
& = \pr{X_W \neq \tilde{X}_W} \cdot \pr{\fname{decode}\left( S \right) = \tilde{X}_W|\tilde{X}_W \neq X_W} \\
& \leq \pr{\fname{decode}\left( S \right) = \tilde{X}_W| \tilde{X}_W \neq X_W} \leq \varepsilon_{cor}.
\end{align*}

Note that $S = \fname{synd}\left( X_W \right) $. The first inequality is a
consequence of the fact $X_W = \tilde{X}_W$ implies $F(X_W) = F(\tilde{X}_W)$.
The second inequality follows since $\pr{X_W \neq \tilde{X}_W} \leq 1$ and the
last one by definition of the error-correcting code.
\end{proof}

The second claim asserts secrecy.
\begin{claim}
\label{claim:proof_1_secrecy}
For every common input state $\rho_{ABE} \in \mathcal{P}\left( \hilbert{A}
\otimes \hilbert{B} \otimes \hilbert{E} \right) $ and $\rho_{K_{P_1}K_{P_2}WZE}
= \Pi \left( \rho_{ABE} \right) $ we have
\begin{align*}
\dis{\rho_{K_{P_1}WZE}} \leq \varepsilon_{ec} + \varepsilon_{pa} =
2^{-\frac{1}{2}\left(g(\varepsilon) + \frac{\beta n}{2}\right)} + \left(2
\varepsilon + 2^{-\frac{1}{2}\left( g(\varepsilon) - \lambda \right) } \right),
\end{align*}
for some $\varepsilon > 0$ and $\bar{c} \in \left( 0, 1 \right) $, where
$g(\varepsilon)$ is given as $g(\varepsilon) =
\left(\log{\frac{1}{\bar{c}}} - h \left( \tau + \varepsilon
\right) - \frac{1}{2}\right)n$.
\end{claim}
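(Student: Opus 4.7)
The plan is to reduce secrecy to a min-entropy bound on $X_W$ conditioned on the adversary's view, and then to invoke privacy amplification (Theorem~\ref{theorem:privacy_amplification}) together with the private error correction bound (Theorem~\ref{theorem:private_ecc}). Write the adversary's classical side information as $Z = (Z', S, J, F)$, where $Z'$ collects everything sent on the classical channel up to (but excluding) the syndrome $S = \mathtt{synd}_J(X_W)$, the index $J$, and the hash function $F$. The two summands in the target bound correspond to the two separate steps: the term $\varepsilon_{ec} = 2^{-\frac{1}{2}(g(\varepsilon)+\beta n/2)}$ is the cost of revealing $S$, and $\varepsilon_{pa} = 2\varepsilon + 2^{-\frac{1}{2}(g(\varepsilon)-\lambda)}$ is the direct application of privacy amplification with smooth min-entropy lower bounded by $g(\varepsilon)$.

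The first and main step is to establish that, conditioned on the protocol not aborting, the smooth min-entropy $\sminentro{X_W\,|\,Z' E}$ is at least $g(\varepsilon) n$ where $g(\varepsilon) = (\log(1/\bar{c}) - h(\tau+\varepsilon) - 1/2)n$. I would do this in the standard Commit-and-Open/QKD style. First, since the commitment scheme is perfectly hiding in the case both parties are honest, the commitments appearing in \texttt{flow-one} and \texttt{flow-two} leak nothing, so $Z'$ reduces to the classical basis-string one-time pads $\varphi, \hat{\varphi}$ (which, by the one-time-pad property, are independent of everything else relevant here) and the openings on the test sets. Second, apply the sampling argument on the opened positions $T = T_1 \cup T_2$: since the relative Hamming distance on the opened positions, restricted to matching bases, is below $\tau/2$, a Chernoff/Hoeffding-style bound shows that, except with probability $\varepsilon$, the relative error on the unopened positions in $I_w$ (after the sifting via the password-masked bases) is below $\tau + \varepsilon$. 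This yields the max-entropy bound $\smaxentro{\hat{X}_W\,|\,X_W Z' E} \le h(\tau+\varepsilon) n$, essentially by the fact that $\hat{X}_W$ lies in a Hamming ball of radius $(\tau+\varepsilon)|I_w|$ around $X_W$.

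The third step converts this max-entropy statement into a min-entropy lower bound on $X_W$ via the uncertainty relation of Theorem~\ref{theorem:uncertainty_tradeoff}, applied to the positions in $I_w$ where $X_W$ would result from a $+$-basis (resp.\ $\times$-basis) measurement on the subsystem held by $P_2$ (or the adversary, after purifying). This is the analog of the uncertainty step in Tomamichel--Leverrier's QKD proof. Plugging the max-entropy bound in gives $\sminentro{X_W\,|\,Z' E} \ge (\log(1/\bar{c}) - h(\tau+\varepsilon))n - O(1)$, which is exactly $g(\varepsilon)n$ after absorbing the smoothing slack into the $-1/2$ in the definition of $g$. Now apply Theorem~\ref{theorem:private_ecc} to the cq-state on $(X_W, Z' E)$: with $J$ a uniformly random index into the $\delta$-biased family ($\delta < 2^{-\beta n/4}$), the syndrome $S$ is at trace distance at most $\delta \cdot 2^{-\frac{1}{2}(\sminentro{X_W|Z' E} - n/2)} \le 2^{-\frac{1}{2}(g(\varepsilon) + \beta n/2)}$ from uniform and independent of $(Z' E J)$. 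This is the first summand $\varepsilon_{ec}$.

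The last step is privacy amplification. Replace the true joint state by the ideal one where $S$ is uniform and independent (at cost $\varepsilon_{ec}$ in trace distance by the previous step). In that idealized state, conditioning on $S$ does not change the marginal of $X_W$ given $(Z' E J)$, so the smooth min-entropy $\sminentro{X_W\,|\,Z' E J S}$ is still at least $g(\varepsilon)n$. Theorem~\ref{theorem:privacy_amplification} applied with the two-universal family $\mathcal{F}$ outputting $\lambda$ bits then gives $\td{\rho_{F(X_W) Z E F}}{\tfrac{1}{2^\lambda}\mathbbm{1} \otimes \rho_{Z E F}} \le 2\varepsilon + 2^{-\frac{1}{2}(g(\varepsilon) - \lambda)} = \varepsilon_{pa}$. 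Summing the two trace distances via the triangle inequality gives the claimed bound.

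The main obstacle is the first step: making precise the claim that the Commit-and-Open phase effectively forces a (possibly dishonest, but here absent) party to measure, and then carefully tracking which subsystems appear in the uncertainty relation. The delicate point is that the uncertainty relation needs to be applied to the joint pure state underlying the quantum channel, and one must check that conditioning on the random choice of $T_1, T_2$ and on the success of the verification step does not destroy too much smooth min-entropy; this is where the sampling lemma of \cite{C:DFLSS09} together with Lemma~\ref{lemma:chain_rule} enters.
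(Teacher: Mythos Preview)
Your plan is essentially the paper's own strategy: bound the max-entropy via the sampling lemma of \cite{C:DFLSS09} and Hoeffding, convert it into a smooth min-entropy lower bound via the uncertainty relation (Theorem~\ref{theorem:uncertainty_tradeoff}), and then invoke Theorems~\ref{theorem:privacy_amplification} and~\ref{theorem:private_ecc} together with a triangle inequality to obtain the two summands $\varepsilon_{pa}$ and $\varepsilon_{ec}$.

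The one place where you diverge is the accounting for the syndrome $S$. In the paper, the $-\tfrac{1}{2}$ in $g(\varepsilon)$ does \emph{not} come from ``smoothing slack'' but from the chain rule (Lemma~\ref{lemma:chain_rule}): Proposition~\ref{prop:proof_1_min_entropy} first obtains $\sminentro{X_WW\,|\,Z'E}\ge nq$ from the uncertainty relation and then subtracts $\log|S|=n/2$ to pass to $\sminentro{X_WW\,|\,ZE}$. Your alternative---use Theorem~\ref{theorem:private_ecc} to replace the real state by an idealized one in which $S$ is uniform and independent, and then apply privacy amplification there---is a legitimate route, but then the min-entropy you feed into privacy amplification would be $nq$ rather than $g(\varepsilon)=nq-n/2$, and you must be careful that PEC only asserts closeness of the \emph{marginal} on $(S,Z'E,J)$ to uniform, not that the full joint including $X_W$ is close to a product; so the cleanest way to recover the stated numerics is still the chain-rule step the paper uses.
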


Instead of following the proof of statistical bounds on the min-entropy and
max-entropy of QKD's proof, we leverage the proof technique from
\cite{C:DFLSS09}.

In the following, let $T = T_1 \cup T_2$ and $T' = \left\{ i \in T \mid
\theta_i = \hat{\theta}_i \right\}$, is a random subset of arbitrary size of
$T$. Let the random variable $Test$ describe the choice of $test = \left( T,
T' \right) $ as specified above, and consider the state
\[\rho_{TestAE} = \rho_{Test} \otimes \ketbra{\varphi_{AE}}{\varphi_{AE}} =
\sum_{test}P_{Test} \left( test \right)\ketbra{test}{test} \otimes
\ketbra{\varphi_{AE}}{\varphi_{AE}} \] consisting of the classical $Test$ and
the quantum state $\varphi_{AE}$.

\begin{lemma}[{\cite[Corollary 4.4]{C:DFLSS09}}]
\label{lemma:proof_1_bounding_entropy}
For any $\varepsilon > 0$, $\hat{x} \in \bin^m$, and $\hat{\theta} \in
\bases^m$, define
\[\tilde{\rho}_{TestAE} = \sum_{test}P_{Test} \left( test
\right)\ketbra{test}{test} \otimes
\ketbra{\tilde{\varphi}_{AE}^{test}}{\tilde{\varphi}_{AE}^{test}}, \] where
for any $test = \left( T, T' \right) $:
\[\ket{\tilde{\varphi}_{AE}^{test}} = \sum_{x \in B_{test}}
\alpha_x^{test}\ket{x}_{\hat{\theta}}\ket{\psi}_E^x\] for $B_{test} = \left\{
x \in \bin^m \mid r_H \left( x|_{\bar{T}}, \hat{x}|_{\bar{T}} \right) \leq
r_H \left( x|_{T'}, \hat{x}|_{T'} \right) + \varepsilon \right\}$ and
arbitrary coefficients $\alpha_x^{test} \in \mathbb{C}$.

\noindent For any fixed $test = (T, T')$, and for any fixed $x|_T \in
\bin^{\alpha m}$ with $\delta = r_H \left( x|_{T'}, \hat{x}|_{T'} \right) \leq
\frac{1}{2}$, let $\ket{\psi_{AE}}$ be the state to which
$\ket{\tilde{\varphi}_{AE}^{test}}$ collapses when for every $i \in T$
subsystem $A_i$ is measured in basis $\hat{\theta}_i$ and $x_i$ is observed,
where we understand $A$ in $\ket{\psi_{AE}}$ to be restricted to the registers
$A_i$ with $i \in \bar{T}$. Finally, let $\sigma_E =
\textnormal{tr}_A(\ketbra{\psi_{AE}}{\psi_{AE}})$ and let the random variable
$X$ describe the outcome when measuring the remaining $n = (1-\alpha)m$
subsystems of $A$ in basis $\theta|_{\bar{T}} \in \bases{}^n$. Then, for any
subset $I \subseteq \left\{ 1, \dots, n \right\}$ and any $x|_I$,
\[\entro{\infty}{X|_I| X|_{\bar{I}} = x|_{\bar{I}}} \geq \hamming{\theta|_I,
\hat{\theta}|_I} - h(\delta+\varepsilon)n \textnormal{\quad and \quad}
\entro{0}{\sigma_{E}} \leq h(\delta + \varepsilon)n.\]
\end{lemma}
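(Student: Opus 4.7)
The plan is to derive both bounds from a single structural fact about the collapsed state. After the $T$-registers of $\ket{\tilde{\varphi}_{AE}^{test}}$ are measured in bases $\hat\theta|_T$ and the outcome $x|_T$ is observed, the resulting (unnormalized) state on $A_{\bar T} \otimes E$ is
\[
\ket{\psi_{AE}} \;=\; \sum_{y \in B'} \beta_y \,\ket{y}_{\hat\theta|_{\bar T}}\, \ket{\psi_E^{(y, x|_T)}},
\]
where $B' \subseteq \bin^n$ collects exactly those $y$ for which $(y, x|_T) \in B_{test}$. By the defining inequality of $B_{test}$ together with the assumption $r_H(x|_{T'}, \hat x|_{T'}) = \delta$, every such $y$ obeys $r_H(y, \hat x|_{\bar T}) \leq \delta + \varepsilon$, so the standard Hamming-ball volume estimate gives $|B'| \leq 2^{h(\delta + \varepsilon) n}$.

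The max-entropy bound is then immediate by a rank argument. Expanding $\sigma_E = \textnormal{tr}_A(\ket{\psi_{AE}}\!\bra{\psi_{AE}})$ in the orthonormal $\hat\theta|_{\bar T}$-basis on $A$ collapses the double sum to $\sigma_E = \sum_{y \in B'} |\beta_y|^2 \ket{\psi_E^{(y, x|_T)}}\!\bra{\psi_E^{(y, x|_T)}}$, a Gram-type sum supported on the span of $\{\ket{\psi_E^{(y, x|_T)}}\}_{y \in B'}$. Hence $\mathrm{rank}(\sigma_E) \leq |B'|$ and $\entro{0}{\sigma_E} \leq \log |B'| \leq h(\delta + \varepsilon) n$. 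For the min-entropy bound, I would first condition on $X|_{\bar I} = x|_{\bar I}$ at the quantum level. The single-qubit overlap between a $\hat\theta_i$-eigenstate and a $\theta_i$-eigenstate has squared magnitude $1$ when $\theta_i = \hat\theta_i$ and the values agree, $0$ when $\theta_i = \hat\theta_i$ and they disagree, and $1/2$ when $\theta_i \neq \hat\theta_i$; so the $\bar I$-measurement preserves the structure, and the post-measurement state on $A_I \otimes E$ remains a superposition of $\hat\theta|_I$-basis eigenstates indexed by a subset of $\{y|_I : y \in B'\}$, hence by a set of cardinality at most $|B'|$. Applying Cauchy-Schwarz to the amplitude of any outcome $X|_I = x^*|_I$ in this reduced, normalized state yields
\[
\Pr[X|_I = x^*|_I \mid X|_{\bar I} = x|_{\bar I}] \;\leq\; |B'| \cdot 2^{-\hamming{\theta|_I, \hat\theta|_I}},
\]
uniformly in $x^*|_I$ and in the conditioning value, which is exactly the claimed min-entropy bound.

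The main obstacle I anticipate is the conditioning step. A direct Cauchy-Schwarz bound on the unconditional probability $\Pr[X = (x^*|_I, x|_{\bar I})]$ followed by explicit division by $\Pr[X|_{\bar I} = x|_{\bar I}]$ could in principle blow up when the denominator is small, and no global lower bound on that denominator is available. The clean workaround, as sketched above, is to perform the $\bar I$-measurement first inside the proof, verify that the key structural property -- superposition of $\hat\theta|_I$-basis eigenstates over an index set of size at most $|B'|$ -- survives this partial measurement, and only then apply the amplitude bound to the reduced, already-normalized state. Because both the numerator and the denominator in the conditional probability arise from the same $B'$-indexed superposition, the $|B'|$ factor coming from Cauchy-Schwarz remains, while the basis-mismatch exponential localizes cleanly to $I$, producing the $\hamming{\theta|_I, \hat\theta|_I}$ term without any residual dependence on the conditioning value.
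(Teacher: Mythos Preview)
The paper does not prove this lemma; it is imported verbatim as Corollary~4.4 of \cite{C:DFLSS09} and used as a black box. Your argument is correct and is essentially the one given in that reference: the collapsed state is a superposition over a Hamming ball of volume at most $2^{h(\delta+\varepsilon)n}$ in the $\hat\theta|_{\bar T}$-basis, the max-entropy bound follows from the Schmidt rank, and the min-entropy bound follows from the overlap-plus-Cauchy--Schwarz estimate you describe, with the conditioning handled by first collapsing the $\bar I$-registers (which preserves the ``small superposition in the $\hat\theta$-basis'' structure, the index set only shrinking to $\{y|_I : y\in B'\}$).

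One point worth making explicit in the write-up: after grouping by $z=y|_I$, the accompanying $E$-vectors $\ket{\chi_z}$ are in general neither normalized nor mutually orthogonal, so the Cauchy--Schwarz step must be applied as $\bigl\|\sum_z c_z\ket{\chi_z}\bigr\|^2 \le |B'_I|\sum_z|c_z|^2\|\chi_z\|^2$. Orthonormality of the $\ket{z}_{\hat\theta|_I}$ on the $A_I$ side gives $\sum_z\|\chi_z\|^2 = N^2$, which is exactly what is needed to cancel the normalization $1/N^2$ and leave the clean bound $|B'|\cdot 2^{-d_H(\theta|_I,\hat\theta|_I)}$.
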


With this in hand, we wish to bound the smooth max-entropy of the state when
passing the parameter estimation test (\texttt{flow-one} and \texttt{flow-two}
in our protocol).
\begin{proposition}
\label{prop:proof_1_max_entropy}
For any $\varepsilon > 0$ such that $\varepsilon^2 < \pr{K_{P_1} = K_{P_2}
\neq \perp}$, the following holds:
\[\smaxentro{X_WW|Z'E} \leq h(\tau + \varepsilon)n. \]
\end{proposition}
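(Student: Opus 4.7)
The plan is to combine a classical sampling argument on the two-round commit-and-open phase with the quantum max-entropy bound from Lemma \ref{lemma:proof_1_bounding_entropy}, and then to show that the password register $W$ adds nothing to the bound in the parameter regime fixed by the protocol.

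First, I would condition on the joint event that both parameter-estimation tests in \texttt{flow-one} and \texttt{flow-two} accept, so the observed relative Hamming distance on $T' = T_1' \cup T_2'$ is at most $\tau/2$. Because $T_1, T_2$ are sampled uniformly and the commit-and-open structure forces the values $(\hat x, \hat\theta)$ and $(x, \theta)$ to be fixed before the test bits are disclosed, a standard Hoeffding/Serfling-type sampling bound yields that, except with probability at most $\varepsilon^2$, the relative Hamming distance of $(x, \hat x)$ on the unused portion $\bar T$ (and hence on any subset $I_w \subseteq \bar T$) exceeds the observed value on $T'$ by at most $\varepsilon$. The hypothesis $\varepsilon^2 < \pr{K_{P_1} = K_{P_2} \neq \perp}$ is invoked precisely here: it lets this unconditional failure probability be absorbed as an $\varepsilon$-smoothing of the post-test state, which is exactly the slack we need on the smooth max-entropy.

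Second, I would invoke Lemma \ref{lemma:proof_1_bounding_entropy} on the smoothed state: the purification of the adversary's system then lies in $B_{test}$ with $\delta \le \tau$, so the lemma directly yields $\entro{0}{\sigma_E} \le h(\tau + \varepsilon)n$, and by the standard min/max duality one converts this rank bound into $\smaxentro{X_W \mid E} \le h(\tau + \varepsilon)n$. Adding the classical side-information $Z'$ --- the commitments (perfectly hiding, hence revealing nothing about $\hat x$ or $x$), the opened subsets, the masks $\varphi, \hat\varphi$, and the syndrome $s$ (close to uniform by Theorem \ref{theorem:private_ecc}) --- does not affect the bound: each component is either a function of what is already revealed in the test or is independent of $X_W$ given the test-acceptance event, which lets one absorb it into the same $\varepsilon$-smoothing up to negligible loss.

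Finally, I would bundle the password register $W$ into the bound. In the uncorrupted setting $W$ is statistically independent of $(Z', E)$ because the commitments are perfectly hiding and both $\varphi, \hat\varphi$ are one-time pads of $\mathfrak c(W)$ under the uniform independent masks $\theta|_{\bar T}, \hat\theta|_{\bar T}$; a chain rule for smooth max-entropy then yields $\smaxentro{X_W W \mid Z' E} \le \smaxentro{X_W \mid Z' E W} + \log{|\mathcal{D}|}$, and in the parameter regime $n = \Omega(\log{m})$ fixed in the protocol the additive $\log|\mathcal D|$ is dominated by $h(\tau+\varepsilon)n$. The main technical obstacle lies in the sampling step across the adaptively chosen subsets $T_1, T_2$ (where $T_2$ depends on the opening of $T_1$), and in cleanly matching the restriction from $\bar T$ to $I_w$ with the hypotheses of Lemma \ref{lemma:proof_1_bounding_entropy}, which literally concerns measurement of the full complement $\bar T$ rather than a password-dependent sub-selection of its indices.
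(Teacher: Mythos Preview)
Your core strategy—Hoeffding/Serfling sampling on the commit-and-open tests, followed by $\varepsilon$-smoothing—matches the paper's. The paper's own proof is extremely terse: it defines the bad event $\mathcal{E} = \{r_H(X_W,\hat X_W)\ge\tau\}$, bounds $\pr{K_{P_1}=K_{P_2}\neq\perp \wedge \mathcal{E}}\le\varepsilon^2$ by Hoeffding, smooths it away, and then directly asserts $\smaxentro{X_WW|Z'E}\le\entro{0}{X_WW|Z'E}\le h(\tau+\varepsilon)n$ in one line. It does not invoke Lemma~\ref{lemma:proof_1_bounding_entropy} explicitly, does not discuss $Z'$ separately, and does not say a word about the password register $W$.

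Two discrepancies between your route and the stated claim are worth flagging. First, your max-entropy chain rule for $W$ produces the bound $h(\tau+\varepsilon)n+\log|\mathcal D|$, not $h(\tau+\varepsilon)n$; saying the extra term ``is dominated'' does not prove the proposition as written, only a weaker one (which would still suffice downstream after adjusting constants, but is not the literal statement). The paper's short proof simply does not address this point, so your elaboration exposes rather than creates the gap. Second, you list the syndrome $s$ among the classical side-information to be absorbed via Theorem~\ref{theorem:private_ecc}, but in the paper's decomposition $Z=(Z',S,J)$ the syndrome lives in $S$, not in $Z'$; it is handled later (Proposition~\ref{prop:proof_1_min_entropy}) by the chain rule of Lemma~\ref{lemma:chain_rule}, not here. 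Your concerns about the adaptive choice of $T_2$ and the passage from $\bar T$ to the password-dependent $I_w$ are legitimate; the paper's two-line argument does not address them either.
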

\begin{proof}
We define the event $\mathcal{E} = r_H\left( X_W, \hat{X}_W \right) \geq
\tau$. Hoeffding's inequality \cite{hoeffding1963probability} gives an upper
bound on the probability of the unlikely coincidence where the parameter
estimation test passes with threshold $\tau$ but the fraction of errors
between $X_W$ and $\hat{X}_W$ exceeds the threshold $\tau$ by a constant
amount, that is,
\[\pr{K_{P_1} = K_{P_2} \neq \perp \wedge \mathcal{E}} \leq \varepsilon^2.\]
By mean of smoothing, we remove the above unlikely event from our state
$\rho_{X_WWZ'E}$:
\[\smaxentro{X_WW|Z'E} \leq \entro{0}{X_WW|Z'E} \leq h(\tau+\varepsilon)n.
\]
\end{proof}

By applying the uncertainty relation from Theorem
\ref{theorem:uncertainty_tradeoff}, we get the lower bound on the smooth
min-entropy of $P_1$'s measurement outcomes.
\begin{proposition}
\label{prop:proof_1_min_entropy}
For any $\varepsilon > 0$ such that $\varepsilon^2 < \pr{K_{P_1} = K_{P_2}
\neq \perp}$, the following holds:
\[\sminentro{X_WW|ZE} \geq nq - \frac{n}{2},\] where we introduced the
shorthand $q = \log{\frac{1}{\bar{c}}} - h \left( \tau +
\varepsilon \right)$, for some $\bar{c} \in \left( 0, 1 \right) $.
\end{proposition}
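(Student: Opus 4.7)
The plan is to derive the bound by combining three ingredients: the entropic uncertainty relation of Theorem~\ref{theorem:uncertainty_tradeoff}, the symmetric counterpart of Proposition~\ref{prop:proof_1_max_entropy}, and the chain rule of Lemma~\ref{lemma:chain_rule}. The $nq$ part of the bound will come from the first two ingredients, and the $-n/2$ term from the syndrome leakage, absorbed by the third.

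First, I model the post-commit-and-open state on the $n$ positions indexed by $\bar{T}$ as (close to, after smoothing) a pure tripartite state $\rho_{ABC}$, where $A$ carries the $n$ qubits to be measured, $B$ contains the adversary's quantum register $E$ together with the classical transcript $Z'$ collected so far, and $C$ is the purifying complement. Let $\mathbb{X}$ denote the bitwise measurement in basis $\theta|_{\bar{T}}$ (whose restriction to $I_w$ yields $X_W$) and $\mathbb{Z}$ the bitwise complementary measurement, with virtual outcome $\tilde{Z}$. Theorem~\ref{theorem:uncertainty_tradeoff} then gives
\[
  \sminentro{X_W \mid Z' E} + \smaxentro{\tilde{Z} \mid C} \;\geq\; n\log{\frac{1}{\bar{c}}}.
\]
By symmetry between the $+$ and $\times$ bases in the commit-and-open test and in Lemma~\ref{lemma:proof_1_bounding_entropy}, the same Hoeffding-plus-smoothing argument that proves Proposition~\ref{prop:proof_1_max_entropy} applies to the complementary outcome, yielding $\smaxentro{\tilde{Z} \mid C} \leq h(\tau + \varepsilon)\,n$, hence $\sminentro{X_W \mid Z' E} \geq nq$. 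To account for the remaining part $Z \setminus Z' = (S, J)$ of the adversary's classical view, note that the code index $J$ is sampled independently of all other variables so conditioning on it is free, while the syndrome $S = \mathtt{synd}_J(X_W)$ is a binary string of length at most $\ell = n/2$; hence Lemma~\ref{lemma:chain_rule} yields
\[
  \sminentro{X_W \mid Z E} \;=\; \sminentro{X_W \mid Z' S J E} \;\geq\; \sminentro{X_W \mid Z' E} - \log{\size{S}} \;\geq\; nq - \frac{n}{2}.
\]
Finally, since guessing the pair $(X_W, W)$ is at least as hard as guessing $X_W$ alone, $\sminentro{X_W W \mid Z E} \geq \sminentro{X_W \mid Z E}$, and the claim follows.

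The main obstacle is the careful identification of the purification and of which pieces of classical side information can safely be absorbed into the system $B$ before invoking Theorem~\ref{theorem:uncertainty_tradeoff}. In particular, I need to argue that conditioning on the transcript $Z'$ (which includes the commitment openings and the pads $\varphi, \hat{\varphi}$) does not effectively reveal the bases used on the sifted set $I_w$; here the perfect hiding of the commitment scheme at the unopened positions and the one-time-pad structure $\varphi = \theta|_{\bar{T}} \oplus \mathfrak{c}(W)$ are both crucial. I also need to check that the smoothed state guaranteed by Lemma~\ref{lemma:proof_1_bounding_entropy} is close enough in trace distance to the actual post-test state for the uncertainty bound on the idealized state to transfer, at the cost of only an $\varepsilon$-penalty absorbed into the smoothing parameter. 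The remainder of the argument is routine bookkeeping with the chain rule and the monotonicity of min-entropy under taking larger target registers.
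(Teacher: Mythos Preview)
Your proposal is correct and follows essentially the same approach as the paper: combine the uncertainty relation (Theorem~\ref{theorem:uncertainty_tradeoff}) with the max-entropy bound coming from parameter estimation (Proposition~\ref{prop:proof_1_max_entropy}) to obtain $\sminentro{X_WW\mid Z'E} \geq nq$, then apply the chain rule (Lemma~\ref{lemma:chain_rule}) to absorb the at most $n/2$ bits of syndrome $S$ together with the independently sampled index $J$. The only cosmetic differences are that you first bound $\sminentro{X_W\mid Z'E}$ and then append $W$ by monotonicity, whereas the paper works with the pair $X_WW$ throughout, and that you spell out the complementary-measurement symmetry needed to feed the max-entropy bound into the uncertainty relation, a step the paper leaves implicit.
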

\begin{proof}
Combining Proposition \ref{prop:proof_1_max_entropy} and the Uncertainty
Relation Theorem \ref{theorem:uncertainty_tradeoff} yields
\[\sminentro{X_WW|Z'E} \geq nq.\] Finally, we show the lower bound of
the smooth min-entropy after \texttt{flow-five}:
\[nq \leq \sminentro{X_WW|Z'E} \leq \sminentro{X_WW|Z'SJE}
+ \frac{n}{2},\] which follows by the chain rule \ref{lemma:chain_rule}, and
the fact that $\size{S} = \frac{n}{2}$ and $J$ is uniformly distributed.
We conclude the proof by summarizing $Z = \left(Z', S, J \right) $.
\end{proof}

With all these necessary technical ingredients, we establish the secrecy of
the key and finish the proof of the second claim.
\begin{proof}[Proof of Claim {\ref{claim:proof_1_secrecy}}]
By the triangle inequality, we have that
\[\dis{\rho_{K_{P_1}WZE}} \leq \dis{\rho_{K_{P_1}WZ'E}} +
\dis{\rho_{K_{P_1}WSJE}}.\]

The Privacy Amplification Theorem \ref{theorem:privacy_amplification} and
Private Error Correction Theorem \ref{theorem:private_ecc} applied with the
bound given in Proposition \ref{prop:proof_1_min_entropy} then immediately
yields the desired inequality.
\end{proof}

We complete the proof of the case when neither of the parties is corrupted by
proving the following claim (this is the difference of a PAKE protocol from
the standard key exchange's proof).
\begin{claim}
\label{claim:proof_1_independence} $W$ is independent of the adversary's
quantum system:
\[\markovstate{W}{Z}{E} = \rho_{WZE}.\]
\end{claim}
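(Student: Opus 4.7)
My plan is to argue the claim structurally, by pinpointing exactly where the password $W$ enters the execution. Inspecting Fig.~\ref{fig:main_protocol}, the honest parties consult $pw$ in one and only one place, step~9: $P_1$ computes $\varphi = \theta|_{\bar{T}} \oplus \mathfrak{c}(pw)$ and $P_2$ computes $\hat{\varphi} = \hat{\theta}|_{\bar{T}} \oplus \mathfrak{c}(pw)$, and these are exchanged as classical messages, hence they belong to $Z$. Every subsequent classical flow ($I_w$, $s$, $j$, $f$) is either a deterministic function of $(\varphi,\hat{\varphi})$ and quantities already in $Z$, or drawn from randomness independent of $W$; all of them are themselves appended to $Z$.

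The only quantum message of the whole protocol is $\ket{\Psi}=\ket{x}_\theta$, sent in step~1 \emph{before} $pw$ has been consulted, so its density matrix is independent of $W$. Because neither party is corrupted and the adversary is classically specious, its final quantum register $E$ is produced by CPTP maps whose only inputs are a $W$-independent auxiliary state, the $W$-independent channel state $\ket{\Psi}$, and the classical transcript $Z$. Hence every dependence of $E$ on $W$ is funnelled through $Z$, which is exactly what $\markovstate{W}{Z}{E}$ demands.

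To turn this into a formal proof I would proceed by induction on the round number, maintaining the invariant that after each flow the conditional state of $E$ given the partial transcript (and the hidden classical randomness, i.e.\ bases, pads, and subsets) admits a description with no residual $w$-dependence. Carrying this through to the last round yields
\begin{equation*}
\rho_{WZE} \;=\; \sum_{w, z} p_{WZ}(w, z)\,\ketbra{w}{w}\otimes\ketbra{z}{z}\otimes\rho_E^{z},
\end{equation*}
which is precisely $\markovstate{W}{Z}{E}$. The only subtlety to watch is that the post-processing messages $s$ and $f$ are derived from $pw$-dependent data (via $I_w$ and $x|_{I_w}$); this is not a real obstacle because these messages themselves live in $Z$, so conditioning on $Z$ absorbs whatever $pw$-information they carry. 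Everything else reduces to the observation that the only $W$-dependent objects in the entire execution are classical messages already included in $Z$.
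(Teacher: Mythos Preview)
Your structural argument has a real gap. You claim that $E$ is produced by CPTP maps whose inputs are the $W$-independent state $\ket{\Psi}=\ket{x}_\theta$, a $W$-independent auxiliary state, and the classical transcript $Z$, and conclude that all $W$-dependence of $E$ is ``funnelled through $Z$''. That inference would require the \emph{conditional} independence $(\ket{\Psi},\text{aux})\perp W \mid Z$, not merely the marginal independence you establish. But the conditional independence fails: $Z$ contains $\varphi=\theta|_{\bar T}\oplus\mathfrak{c}(pw)$, so once you fix $Z=z$ and $W=w$, the basis string $\theta|_{\bar T}$ is pinned down as $\varphi\oplus\mathfrak{c}(w)$. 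For different values of $w$ (same $z$) the adversary's quantum input $\ket{x}_\theta$ is encoded in different bases on $\bar T$, and hence the conditional state $\rho_E^{w,z}$ can in principle differ. Your proposed round-by-round induction therefore breaks exactly at step~9, when $\varphi$ enters $Z$; after that round the hidden randomness $\theta|_{\bar T}$ is no longer free but is a deterministic function of $(w,z)$. The usual escape --- averaging over the still-uniform $x$ so each qubit becomes maximally mixed regardless of basis --- is blocked on the positions $I_w$, because the syndrome $s\in Z$ constrains $x|_{I_w}$ to a coset, and a uniform mixture over a nontrivial coset encoded in basis $\theta|_{I_w}$ does depend on $\theta|_{I_w}$.

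The paper proceeds quite differently. It does not attempt a structural Markov argument; instead it first notes that $W$ is (marginally) independent of $Z$, and then invokes Lemma~\ref{lemma:proof_1_bounding_entropy} (the commit-and-open sampling bound) to argue that the relative error on the test set $T'$ upper-bounds the relative error on $\bar T$. A case distinction on this error rate then shows that the protocol either aborts (a decision taken before $pw$ is ever used) or completes with successful error correction, and in both cases the outcome depends only on the adversary's behaviour, not on $W$. That argument genuinely uses the sampling lemma; the purely syntactic bookkeeping you propose cannot replace it.
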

\begin{proof}
Because $W$ is independent of $Z$, it is sufficient to show that whether the
protocol was aborted or completed gives no additional information on $W$: By
using Lemma \ref{lemma:proof_1_bounding_entropy}, we are close to the case
where for \emph{any} choice of $T$ and $T'$, and for \emph{any} outcome $x_T$
when measuring $\ket{\Psi}$ in basis $\theta|_{T_2}$ and
$\hat{\theta}|_{T_1}$, the relative error $r_H \left( x|_{T'}, \hat{x}|_{T'}
\right) $ gives an upper bound (except with a negligible probability) on the
relative error $r_H \left(x|_{\bar{T}}, \hat{x}|_{\bar{T}} \right) $ obtained
by measuring the remaining subsystems with $i \in \bar{T}$. In
particular, we have
\[r_H \left( x|_{\bar{T}}, \hat{x}|_{\bar{T}} \right)  \leq \frac{1}{2}r_H
\left( x|_{T'}, \hat{x}|_{T'} \right). \]

Furthermore, since the set $I_w = \left\{ i \in \bar{T} \mid \theta_i =
\hat{\theta_i} \right\}$ is a subset of $\bar{T}$ of essentially half the
size, $r_H \left( x|_{I_w}, \hat{x}|_{I_w} \right) \leq 2\cdot r_H \left(
x|_{\bar{T}}, \hat{x}|_{\bar{T}} \right)$ holds with overwhelming
probability. Also note that $r_H \left( x|_{\bar{T}},
\hat{x}|_{\bar{T}} \right)$ does not depend on $pw$. We can now do the case
distinction:
\begin{itemize}[align=left]
\item[\textit{Case 1:}] If $r_H \left( x|_{\bar{T}}, \hat{x}|_{\bar{T}}
\right) \leq \frac{\tau}{2}$ then $x|_{I_w}$ and $\hat{x}|_{I_w}$ differ in at
most a $\tau$-fraction of their positions, and thus the server $P_2$ correctly
recovers $\tilde{x}|_{I_w} = x|_{I_w}$, no matter what $pw$ is.
\item[\textit{Case 2:}] If $r_H \left( x|_{\bar{T}}, \hat{x}|_{\bar{T}}
\right) \geq \frac{\tau}{2}$ then $r_H \left( x|_{T'}, \hat{x}|_{T'} \right)
\geq \tau$. Hence, the protocol always aborts during either \texttt{flow-one}
or \texttt{flow-two}.
\end{itemize}
We have shown that for both cases the value of $W$ only depends on the
adversary's behavior, which proves the claim.
\end{proof}


\subsection{Simulating the case when the Client is Corrupted}
\label{proof:corrupted_client}
\paragraph{Description of the Simulator. } Recall that in general the
simulator $\adv{S}$ needs to extract the corrupted party’s input in order to
send it to the trusted party, and needs to simulate its view so that its
output corresponds to the output received back from the trusted party. The
simulator $\adv{S}$ works as follows.

\begin{enumerate}
\item When initialized with security parameter $\lambda$, $\adv{S}$ first runs
the key-generation algorithms of the dual-mode commitment scheme $\com$
three times, and obtains the key pairs $\left( ck_{\fsub{H}}, tk \right)
\leftarrow \com.\KeyGen_\fsub{H}(1^\lambda) $, $ \left(
ck^\prime_{\fsub{B}}, xk\right) \leftarrow \com.\KeyGen_\fsub{B}(1^\lambda)$
and $ \left(ck^\prime_{\fsub{H}}, tk^\prime \right) \leftarrow
\com.\KeyGen_\fsub{H}(1^\lambda)$.
\item The simulator also chooses a ``dummy password'' $pw'$ at random for the
simulated copy of the honest party $P_2$.
\item $\adv{S}$ initializes the real-world adversary $\adv{A}$, giving it the
pair $(ck, ck') = (ck_{\fsub{H}}, ck^\prime_{\fsub{B}})$ as the common
reference string. Thereafter, $\adv{S}$ interacts with the ideal functionality
$\ideal{pwKE}$ and its subroutine $\adv{A}$. (Essentially, $\adv{S}$ uses $ck
= ck_{\fsub{H}}$ for its commitments and $\adv{A}$ uses $ck' =
ck^\prime_{\fsub{B}}$ for its commitments. See details in Step
\ref{enum:step_corrupted_p1_2} right below.)
\item This interaction is implemented by the simulator $\adv{S}$ just
following the protocol $\Pi$ on behalf of the honest party, except for the
following modifications:
\begin{enumerate}
\item $\adv{S}$ measures the received qubits only when needed. In the
commitment phase, it simply commits to $\hat{x} = 0^k$, and commits to
$\hat{\theta}$ honestly. It measures the qubits within set $T_1$ upon
receiving $\mathtt{flow-one}_2$, then, in the open phase, to open as an
arbitrary $(\hat{x}|_{T_1}, \hat{\theta}|_{T_1})$, the simulator just uses the
trapdoor information $tk$ to equivocate.
\item \label{enum:step_corrupted_p1} For each commitment $(c_i^0, c_i^1)$
received from the simulated adversary $\adv{A}$, $\adv{S}$ uses the trapdoor
information $xk$ to extract $\adv{A}$'s committed values $x$ and $\theta$. It
then uses the commitment key $ck^\prime_{\fsub{H}}$ to re-commit these values.
\item $\adv{S}$ measures all the remaining qubits in $\adv{A}$'s basis
$\theta$. However, it still verifies only whether $x_i = \hat{x}_i$ for those
$i \in T_2^\prime = \left\{ i \in T_2 \mid \theta_i = \hat{\theta}_i
\right\}$.
\item It sends a random value $\hat{\varphi} \in \bin^n$ to $\adv{A}$.
\item Upon receiving $\varphi$ from $\adv{A}$, it attempts to decode $pw^*$
from $\mathfrak{c}(pw) = \theta|_{\bar{T}} \oplus \varphi$. If this succeeds,
it sets $pw'$ equal to $pw^*$ and uses $pw'$ in a $\textsf{TestPwd}$ query to
$\ideal{pwKE}$. If this is a ``correct guess'', $\adv{S}$ replaces the dummy
password $pw'$ with the correct password $pw^*$, and proceeds with the
simulation.
\end{enumerate}
\item \label{enum:step_corrupted_p1_2} $\adv{S}$ outputs whatever $\adv{A}$
outputs, except it replaces $\adv{A}$'s commitments with its own ones in Step
\ref{enum:step_corrupted_p1}.
\end{enumerate}

\paragraph{Proof of Indistinguishability.} We need to show that the state
output by $\adv{S}$ above is statistically close to the state output by
$\adv{A}$ when executing $\Pi$ with a real $P_2$. First, note that because
$\adv{A}$ is quantum-polynomial-time, it cannot distinguish the commitment key
provided by the simulator and the commitment key in the real world, except
with a negligible probability. If $pw'$ is a correct guess, then the
simulation is perfect and thus the two states are equal. It thus suffices to
argue that the key $\textsf{sk}$ that the Server $P_2$ computes is uniformly
random from the view of $\adv{A}$ for any fixed $pw \neq pw'$.

Now, by re-using Lemma \ref{lemma:proof_1_bounding_entropy} from
\cite[Corollary 4.4]{C:DFLSS09}, we get the common state after the
\texttt{flow-two} is statistically close to a state for which it is guaranteed
that $\entro{\infty}{X|_I} \geq  \hamming{\theta|_I, \hat{\theta}|_I} -
h(\tau + \varepsilon) n$ for any $I \subseteq \left\{ 1, \dots, n \right\}$ and
$\entro{0}{\rho_{ZE}} \leq h(\tau + \varepsilon) n$.
We make a case distinction:
\begin{itemize}[align=left]
\item[\textit{Case 1:}] Decoding of $\theta|_{\bar{T}} \oplus \varphi$
succeeded, \ie{} $\mathfrak{c}(pw^*) = \theta|_{\bar{T}} \oplus \varphi$.
Since the code $\mathfrak{c}$ has a minimum distance $d = \gamma n$, it
follows that $\theta|_{\bar{T}} \oplus \varphi$ is at least $\gamma n$ from
$\mathfrak{c}(pw)$, for any $pw^* \neq pw$.
\item[\textit{Case 2:}] Decoding of $\theta|_{\bar{T}} \oplus \varphi$ failed,
it also follows that $\theta|_{\bar{T}} \oplus \varphi$ is at least $\gamma n$
from $\mathfrak{c}(pw)$, since then $\theta|_{\bar{T}} \oplus \varphi$ is at
least $\gamma n$ from \emph{any} codeword.
\end{itemize}

In both cases, we always have $\hamming{\theta|_{\bar{T}}, \mathfrak{c}(pw)
\oplus \varphi} = \hamming{\theta|_{\bar{T}} \oplus \varphi, \mathfrak{c}(pw)}
\geq \gamma n$. Furthermore, denote $\mathfrak{c}'(pw) = \mathfrak{c}(pw)
\oplus \varphi$, by the random sampling theory, the Hamming distance between
$\theta|_{I_w} $ and $\hat{\theta}|_{I_w} = \mathfrak{c}'(pw) |_{I_w} $ is at
least $\left( \frac{\gamma}{2} - \varepsilon \right)n $, with overwhelming
probability for arbitrary $\varepsilon > 0$. We conclude
that $\entro{\infty}{X_W} \geq \left(\frac{\gamma}{2} - \varepsilon -
h(\tau + \varepsilon)
\right)n$ and $\entro{0}{\rho_{ZE}} \leq h(\tau + \varepsilon) n$.
Hence, the chain rules for smooth min-entropy implies that
\[\sminentro{X_W|ZE} \geq \entro{\infty}{X_W} - \entro{0}{\rho_{ZE}} \geq
\left(\frac{\gamma}{2} - \varepsilon - 2h(\tau + \varepsilon) \right)n.\]
We can complete the proof as follows.
\begin{align*}
\dis{\rho_{K_{P_2}WW'Z'E|W'\neq W}} &= \dis{\markovstate{W}{W'}{Z'E|W'\neq W}}
\\
&\leq \varepsilon_{pa} = 2\varepsilon + 2^{-\frac{1}{2}\left( \left(\frac{
\gamma}{2} - \varepsilon - 2h(\tau + \varepsilon) \right)n-\lambda \right) },
\end{align*}
and
\begin{align*}
\dis{\rho_{K_{P_2}WW'SJE|W'\neq W}}
&= \dis{\markovstate{W}{W'}{SJE|W'\neq W}} \\
&\leq \varepsilon_{ec} = \delta \times 2^{-\frac{1}{2}\left(\frac{\gamma}{2} -
\varepsilon - 2h(\tau + \varepsilon) -\frac{1}{2}\right)n}\\
&\leq 2^{\frac{-\beta n}{4}} \times 2^{-\frac{1}{2}\left(\frac{\gamma}{2} -
\varepsilon - 2h(\tau + \varepsilon) -\frac{1}{2}\right)n} \\
&= 2^{-\frac{1}{2}\left(\frac{\beta}{2} + \frac{\gamma}{2} -
\varepsilon - 2h(\tau + \varepsilon) -\frac{1}{2}\right)n},
\end{align*}
where both exact equalities come from the independency of $W$, which, when
conditioned on $W' \neq W$, translates to independency given $W'$, and the
inequalities follow by privacy amplification (Theorem
\ref{theorem:privacy_amplification}) and private error correction (Theorem
\ref{theorem:private_ecc}), respectively.
By the choice of parameters, we have that both two bounds are
negligible in $n$.

Overall, by the triangle inequality,
the claim follows with $\mu = \varepsilon_{ec} + \varepsilon_{pa}$, that is,
\begin{align*}
\td{\rho_{K_{P_2}WW'ZE|W'\neq W}}{\markovstate{W}{W'}{ZE|W'\neq W}} = &\enskip
\dis{\rho_{K_{P_2}WW'ZE|W'\neq W}} \\
\leq &\enskip \dis{\rho_{K_{P_2}WW'Z'E|W'\neq W}} + \\
& \enskip \dis{\rho_{K_{P_2}WW'SJE|W'\neq W}} \\
\leq & \enskip \varepsilon_{pa} + \varepsilon_{ec}.
\end{align*}


\subsection{Simulating the case when the Server is Corrupted}
\label{proof:corrupted_server}
\paragraph{Description of the Simulator.} The simulator for the corrupted
Server follows the same strategy as the case of the corrupted Client.
Formally, the simulator $\adv{S}$ works as follows.

\begin{enumerate}
\item When initialized with security parameter $\lambda$, $\adv{S}$ first runs
the key-generation algorithms of the dual-mode commitment scheme $\com$
three times, and obtains the key pairs $\left( ck_{\fsub{H}}, tk \right)
\leftarrow \com.\KeyGen_\fsub{H}(1^\lambda) $, $ \left( ck_{\fsub{B}},
xk\right) \leftarrow \com.\KeyGen_\fsub{B}(1^\lambda)$ and $
\left(ck^\prime_{\fsub{H}}, tk^\prime \right) \leftarrow
\com.\KeyGen_\fsub{H}(1^\lambda)$.
\item The simulator also chooses a ``dummy password'' $pw'$ at random for the
simulated copy of the honest party $P_1$.


\item $\adv{S}$ initializes the real-world adversary $\adv{A}$, giving it the
pair $(ck, ck') = (ck_{\fsub{B}}, ck^\prime_{\fsub{H}})$ as the common
reference string. Thereafter, $\adv{S}$ interacts with the ideal functionality
$\ideal{pwKE}$ and its subroutine $\adv{A}$. (Essentially, $\adv{S}$ uses
$ck^\prime = ck^\prime_{\fsub{H}}$ for its commitments and $\adv{A}$ uses $ck
= ck_{\fsub{B}}$ for its commitments. See details in Step
\ref{enum:step_corrupted_p2_2} right below.)
\item This interaction is implemented by the simulator $\adv{S}$ just
following the protocol $\Pi$ on behalf of the honest party, except for the
following modifications:
\begin{enumerate}
\item $\adv{S}$ runs an equivalent EPR-pair version of the protocol, where it
creates $k$ EPR pairs $\left( \ket{00} + \ket{11} \right)/\sqrt{2}$, sends one
qubit in each pair to the adversary $\adv{A}$ and keeps the others in the
register $A$.
\item \label{enum:step_corrupted_p2} For each commitment $(c_i^0, c_i^1)$
received from the simulated adversary $\adv{A}$, $\adv{S}$ uses the trapdoor
information $xk$ to extract $\adv{A}$'s committed values $\hat{x}$ and
$\hat{\theta}$. It then uses the commitment key $ck_{\fsub{H}}$ to re-commit
these values.
\item Instead of measuring its qubits in $T_1$ in its basis $\theta|_{T_1}$,
$\adv{S}$ measures them in $\adv{A}$'s basis $\hat{\theta}|_{T_1}$. However,
it still verifies only whether $\hat{x}_i = x_i$ for those $i \in T_1^\prime =
\left\{ i \in T_1 \mid \theta_i = \hat{\theta}_i \right\}$. Because the
positions $i \in T_1$ with $\theta_i \neq \hat{\theta}_i$ are not used in the
protocol at all, this change has no effect.
\item In the commitment phase, $\adv{S}$ commits to $x = 0^{\left( 1-\alpha
\right)k }$, and commits to $\theta$ honestly. It measures the qubits within
set $T_2$ upon receiving $\mathtt{flow-two}_2$, then, in the open phase, to
open as an arbitrary $(x|_{T_2}, \theta|_{T_2})$, the simulator just uses the
trapdoor information $tk'$ to equivocate.
\item $\adv{S}$ measures all the remaining qubits in $\adv{A}$'s basis
$\hat{\theta}|_{\bar{T}}$ after the \texttt{flow-two}.
\item Upon receiving $\hat{\varphi}$ from $\adv{A}$, it attempts to decode
$pw^*$ from $\mathfrak{c}(pw) = \hat{\theta}|_{\bar{T}} \oplus \hat{\varphi}$.
If this succeeds, it sets $pw'$ equal to $pw^*$ and uses $pw'$ in a
$\textsf{TestPwd}$ query to $\ideal{pwKE}$. If this is a ``correct guess'',
$\adv{S}$ replaces the dummy password $pw'$ with the correct password $pw^*$.
\item It sends a random value $\varphi \in \bin^n$ to $\adv{A}$, and proceeds
with the simulation.
\end{enumerate}
\item \label{enum:step_corrupted_p2_2} $\adv{S}$ outputs whatever $\adv{A}$
outputs, except it replaces $\adv{A}$'s commitments with its own ones in Step
\ref{enum:step_corrupted_p2}.
\end{enumerate}

\paragraph{Proof of Indistinguishability.} Similarly to the proof of
indistinguishability of the case when the client is corrupted, we have that if
$pw'$ is a correct guess, then the simulation is perfect. Otherwise, by using
the same argument, we have
\[\sminentro{X_W|ZE} \geq \left(\frac{\gamma}{2} - \varepsilon -
2h(\tau + \varepsilon)
\right)n.\]

Here, we upper bound $\td{\rho_{K_{P_1}WW'ZE|W'\neq
W}}{\markovstate{W}{W'}{ZE|W'\neq W}}$ only by privacy amplification. In
particular, we have
\begin{align*}
\rho_{K_{P_1}WW'ZE|W'\neq W} & \approx_\mu \markovstate{W}{W'}{ZE|W'\neq W} \\
& = \frac{1}{2^\lambda}\mathbbm{1} \otimes \markovstate{W}{W'}{ZE|W'\neq W},
\end{align*}
where the approximation follows from privacy amplification, and the exact
equality comes from the independency of $W$, which, when conditioned on $W'
\neq W$, translates to independency given $W'$. The claim follows with $\mu =
2\varepsilon + 2^{-\frac{1}{2}\left( \left(\frac{\gamma}{2} - \varepsilon -
2h(\tau + \varepsilon) \right)n-\lambda \right)}$.

\end{document}